\newcommand{\algorithmfootnote}[2][\footnotesize]{%
  \let\old@algocf@finish\@algocf@finish
  \def\@algocf@finish{\old@algocf@finish
    \leavevmode\rlap{\begin{minipage}{\linewidth}
    #1#2
    \end{minipage}}%
  }%
}
\newcommand{\fvsfull}{\textsc{Feedback Vertex Set}\xspace}
\newcommand{\fvs}{\textsc{FVS}\xspace}
\newcommand{\OO}{\mathcal{O}}
\newcommand{\cO}{\mathcal{O}}
\newcommand{\tw}{\mathbf{tw}}
\theoremstyle{plain}
\newcommand{\defparprob}[4]{
\begin{tcolorbox}[colback=gray!5!white,colframe=gray!75!black]
  \begin{tabular*}{\textwidth}{@{\extracolsep{\fill}}lr} #1  & {\bf{Parameter:}} #3 \\ \end{tabular*}
  {\bf{Input:}} #2  \\
  {\bf{Question:}} #4
\end{tcolorbox}
}
\title{Improved FPT Algorithms for Deletion to Forest-like Structures}
\author{Kishen N. Gowda}{ IIT Gandhinagar, India,  kishen.gowda@iitgn.ac.in}{}{}{}
\author{Aditya Lonkar}{IIT Madras, India, laditya1235@gmail.com}{}{}{}
\author{Fahad Panolan}{Department of Computer Science and Engineering, IIT Hyderabad, India, fahad@cse.iith.ac.in}{}{}{}
\author{Vraj Patel}{IIT Gandhinagar, India, vraj.patel@iitgn.ac.in}{}{}{}
\author{Saket Saurabh}{Institute of Mathematical Sciences, Chennai, India, saket@imsc.res.in}{}{}{}
\authorrunning{K. Gowda, A. Lonkar, F. Panolan, V. Patel, S. Saurabh}
\keywords{Parameterized Complexity, Independent Feedback Vertex Set, PseudoForest, Almost Forest, Cut and Count, Treewidth}
\begin{document}

\maketitle

\begin{abstract}
The {\sc Feedback Vertex Set} problem is undoubtedly one of the most well-studied problems in Parameterized Complexity. In this problem, given an undirected graph $G$ and a non-negative integer $k$, the objective is to test whether there exists a subset $S\subseteq V(G)$ of size at most $k$ such that $G-S$ is a forest. After a long line of improvement, recently, Li and Nederlof [SODA, 2020] 
designed a randomized  algorithm for the problem running in time $\mathcal{O}^{\star}(2.7^k)$\footnote{Polynomial dependency on the input size is hidden in $\mathcal{O}^{\star}$ notation.}. In the Parameterized Complexity literature, several problems around {\sc Feedback Vertex Set} have been studied. Some of these include {\sc Independent Feedback Vertex Set} (where the set $S$ should be an independent set in $G$), {\sc Almost Forest Deletion} and {\sc  Pseudoforest Deletion}. In {\sc  Pseudoforest Deletion}, each connected component in $G-S$ has at most one cycle in it. However, in {\sc Almost Forest Deletion}, the input is a graph $G$ and non-negative integers $k,\ell \in {\mathbb N}$, and the objective is to test whether there exists a vertex subset $S$ of size at most $k$, such that $G-S$ is $\ell$ edges away from a forest. 
 In this paper, using the methodology of Li and Nederlof [SODA, 2020], we obtain the current fastest algorithms for all these problems. In particular we obtain following randomized algorithms.
\begin{enumerate}
\item {\sc Independent Feedback Vertex Set} can be solved in time  $\mathcal{O}^{\star}(2.7^k)$.
\item {\sc  Pseudo Forest Deletion} can be solved in time $\mathcal{O}^{\star}(2.85^k)$.
\item {\sc Almost Forest Deletion}  can be solved in 
 $\mathcal{O}^{\star}(\min\{2.85^k \cdot 8.54^\ell,2.7^k \cdot 36.61^\ell,3^k \cdot 1.78^\ell\})$.
\end{enumerate}

\end{abstract}

\section{Introduction}\label{sec:intro}

\fvsfull (\fvs) is a classical NP-complete
problem and has been extensively studied in all subfields of algorithms
and complexity. In this problem we are
given an undirected graph~$G$ and a non-negative integer~$k$ as input,
and the goal is to check whether there exists a subset~$S\subseteq V(G)$ ({\em  called feedback vertex set or in short fvs}) 
of size at most~$k$ such that~$G-S$
is a forest. This problem originated in combinatorial circuit design
and found its way into diverse applications such as deadlock
prevention in operating systems, constraint 
satisfaction and Bayesian inference in artificial intelligence. 
We refer to the survey by Festa et al.~\cite{FestaPR1999} 
for further details on the algorithmic study of feedback set problems
in a variety of areas like approximation algorithms, linear programming and polyhedral combinatorics.

\fvs has been extensively studied in Parameterized Algorithms. \fvs has played a pivotal role in the development of the field of Parameterized Complexity. The earliest known FPT algorithms for \fvs go back to the late 80s and the early 90s~\cite{Bodlaender1992,DowneyFellows1992} and used the
seminal Graph Minor Theory of Robertson and Seymour. These algorithms are quite impractical because
of large hidden constants in the run-time expressions.
Raman et al.~\cite{RamanSS02} designed an
algorithm with running time $\cO^\star(2^{\OO(k\log\log k)})$ which
basically branched on short cycles in a bounded search tree approach. For \fvs, the first deterministic $\cO^\star(c^k)$ algorithm was designed only in 2005; independently by Dehne et al.~\cite{DBLP:journals/mst/DehneFLRS07} and Guo et al.~\cite{DBLP:journals/jcss/GuoGHNW06}. It is important to note here that a randomized algorithm for \fvs with running time $\cO^\star(4^k )$ was known in as early as 1999~\cite{DBLP:journals/jair/BeckerBG00}. 
The deterministic algorithms led to the race of improving the base of the exponent for \fvs algorithms and several algorithms~\cite{DBLP:conf/soda/Cao18,DBLP:journals/algorithmica/CaoC015,DBLP:journals/jcss/ChenFLLV08,CyganNPPRW11,iwata19,KociumakaP14,li-soda20}, both deterministic and randomized, have been designed. Until few months ago the best known deterministic algorithm for \fvs ran in time $\cO^\star(3.619^k)$~\cite{KociumakaP14}, while the \emph{Cut \& Count} technique by Cygan et al.~\cite{CyganNPPRW11} gave the best known randomized algorithm running in time $\cO^\star(3^k)$. However, just in last few months both these algorithms have been improved; Iwata and Kobayashi~\cite[IPEC 2019]{iwata19} designed the fastest known deterministic algorithm with running time $\cO^\star(3.460^k)$ and Li and Nederlof~\cite[SODA 2020]{li-soda20}  designed the fastest known randomized algorithm with running time $\cO^\star(2.7^k)$. The success on \fvs has led to the study of many variants of \fvs in literature such as {\sc Connected FVS}~\cite{CyganNPPRW11,DBLP:journals/jco/MisraPRSS12}, {\sc Independent FVS}~\cite{DBLP:conf/iwpec/AgrawalGSS16,DBLP:conf/wg/LiP18,DBLP:journals/tcs/MisraPRS12},  {\sc Simultaneous FVS}~\cite{DBLP:journals/toct/AgrawalLMS18,DBLP:journals/corr/Ye15},  {\sc Subset FVS}~
\cite{DBLP:journals/siamdm/CyganPPW13,DBLP:journals/siamcomp/IwataWY16,DBLP:conf/focs/IwataYY18,DBLP:journals/jct/KawarabayashiK12,DBLP:journals/talg/LokshtanovRS18}, {\sc  Pseudoforest Deletion}~\cite{BodlaenderOO18,PhilipRS18}, {\sc Generalized Pseudoforest Deletion}~\cite{PhilipRS18}, and {\sc Almost Forest Deletion}~\cite{RaiS18,LinFWCFL18}. 


\subsection{Our Problems, Results and Methods}
In this paper we study three problems around \fvs, namely, {\sc Independent FVS}, {\sc Almost Forest Deletion}, and {\sc  Pseudoforest Deletion}. We first define the generalizations of forests that are considered in these problems. We say that a graph $F$ is an $\ell$-forest, if we can delete at most $\ell$ edges from  $F$ to get a forest. That is, $F$ is at most $\ell$ edges away from being a forest. On the other hand, a {\em pseudoforest} is an undirected graph, in which every connected component has at most one cycle. Now, we are ready to define our problems. 

\begin{description}
\setlength{\itemsep}{2pt}
\item[{\sc Independent FVS (IFVS)}:] Given a graph $G$ and a non-negative integer $k$, does there exist a {\sf fvs} $S$ of size at most $k$, that is also an {\em independent set} in $G$?
\item[{\sc Almost Forest Deletion (AFD)}:] Given a graph $G$ and two non-negative integers $k$ and $\ell$, does there exist a vertex subset $S$ of size at most $k$ such that $G-S$ is an $\ell$-forest?
\item[{\sc Pseudoforest Deletion (PDS)}:] Given a graph $G$ and a non-negative integer $k$, does there exist a vertex subset $S$ of size at most $k$ such that $G-S$ is a pseudoforest?
\end{description}

Given an instance of \fvs, by subdividing every edge we get an instance of {\sc Independent FVS}, which is a reduction from \fvs to {\sc Independent FVS} leaving k unchanged showing that it generalizes \fvs.  On the other hand setting $\ell=0$ in {\sc Almost Forest Deletion} results in \fvs. 
The best known algorithms for {\sc Independent FVS}, {\sc Almost Forest Deletion}, and 
{\sc  Pseudoforest Deletion} are $\mathcal{O}^{\star}(3.619^k)$~\cite{DBLP:conf/wg/LiP18}, $\mathcal{O}^{\star}(5^k4^\ell)$~\cite{LinFWCFL18}, and 
$\mathcal{O}^{\star}(3^k)$~\cite{BodlaenderOO18}, respectively. Our main objective is to improve over these running times for the corresponding problems. In a nutshell our paper is as follows.

\begin{tcolorbox}[colback=gray!5!white,colframe=gray!75!black]
Motivated by the methodology developed by Li and Nederlof~\cite{li-soda20} for {\sc FVS}, we relook at several problems around {\sc FVS}, such as {\sc Independent FVS}, {\sc Almost Forest Deletion}, and 
{\sc  Pseudoforest Deletion}, and design the current fastest randomized algorithm for these problems.  Our results show that the method of  Li and Nederlof~\cite{li-soda20} is extremely broad and should be applicable to more problems. 
\end{tcolorbox}

To achieve improvements and tackle {\sc Independent FVS} and {\sc Almost Forest Deletion} at once, we propose a more generalized version of the {\sc Almost Forest Deletion} problem.

\defparprob{\sc Restricted Independent Almost Forest Deletion (RIAFD)}{A graph $G$, a set $R\subseteq V(G)$, and integers $k$ and $\ell$}{$k$ and $\ell$}{Does there exist a set $S\subseteq V(G)$ of size at most $k$ that does not contain any element from $R$, that is also an independent set in $G$, and  $G - S$ is an $\ell$-forest?}

\noindent 
Setting $\ell = 0, R = \varnothing$ we get the {\sc Independent FVS} problem. 
A simple polynomial time reduction, where we subdivide every edge and add all the subdivision vertices to $R$, yields an instance of {\sc RIAFD}, given an instance of {\sc Almost Forest Deletion}.
The reduction leaves $\ell$ and $k$ unchanged. 


To describe our results, we first summarize the method of Li and Nederlof~\cite{li-soda20}(for FVS) which we adopt accordingly. 
The main observation guiding the method is the fact that after doing some simple preprocessing on the graph, we can ensure that 
a {\em large fraction of edges are incident on  every solution} to the problem. This leads to two-step   algorithms, one for the dense case and the other for the sparse case. In particular, if we are aiming for an algorithm with running time $\cO^\star(\alpha^k)$, then we do as follows. 
\begin{description}
\item[{Dense Case:}] In this case, the number of edges incident to any FVS of size $k$ is superlinear(in $k$), and we select a vertex into our solution with probability at least $\frac{1}{\alpha}$. 
\item[Sparse Case:] Once the dense case is done, we know that we have selected vertices, say $k_1$,  with probability $( \frac{1}{\alpha})^{k_1}$. Now, we know that the number of edges incident to an FVS of the graph is 
$\cO(k)$ and the existence of solution $S$ of size at most $k$, implies that the input graph has treewidth at most $k+1$. Now, using this fact and the fact that deleting the solution leaves a graph of constant treewidth, we can actually show that graph has treewidth $(1-\Omega(1))k=\gamma k$. This implies that if we have an algorithm on graphs of treewidth ($\tw$) with running time $\beta^\tw$, such that $\beta^\gamma\leq \alpha$, then we get the desired algorithm with running time   $\cO^\star(\alpha^k)$.
\end{description}

So a natural approach for our problems which are parameterized by solution size is to devise an algorithm using another algorithm parameterized by treewidth with an appropriate base in the exponent, along with  probabilistic reductions with a good success probability. However, to get the best out of methods of Li and Nederlof~\cite{li-soda20}, it is important to have an algorithm parameterized by treewidth that is based on \emph{Cut \& Count} method~\cite{cutandcount}. However, for all the algorithms for problems we consider, only non  \emph{Cut \& Count} algorithms were known. Thus, our first result is as follows. 

\begin{theorem}\label{thm:cut&count}
    There exists an $\mathcal{O}^\star\left(3^{\tw}\right)$ time Monte-Carlo algorithm that given a tree decomposition of the input graph of width $\tw$ solves the following problems:
    \begin{enumerate}
        \item {\sc Restricted-Independent Almost Forest Deletion} in exponential space.
        \item {\sc Pseudoforest Deletion} in exponential space.
    \end{enumerate}
\end{theorem}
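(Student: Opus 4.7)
The plan is to adapt the Cut \& Count technique of Cygan et al.\ (originally used for the $\mathcal{O}^\star(3^{\tw})$ algorithm for \fvs) to both problems. The framework assigns random integer weights to vertices (and, when needed, to edges) and invokes the Isolation Lemma so that, with constant probability, the decision of whether a valid solution exists reduces to evaluating the parity of a weighted count; disconnected candidate components cancel modulo $2$ through the standard involution that flips the ``cut side'' of a designated component. The dynamic program on a nice tree decomposition stores, for each vertex of the current bag, one of three labels---``in $S$'', ``left of cut'', or ``right of cut''---giving the $3^{\tw}$ states per bag, with additional polynomial-size counters for $|S|$, the target isolation weight, and (in the first problem) $|F|$ carried alongside the state and absorbed in $\mathcal{O}^\star$.

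For \textsc{RIAFD} I would view a solution as a pair $(S,F)$ with $S \subseteq V(G)\setminus R$ independent in $G$, $|S|\le k$, $F \subseteq E(G-S)$, $|F|\le \ell$, and $(G-S)-F$ a forest. The Isolation Lemma is applied jointly to $(S,F)$ with random weights on both vertices and edges. At every introduce-edge node I branch on whether the new edge belongs to $F$ (its endpoints' cut labels being unconstrained by it) or not (both endpoints must then carry the same cut label, enforcing consistency on kept edges). The restriction $S\cap R=\emptyset$ is enforced by forbidding label ``in $S$'' on any $v\in R$ at its introduce node, and the independence of $S$ by rejecting configurations in which a newly introduced edge has both endpoints labelled ``in $S$''.

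For \textsc{Pseudoforest Deletion} I would use the structural characterization: $G-S$ is a pseudoforest if and only if there is a set $F\subseteq E(G-S)$ such that $(G-S)-F$ is a forest and each tree of $(G-S)-F$ contains at most one edge of $F$. The plan is to guess the pair $(S,F)$ and run the same $3^{\tw}$ DP, but the absolute size bound on $|F|$ is replaced by a per-tree bound. This per-tree bound is enforced through the cut accounting: an $F$-edge is permitted only when its two endpoints already lie on the same side of the consistent cut (forcing them into the same tree of $(G-S)-F$), and an involution that swaps two $F$-edges within the same tree pairs up over-chosen configurations so that they cancel modulo $2$, leaving only genuine pseudoforest deletions with an odd contribution.

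The principal obstacle is the correctness of this per-tree encoding for \textsc{Pseudoforest Deletion}: one must exhibit a fixed-point-free involution on configurations with two or more $F$-edges in some tree of $(G-S)-F$, so that after the mod-$2$ sum only valid pseudoforest deletions survive. This is the analogue of the disconnected-component cancellation in the \fvs Cut \& Count, but refined to trees within the forest, and it is where the nontrivial combinatorial work lies. Once this is in place, the DP transitions at introduce-vertex, introduce-edge, forget, and join nodes follow the standard Cut \& Count template in constant time per $3^{\tw}$-sized table entry, yielding the Monte-Carlo $\mathcal{O}^\star(3^{\tw})$ algorithm using exponential space as claimed.
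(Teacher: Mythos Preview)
Your \textsc{RIAFD} plan is correct but takes a different route from the paper. You guess the witness edge-set $F$ with $|F|\le\ell$ explicitly in the DP (branching at each introduce-edge node on ``$e\in F$'' vs.\ ``$e\notin F$'') and then certify that $(G-S)-F$ is a forest by the usual \fvs-style cancellation. The paper never tracks $F$: it proves (Lemma~\ref{lem:afcc}) that a graph on $A$ vertices with $B$ edges is an $\ell$-forest iff it has at most $A-B+\ell$ connected components, and then counts consistent cuts of $G[X]$ modulo $2^{A-B+\ell+1}$ rather than modulo $2$. Your version carries one extra polynomial counter for $|F|$; the paper's carries $O(n)$-bit modular arithmetic; both stay within $\mathcal{O}^\star(3^{\tw})$. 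One small omission: you say ``parity'' but do not say how acyclicity of $(G-S)-F$ is detected---you need either vertex markers with $|M|=A-B$ or the mod-$2^{A-B+1}$ trick; just naming the cut-side-flip involution is not enough, since that cancels disconnectedness, not cycles.

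For \textsc{Pseudoforest Deletion} there is a genuine gap. Your proposed involution---``swap two $F$-edges within the same tree''---is not an involution on the objects you count: $F$ is an unordered set, so swapping two of its elements leaves the configuration $(S,F,(X_L,X_R))$ unchanged and cancels nothing. Nor can a cut-flip help: once both endpoints of every $F$-edge are forced to the same cut side, the cut is in fact a consistent cut of all of $G-S$, so the entire tree sits on one side and there is nothing left to toggle. Concretely, a single tree $T$ with two internal $F$-edges contributes exactly as many cuts as the same $T$ with one internal $F$-edge, and you have no mechanism to kill the former. The paper resolves this with a second kind of marker: in addition to edge markers $M_2$ (your $F$), it introduces \emph{vertex} markers $M_1\subseteq X$, pins both $M_1$ and the designated endpoints of $M_2$-edges to the left side, and crucially fixes the cardinalities $|M_2|=D$ and $|M_1|=|X|-B-D$ where $B+D=|E(G[X])|$. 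This arithmetic tie $|M_1|+|M_2|=|X|-B$ forces exactly one marker per tree of $G[X]-M_2$, and the standard flip-an-unmarked-component involution then does the cancellation. Without the $M_1$ markers and this coupled count, an edge-only scheme cannot separate a pseudoforest from, say, a tree plus two chords.
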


Note that a yes-instance of {\sc RIAFD} has treewidth $k+\ell+1$. Thus as our first result, we design a randomized algorithm based on Theorem~\ref{thm:cut&count} and iterative compression with running time $\mathcal{O}^{\star}(3^k\cdot 3^\ell)$ for {\sc RIAFD}. This yields $\mathcal{O}^{\star}(3^k)$ and $\mathcal{O}^{\star}(3^k \cdot 3^\ell)$ running time algorithms for {\sc Independent FVS} and {\sc Almost Forest Deletion}, respectively, which take polynomial space (though, these do not appear in literature).
Next, we devise probabilistic reduction rules to implement the first step in the method of Li and Nederlof~\cite{li-soda20}. We analyze these rules by modifying the analysis of their lemmas to get an 
$\OO^\star(2.85^k \cdot 8.54^\ell)$ time algorithm that takes polynomial space, and an $\OO^\star(2.7^k\cdot 36.61^\ell)$ time  algorithm that takes exponential space for solving {\sc RIAFD}.  All these algorithms while progressively improving the dependence on $k$ slightly, significantly worsen the dependence on $\ell$. Therefore, to obtain an algorithm with an improved dependence on $\ell$ we describe a procedure to construct a tree decomposition of width $ k + \frac{3}{5.769}\ell +\cO(\log(\ell))$ given a solution of size $k$ to an instance of {\sc RIAFD}. This procedure when combined with an iterative compression routine yields an $\OO^\star(3^k \cdot 1.78^\ell)$ algorithm for {\sc RIAFD}. This brings us to the following result. 

\begin{theorem}\label{thm:afd}
    There exist  Monte-Carlo algorithms that solve {\sc RIAFD} problem in 
    \begin{enumerate}
        \item $\mathcal{O}^{\star}(3^k \cdot 3^\ell)$ time and polynomial space.
        \item $\mathcal{O}^{\star}(2.85^k \cdot 8.54^\ell)$ time and polynomial space.
        \item $\mathcal{O}^{\star}(2.7^k \cdot 36.61^\ell)$ time and exponential space. 
        \item $\mathcal{O}^{\star}(3^k \cdot 1.78^\ell)$ time and exponential space. 
    \end{enumerate}
\end{theorem}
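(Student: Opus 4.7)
The plan is to prove the four bounds in sequence, each building on Theorem~\ref{thm:cut&count}. A structural observation underlies everything: any yes-instance of {\sc RIAFD} with solution $S$ of size $\leq k$ has treewidth at most $k+\ell+1$, since the $\ell$-forest $G-S$ admits a tree decomposition of width $\ell+1$ (take a width-$1$ decomposition of the underlying spanning forest and dump the endpoints of the $\ell$ extra edges into one bag), and reinserting $S$ into every bag adds $k$ to the width. For item~(1), I combine this with iterative compression: building the solution vertex by vertex, at each step I maintain an approximate solution $S'$ of size $\leq k+1$, which, together with the $\ell$-forest certificate, yields a tree decomposition of width $k+\ell+\mathcal{O}(1)$ in polynomial time. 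Applying the polynomial-space version of Theorem~\ref{thm:cut&count} (after appropriately guessing $S\cap S'$) gives $\mathcal{O}^\star(3^{k+\ell})$ time per compression step, and polynomially many steps preserve the overall bound $\mathcal{O}^\star(3^k \cdot 3^\ell)$.

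For items~(2) and~(3), I would adapt the dense/sparse dichotomy of Li--Nederlof~\cite{li-soda20}. After a few preprocessing rules---degree-one and degree-two reductions together with a parallel-edge rule that transfers multi-edges into the $\ell$-budget---every solution $S$ becomes incident to a superlinear number of edges, so a uniformly random endpoint of a uniformly random edge lies in $S$ with probability at least $1/\alpha$ for the target $\alpha \in \{2.85,\, 2.7\}$. Iterating this ``dense branching'' $k_1$ times selects $k_1$ solution vertices with probability $(1/\alpha)^{k_1}$. Once the instance becomes ``sparse''---the residual solution of size $k'$ is incident to only $\mathcal{O}(k'+\ell)$ edges---a suitable generalization of the Li--Nederlof treewidth-drop lemma (accounting for the $\ell$ slack) forces the treewidth down to $\gamma_1 k' + \gamma_2 \ell + \mathcal{O}(1)$ with $\gamma_1<1$. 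Theorem~\ref{thm:cut&count} then finishes in $\mathcal{O}^\star(3^{\gamma_1 k' + \gamma_2 \ell})$; balancing $1/\alpha$ against $3^{\gamma_1}$ yields $2.85^k \cdot 8.54^\ell$ in the polynomial-space setting and $2.7^k \cdot 36.61^\ell$ in the exponential-space setting (the latter using the Cut \& Count variant of Theorem~\ref{thm:cut&count}).

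Item~(4) is more delicate: the dense branching of items~(2)--(3) only improves the $k$-dependence, so the $\ell$-dependence inflates to $8.54^\ell$ or $36.61^\ell$ there. For item~(4) I would instead construct, given an (unknown) solution $S$ of size $k$, a tree decomposition of $G$ of width $k + \frac{3}{5.769}\ell + \mathcal{O}(\log \ell)$---strictly better than the naive $k+\ell+1$---by decomposing the $\ell$-forest $G-S$ via a balanced-separator argument on its feedback subgraph (the vertices incident to the $\ell$ extra edges) and then adding $S$ to every bag. Coupling this with iterative compression so that $S$ itself is guessed incrementally, and plugging the resulting decomposition into Theorem~\ref{thm:cut&count}, gives running time $3^{k + (3/5.769)\ell + \mathcal{O}(\log \ell)} = \mathcal{O}^\star(3^k \cdot 1.78^\ell)$, since $3^{3/5.769} \approx 1.78$. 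I expect the chief technical obstacle of the whole theorem to be establishing this improved treewidth bound for $\ell$-forests: beating the trivial $\ell+1$ by a constant factor while paying only $\mathcal{O}(\log \ell)$ additively requires a careful balanced decomposition of the feedback subgraph and a precise tuning of bag sizes against the DP base $3$.
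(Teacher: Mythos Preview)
Your outline captures the high-level architecture but has real gaps in three of the four items.

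For item~(1), ``guessing $S\cap S'$'' does not yield polynomial space: the Cut~\&~Count DP of Theorem~\ref{thm:cut&count} itself consumes $3^{\tw}$ space regardless of how the compression solution is split. The paper's mechanism is different and specific: the tree decomposition it builds has a set $S$ of size $k+\ell$ contained in \emph{every} bag, so one iterates externally over all $3^{|S|}$ colourings $f\colon S\to\{\mathbf{F},\mathbf{L},\mathbf{R}\}$ and, for each fixed $f$, runs a DP that only needs $3^{\tw-|S|}=\mathcal{O}(1)$ space per bag (Claim~\ref{claim:afdpoly} and Theorem~\ref{thm:afdcc}). Your proposal does not supply this idea. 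Relatedly, the preprocessing you list for items~(2)--(3) is off: RIAFD admits neither a degree-two contraction (the independence constraint and the restricted set $R$ obstruct it) nor a ``parallel-edge to $\ell$-budget'' rule; the paper only removes degree-$\le 1$ vertices and then samples $v$ with weight $\deg(v)-2$ (not a uniform edge endpoint), which is exactly what makes the $\frac{1}{3-\epsilon}$ success probability go through under merely min-degree-$2$.

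The most substantial omission is in item~(3). The jump from $2.85^k$ to $2.7^k$ is \emph{not} obtained by trading polynomial for exponential space in the same DP; it comes from a genuinely different sparse-case routine. The paper builds a \emph{three-way} separator (Lemma~\ref{lem:afd_3waysep}), assembles a tripartite auxiliary graph $H$ on $3^{f_2 k}$ vertices whose edge weights are partial Cut~\&~Count counts, and computes the total count as a weighted triangle sum via fast matrix multiplication. The exponent $\omega$ enters the running time here, and the exponential space is for storing $H$. None of this is visible in your plan, and without it the balancing cannot reach $2.7^k$.

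For item~(4) your target width $k+\tfrac{3}{5.769}\ell+\mathcal{O}(\log\ell)$ is correct, but the route you sketch (a balanced-separator argument on the feedback vertices) is not how the constant arises and would not, on its own, produce $5.769$. The paper instead kernelizes the $\ell$-forest $G-S$ by exhaustively deleting degree-$\le 1$ vertices and contracting degree-$2$ vertices, which is treewidth-preserving and leaves a graph with at most $2\ell$ vertices and $3\ell$ edges; it then invokes the Kneis--M\"olle--Richter--Rossmanith bound $\tw(G)\le |E|/5.769+\mathcal{O}(\log|V|)$ to get width $\tfrac{3}{5.769}\ell+\mathcal{O}(\log\ell)$ for $G-S$, and finally adds $S$ to every bag. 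The $5.769$ is thus imported wholesale from~\cite{kneis}, not derived by a bespoke separator construction.
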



As a corollary to Theorem~\ref{thm:afd}, we get the following result about {\sc Independent FVS}.

\begin{theorem}\label{thm:ifvs}
    There exist  Monte-Carlo algorithms that solve {\sc Independent FVS}  in:
    \begin{enumerate}
        \item $\mathcal{O}^\star(3^{\tw})$ time, given a tree decomposition of width $\tw$.
        \item $\mathcal{O}^{\star}(2.85^k)$ time and polynomial space
        \item $\mathcal{O}^{\star}(2.7^k)$ time and exponential space
    \end{enumerate}
\end{theorem}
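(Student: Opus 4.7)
The plan is to derive each of the three parts of Theorem~\ref{thm:ifvs} as direct specializations of algorithms already established for the more general \textsc{RIAFD} problem. The key observation, already noted in the paper, is that an instance $(G, k)$ of \textsc{Independent FVS} is equivalent to the \textsc{RIAFD} instance $(G, \varnothing, k, 0)$: we ask for an independent set $S \subseteq V(G) \setminus \varnothing = V(G)$ with $|S| \le k$ such that $G - S$ is a $0$-forest, i.e.\ a forest. This reduction is polynomial time, preserves the parameter $k$, and introduces no auxiliary structure beyond recording $\ell = 0$ and $R = \varnothing$, so any algorithm for \textsc{RIAFD} may be invoked verbatim.

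With that reduction in hand, part~(1) follows by running the Cut~\&~Count dynamic program of Theorem~\ref{thm:cut&count} on the transformed instance: since the $\ell$-factor in that algorithm's running time is absorbed into $\cO^\star$ and the tree decomposition of $G$ has width $\tw$, we obtain the claimed $\cO^\star(3^{\tw})$ bound. Parts~(2) and~(3) follow analogously by plugging $\ell = 0$ into parts~(2) and~(3) of Theorem~\ref{thm:afd}: the running time $\cO^\star(2.85^k \cdot 8.54^\ell)$ collapses to $\cO^\star(2.85^k)$ with polynomial space, and $\cO^\star(2.7^k \cdot 36.61^\ell)$ collapses to $\cO^\star(2.7^k)$ with exponential space. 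Parts~(1) and~(4) of Theorem~\ref{thm:afd} also specialize at $\ell = 0$, but they yield a worse $k$-exponent and are therefore subsumed by the bounds above.

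There is no real obstacle here: the only thing to check is that the reduction does not disturb either the running time or the space usage, which is immediate since it merely appends a constant-size pair $(\varnothing, 0)$ to the input. Consequently the three items of Theorem~\ref{thm:ifvs} are obtained by direct substitution, and the result is properly a corollary of Theorems~\ref{thm:cut&count} and~\ref{thm:afd}.
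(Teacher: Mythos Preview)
Your proposal is correct and matches the paper's own treatment: the paper explicitly presents Theorem~\ref{thm:ifvs} as a corollary to Theorem~\ref{thm:afd}, obtained by setting $\ell = 0$ and $R = \varnothing$, and you have spelled out exactly this reduction. One tiny imprecision: in part~(1) you say the $\ell$-factor is ``absorbed into $\cO^\star$,'' but Theorem~\ref{thm:cut&count} already has running time $\cO^\star(3^{\tw})$ with no $\ell$-dependence, so there is nothing to absorb.
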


Although we have a deterministic $\mathcal{O}^{\star}({3^k})$ algorithm for \textsc{Pseudoforest deletion} given by Bodlaender et al.~\cite{BodlaenderOO18} which runs in exponential space, to make use of the techniques from \cite{li-soda20} we develop our \emph{Cut \& Count} algorithm which has the same asymptotic running time. However, even with our  \emph{Cut \& Count} algorithm, we cannot make full use of the methods of Li and Nederlof~\cite{li-soda20} and only get the following improvement.


\begin{theorem}\label{thm:pseudo}
There exists a Monte-Carlo algorithm that solves {\sc Pseudoforest Deletion} in $\mathcal{O}^{\star}(2.85^k)$ time and polynomial space. 
\end{theorem}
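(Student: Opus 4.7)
The plan is to instantiate the Li--Nederlof template for \textsc{Pseudoforest Deletion} using the $\mathcal{O}^\star(3^{\tw})$ Cut \& Count algorithm supplied by Theorem~\ref{thm:cut&count}. Since the treewidth base is $3$, to reach $\mathcal{O}^\star(2.85^k)$ the sparse case must guarantee that the graph admits a tree decomposition of width at most $\gamma k + \mathcal{O}(\log k)$ for $\gamma = \log_3 2.85 \approx 0.954$. The overall algorithm wraps everything in iterative compression: assume inductively that a pseudoforest deletion set $W$ of size $k+1$ is available, enumerate $S \cap W$, and reduce to looking for the rest of $S$ in $V(G) \setminus W$.

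For the dense phase, the plan is to install polynomial-time reduction rules (pendant removal, suppression of degree-$2$ vertices that do not lie on forced cycles, capping of edge multiplicities, and the standard ``push into the solution'' rules) chosen so that every pseudoforest deletion set of size at most $k$ in the reduced instance is incident to $\Omega(k)$ edges with a generous enough constant. In this regime, sampling a vertex proportionally to its degree lands in some optimal solution with probability at least $1/2.85$, so placing it into $S$ and recursing contributes precisely the sampling factor $2.85^{k_1}$. Once no dense rule applies, the total number of edges touching the remaining part of $S$ is $\mathcal{O}(k)$; combining this with the fact that $G - S$ is a pseudoforest (treewidth at most $2$), a tree decomposition of $G$ of width $\gamma k + \mathcal{O}(\log k)$ can be assembled by inserting the few ``interface'' vertices of $S$ into a decomposition of $G - S$. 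Feeding this decomposition into Theorem~\ref{thm:cut&count} closes the case in $\mathcal{O}^\star(3^{\gamma k}) = \mathcal{O}^\star(2.85^k)$ time, and polynomial space is preserved because the Cut \& Count routine is invoked only once, on an explicitly constructed decomposition.

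The main obstacle will be the sparse-case treewidth bound. For \fvs\ the analogous analysis attains $\gamma \approx 0.904$ and hence $2.7^k$, but for \textsc{Pseudoforest Deletion} the complement $G - S$ has treewidth $2$ rather than $1$ and cycles are \emph{permitted} inside each component, so the pendant- and chain-contraction rules used for \fvs\ do not carry over cleanly---a degree-$2$ vertex lying on a cycle of $G - S$ might legitimately have to remain. This weakens the density one can enforce and in turn pushes $\gamma$ up to roughly $0.954$, which matches the paper's remark that the Li--Nederlof method cannot be exploited in full here and that the improvement therefore halts at $\mathcal{O}^\star(2.85^k)$.
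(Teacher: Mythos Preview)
There are two genuine gaps.

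First, the sparse-case treewidth bound is not established. Saying that ``a tree decomposition of $G$ of width $\gamma k + \mathcal{O}(\log k)$ can be assembled by inserting the few `interface' vertices of $S$ into a decomposition of $G - S$'' does not work: all $k$ vertices of $S$ must appear in the decomposition, and as soon as a vertex of $S$ has neighbours in several components of $G-S$ it must sit in every bag along the connecting path, so nothing forces the maximum bag below $k+\mathcal{O}(1)$. The paper attains width $(1-2^{-\overline{d}}+o(1))k$ via the Li--Nederlof \emph{small separator} (Lemma~\ref{pds_sep}): take a $\beta$-separator $S_\epsilon$ of the pseudoforest $G-P$ using Lemma~\ref{lem:generalized_beta_separator}, randomly $2$-colour the resulting components to obtain a separation $(A,B,S)$, and show that each of $G[A\cup S]$, $G[B\cup S]$ has an {\sf fvs} of size at most $(1-2^{-\overline{d}}+o(1))k$. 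This random-colouring step, not any interface-vertex insertion, is what pushes the width strictly below $k$.

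Second, both your iterative compression and your polynomial-space claim fail. Enumerating $S\cap W$ over a $(k{+}1)$-set $W$ contributes a factor $\sum_i\binom{k+1}{i}c^{-i}$ against the residual parameter, so the total becomes $(c+1)^{k}$ rather than $c^{k}$; the paper's compression (Lemma~\ref{PFIC2}) never enumerates intersections but uses the current {\sf pds} solely to build the separator of Lemma~\ref{pds_sep} and then calls \texttt{BPDS} once. As for space, Theorem~\ref{thm:cut&count} explicitly states exponential space, so invoking Cut~\&~Count ``only once'' on a width-$\gamma k$ decomposition still costs $3^{\gamma k}$ space; polynomial space in the paper comes from a separate device (Theorem~\ref{pds-space}, Lemma~\ref{BPDS}): the decomposition carries a common set in every bag of size within $\mathcal{O}(1)$ of the width, and one iterates over all $3^{|S|}$ colourings of that set while the residual DP uses $\mathcal{O}^\star(1)$ space. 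Finally, your diagnosis of why the bound stops at $2.85^k$ is off: the degree-$2$ contraction rule \emph{is} applied (Definition~\ref{red:pseudo1}); the actual obstruction is that the Cut~\&~Count formulation for pseudoforests, with its edge markers $M_2$, does not combine cleanly with the three-way-separator/matrix-multiplication step that yields $2.7^k$ for {\sc RIAFD}.
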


\section{Preliminaries}\label{sec:prelims}

For a set $A$, $\binom{A}{\cdot,\cdot,\cdot}$ denotes the set of all partitions of $A$ into three subsets.

Let $G(V,E)$ or $G=(V,E)$ be an undirected graph, where $V$ is the set of vertices and $E$ is the set of edges. We also denote $V(G)$ to be the vertex set and $E(G)$ to be the edge set of graph $G$. Also, $|V| = n$ and $|E|= m$.
For a vertex subset $S\subseteq V(G)$, $G[S]$ denotes the subgraph induced on the vertex set $S$.
For $S,T \subseteq V$, $E[S,T]$ denotes the edges intersecting both $S$ and $T$. 
For a vertex subset $V'$, the graph $G - V'$ denotes the graph $G[V\setminus V']$. For an edge subset $E'$, the graph $G - E'$ denotes the graph $G'=(V,E\setminus E')$.  For a vertex $v\in V$, $deg(v)$ denotes the degree of the vertex, i.e., the number of edges incident on $v$.  For a vertex subset $S\subseteq V(G)$, $deg(S)=\sum_{v\in S}deg(v)$. 
Given an edge $e=(u,v)$, the subdivision of the edge $e$ is the addition of a new vertex between $u$ and $v$, i.e. the edge $e$ is replaced by two edges $(u,w)$ and $(w,v)$, where $w$ is the newly added vertex. Here, $w$ is called a ``subdivision vertex''. Now, we make note of the following lemma on the number of connected components of a forest.

\begin{lemma}[\cite{cutandcount}]  \label{lem:forestcc}
A graph with $n$ vertices and $m$ edges is a forest iff it has at most $n-m$ connected components.
\end{lemma}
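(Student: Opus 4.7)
The plan is to prove both directions by relating the number of connected components $c$ of an arbitrary graph to the quantity $n - m$, using the classical fact that a connected graph on $n_i$ vertices has at least $n_i - 1$ edges (with equality iff it is a tree).

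For the forward direction, assume $G$ is a forest with components $C_1, \dots, C_c$, where $C_i$ has $n_i$ vertices and $m_i$ edges. Each $C_i$ is a tree, so $m_i = n_i - 1$. Summing over all components gives $m = \sum_{i=1}^{c} (n_i - 1) = n - c$, hence $c = n - m$, which in particular satisfies $c \leq n - m$.

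For the reverse direction, I first establish the universal lower bound $c \geq n - m$ for every graph. Again let $C_i$ have $n_i$ vertices and $m_i$ edges; since $C_i$ is connected, it contains a spanning tree and therefore $m_i \geq n_i - 1$. Summing yields $m \geq n - c$, i.e.\ $c \geq n - m$. Combining this with the hypothesis $c \leq n - m$ forces $c = n - m$, and therefore $\sum_i m_i = \sum_i (n_i - 1)$. Together with the componentwise inequality $m_i \geq n_i - 1$, this equality of sums forces $m_i = n_i - 1$ for every $i$, so each $C_i$ is a tree and $G$ is a forest.

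The argument is short and involves no real obstacle; the only thing to be careful about is handling the edge case $m \geq n$ (where $n - m$ may be zero or negative), but there the universal bound $c \geq n - m$ already implies the hypothesis $c \leq n - m$ can hold only when $c = n - m$, so the same deduction applies uniformly.
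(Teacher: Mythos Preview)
Your proof is correct and is the standard argument. Note, however, that the paper does not actually supply its own proof of this lemma: it is stated with a citation to~\cite{cutandcount} and used as a black box. So there is no ``paper's proof'' to compare against here. That said, the paper does prove a generalization (Lemma~\ref{lem:afcc} for $\ell$-forests), and the reverse direction there uses exactly the same spanning-tree counting idea you employ: take a spanning tree in each component, observe that the resulting forest has $n - c$ edges, and compare with $m$. Your argument specializes that to $\ell = 0$.
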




\begin{definition}
A \emph{tree decomposition} of a graph $G=(V,E)$ is a pair $\mathbb{T}=(\{B_x\;|\;x\in I\},T=(I,F))$ where $T$ a tree and $\{B_x\;|\;x\in I\}$ is a collection of subsets (called bags) of $V$, such that
\begin{enumerate}
    \item $\bigcup_{x\in I}B_x=V$.
    \item For all $(u,v)\in E$ there is an $ x\in I$ with $\{u,v\}\subseteq B_x$.
    \item For all $v\in V$, the set of nodes $\{x\in I\;|\:v\subseteq B_x\}$ forms a connected subtree in $T=(V,I)$.
\end{enumerate}
The \emph{width} of the tree decomposition 
$\mathbb{T}$ is $\text{max}_{x\in I}\:|B_x|-1$. The \emph{treewidth} of a graph $G$, denoted by $\tw(G)$, is the minimum width over all tree decompositions of $G$.
\end{definition}

 We sometimes abuse notation and use $\tw(\mathbb{T})$ to denote the width of the tree decomposition $\mathbb{T}$. For the definition above, if there are parallel edges or self loops we can just ignore them, i.e., a tree decomposition of a graph with parallel edges and self loops is just the tree decomposition of the underlying simple graph (obtained by keeping only one set of parallel edges and removing all self loops).

There is also the notion of a \emph{nice} tree decomposition, which is used in this paper. In literature, there are a few variants of this notion that differ in details. We use the one with \emph{introduce edge} nodes and root bag and leaf bags of size zero. A nice tree decomposition is a tree decomposition $(\{B_x\;|\;x\in I\},T=(I,F))$ where $T$ is rooted tree and the nodes are one of the following five types. With each bag in the tree decomposition, we also associate a subgraph of $G$; the subgraph associated with bag $x$ is denoted $G_x = (V_x, E_x)$. We give each type together with how the corresponding subgraph is formed.

\begin{itemize}
    \item \textbf{Leaf} nodes $x$. $x$ is a leaf of $T$; $|B_x|=0$ and $G_x=(\varnothing,\varnothing)$ is the empty graph.
    \item \textbf{Introduce vertex} nodes $x$. $x$ has one child, say $y$. There is a vertex $v$ with $B_x=B_y\cup \{v\}$, $v\notin B_y$ and $G_x=(V_y\cup \{v\},E_y)$, i.e, $G_x$ is obtained by adding an isolated vertex $v$ to $G_y$.
    \item \textbf{Introduce edge} nodes $x$. $x$ has one child, say $y$. There are two vertices $v,w\in B_x$, $B_x=B_y$ and $G_x=(V_y,E_y\cup \{(v,w)\})$, i.e., $G_x$ is obtained from $G_y$ by adding an edge between these two vertices in $B_x$. If we have parallel edges, we have one introduce edge node for each parallel edge. A self loop with endpoint $v$ is handled in the same way, i.e., there is an introduce edge node with $v\in B_x$ and $G_x$ is obtained from $G_y$ by adding the self loop on $v$.
    \item \textbf{Forget vertex} nodes $x$. $x$ has one child, say $y$. There is a vertex $v$ such that $B_x=B_y \setminus \{v\}$ and $G_x$ and $G_y$ are the same graph.
    \item \textbf{Join} nodes $x$. $x$ has two children, say $y$ and $z$. $B_x=B_y=B_z$, $V_y\cap V_z=B_x$ and $E_y\cap E_z=\varnothing$. $G_x=(V_y\cup V_z,E_y\cup E_z)$, i.e., $G_x$ is the union of $G_y$ and $G_z$, where the vertex set $B_x$ is the intersection of the vertex sets of these two graphs. 
\end{itemize}

For the \emph{Cut \& Count} algorithms, the following lemma is essential. For a family of sets ${\mathcal F}$ over a universe $U$, we say that a weight function $w \colon U \mapsto {\mathbb N}$ isolates ${\mathcal F}$, if there is a unique set $S$ in ${\mathcal F}$ with minimum weight $w(S)$. Here, $w(S)=\sum_{x\in S} w(x)$.

\begin{lemma}\label{lem:isolation}
     {\rm (Isolation Lemma, \cite{Mulmuley1987})} Let $\mathcal{F} \subseteq 2^U$ be a non-empty set family over a universe $U$. For each $u\in U$, choose a weight $\omega \in \{1, 2, \hdots W\}$ uniformly and independently at random. Then $\Pr[\omega \text{ isolates } \mathcal{F}] \ge 1 - |U|/W$.
\end{lemma}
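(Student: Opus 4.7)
The plan is to establish the bound by identifying, for each element of the universe, an ``ambiguity'' event that must occur whenever isolation fails, and then to bound the probability of each such event by a single-coordinate deferred-decisions argument. First I would introduce, for each $u \in U$, the two quantities
\[
a_u = \min\{\omega(S) : S \in \mathcal{F},\ u \in S\}, \qquad b_u = \min\{\omega(S) : S \in \mathcal{F},\ u \notin S\},
\]
with the convention that an empty minimum is $+\infty$. Call $u$ \emph{ambiguous} if $a_u = b_u < \infty$. The structural core of the proof is the claim: if $\omega$ does not isolate $\mathcal{F}$, then some $u \in U$ is ambiguous.

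To verify the claim, suppose $S \neq T$ are two distinct minimum-weight members of $\mathcal{F}$. Since $S \neq T$, their symmetric difference is nonempty; pick any $u \in S \triangle T$, and without loss of generality assume $u \in S \setminus T$. By the definitions, $a_u \leq \omega(S)$ and $b_u \leq \omega(T) = \omega(S)$, while both $a_u$ and $b_u$ are at least the global minimum weight of $\mathcal{F}$, which is $\omega(S)$. Hence $a_u = b_u = \omega(S) < \infty$, so $u$ is ambiguous.

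Next I would bound $\Pr[u \text{ is ambiguous}]$ for each fixed $u$ by deferred decisions. Condition on the weights $\omega(v)$ for all $v \neq u$. Then $b_u$ is completely determined by these weights and does not depend on $\omega(u)$. Writing $c_u = \min\{\omega(S) - \omega(u) : S \in \mathcal{F},\ u \in S\}$, the quantity $c_u$ likewise does not depend on $\omega(u)$, and $a_u = \omega(u) + c_u$. Therefore the event $a_u = b_u$ reduces to the single linear equation $\omega(u) = b_u - c_u$; since $\omega(u)$ is uniform on $\{1, 2, \ldots, W\}$ and independent of the other weights, this occurs with probability at most $1/W$.

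Finally, a union bound over $u \in U$ yields $\Pr[\text{some } u \in U \text{ is ambiguous}] \leq |U|/W$, and the structural claim shows that $\omega$ isolates $\mathcal{F}$ whenever no element is ambiguous, giving the desired inequality $\Pr[\omega \text{ isolates } \mathcal{F}] \geq 1 - |U|/W$. The only step requiring any ingenuity is the deferred-decisions observation that, once the other weights are frozen, exactly one value of $\omega(u)$ can make $u$ ambiguous; the remainder is a short symmetric-difference argument and a union bound.
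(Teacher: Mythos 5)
Your proof is correct, and it is essentially the canonical argument of Mulmuley, Vazirani, and Vazirani: define the thresholds $a_u$ and $b_u$, show that failure of isolation forces some $u$ to be ambiguous via a symmetric-difference argument, bound the probability that $u$ is ambiguous by $1/W$ using the principle of deferred decisions on the single coordinate $\omega(u)$, and union-bound over $u \in U$. Note that the paper does not prove this lemma at all; it simply cites \cite{Mulmuley1987}, so there is no in-paper proof to compare against, but your argument matches the standard one and all steps (including the handling of $a_u = +\infty$ or $b_u = +\infty$, which are implicitly excluded by the ambiguity condition $a_u = b_u < \infty$) are sound.
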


In the \emph{Cut \& Count} algorithms and proofs, for a function $f: S \rightarrow T$, given a set $R$, $f|_R$ refers to the function $f$ with its domain restricted to $R$. Formally, $f|_R$ is a function from $R$ to a subset of $T$ such that $f|_R(r) = f(r)$ for all $r \in R$. Given values $u$ and $v$, $f[u \rightarrow v]$ refers to a function with $u$ in domain and $v$ in range with all mappings from $S$ to $T$ preserved and $u$ mapped to $v$. Formally, $f[u \rightarrow v]$ is a function from $S \cup \{u\}$ to $T \cup \{v\}$ such that $f[u \rightarrow v](s) = f(s)$ for all $s \in S$ and $f[u \rightarrow v](u) = v$. Also, we define $f^{-1}(s) := \{x | x \in S \land f(x) = s\}$. We use the Iverson's bracket notation $[b]$ for a Boolean predicate $[b]$ which denotes $1$ if $b$ is True and $0$ otherwise.

In this paper, we will be dealing with randomized algorithms with one-sided error-probability, i.e. only \emph{false negatives} are possible. The \emph{success-probability} of an algorithm is the probability that the algorithm finds a solution, given that at least one such solution exists. We define \emph{high-probability} to be probability at least $1 - \frac{1}{2^{c|x|}}$ or sometimes $1 - \frac{1}{|x|^{c}}$, where $\vert x\vert$ is the input size and $c$ is a constant. 
Given an algorithm with constant success-probability, we can boost it to high-probability by performing $\mathcal{O}^{\star}(1)$ independent trials. We cite the following folklore observation:
\begin{lemma}\label{lem:randomalgo}{\rm(Folklore, \cite{li-soda20}).} If a problem can be solved with success probability $\frac{1}{S}$ and in expected time $T$, and its solutions can be verified for correctness in polynomial time, then it can be also solved in $\mathcal{O}^{\star}(S \cdot T )$ time with high probability.
\end{lemma}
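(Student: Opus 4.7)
The plan is to amplify the success probability by independent repetition, using the polynomial-time verifier to distinguish a successful trial from a failed one. The amplified algorithm runs the base procedure $N$ times independently on input $x$, where $N := c_1 \cdot S \cdot |x|$ for a suitable constant $c_1$; after each trial it feeds the produced output (if any) through the polynomial-time verifier, and it returns the first candidate that verifies as a correct solution. If none of the $N$ trials yields a verifiable solution, the amplified algorithm declares that no solution exists. Because the paper's randomized algorithms have one-sided error, the verifier together with this ``answer only when certain'' rule guarantees that we never output a spurious yes on a no-instance: on a no-instance there is no valid certificate, so no candidate can ever verify, and we correctly answer no.

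The first obstacle is that the base algorithm's time bound $T$ is only an \emph{expectation}, so a naive union bound over $N$ independent trials does not directly yield a high-probability worst-case runtime. I would handle this by Markov's inequality combined with truncation: cap each trial at $2T$ steps, aborting any trial that exceeds this budget. Markov's inequality gives that a single trial completes within the $2T$ budget with probability at least $1/2$, and conditional on completing it is correct with probability at least $1/S$, so the success probability of one truncated trial is still at least $1/(2S)$. Since truncation gives each trial a deterministic worst-case runtime of $2T + \mathrm{poly}(|x|)$ (the additive term being the verifier), the overall runtime of the amplified algorithm is bounded deterministically by $N \cdot (2T + \mathrm{poly}(|x|)) = \mathcal{O}^\star(S \cdot T)$.

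It remains to argue the success probability on yes-instances. The $N$ truncated trials are independent, each producing a verifiable solution with probability at least $1/(2S)$, so the probability that every trial fails is at most
\[
\Bigl(1 - \tfrac{1}{2S}\Bigr)^{N} \;\leq\; \exp\!\Bigl(-\tfrac{N}{2S}\Bigr) \;=\; \exp\!\Bigl(-\tfrac{c_1}{2}\,|x|\Bigr),
\]
which is at most $2^{-c|x|}$ for any desired constant $c$ by choosing $c_1$ large enough. This matches the paper's definition of high probability, completing the proof. The only non-trivial ingredient is the Markov truncation that converts an expected-time guarantee into a worst-case one without losing more than a constant factor in success probability; everything else is routine independent repetition, which is why the statement is attributed to folklore.
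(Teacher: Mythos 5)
The paper states this lemma as folklore (citing Li and Nederlof) and does not supply its own proof, so your attempt has to stand on its own. Your overall plan, independent repetition together with the polynomial-time verifier, is the right framework, and you correctly identify that the expected-time guarantee is the only non-trivial obstacle. However, the Markov-truncation step has a genuine gap. After capping each trial at $2T$ steps you assert that, \emph{conditional on finishing within the budget}, the trial is still correct with probability at least $1/S$. That conditional claim does not follow: ``the run succeeds'' and ``the run is fast'' may be arbitrarily negatively correlated, and in the extreme they can be disjoint. Take a base algorithm that with probability $1/S$ runs for $ST$ steps and returns a valid solution, and with probability $1-1/S$ halts instantly with garbage. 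Its expected time is exactly $T$ and its success probability is exactly $1/S$, yet every successful run exceeds the $2T$ cap, so a $2T$-truncated trial succeeds with probability $0$. All that Markov plus a union bound actually gives is $\Pr[\text{success}\wedge\text{time}\le 2T]\ge 1/S - 1/2$, which is vacuous once $S\ge 2$ — and in this paper $S$ is exponential in $k$.

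The standard repair that still achieves $\mathcal{O}^{\star}(S\cdot T)$ is to budget \emph{globally} rather than per trial. Run $N=\Theta(S)$ independent trials sequentially under a single overall time cap of $2NT$. The expected total time of the batch is $NT$, so by Markov the cap is respected with probability at least $1/2$; independently, the probability that no trial succeeds is $(1-1/S)^N\le e^{-N/S}$, a constant strictly below $1/2$ for a suitable choice of $N=\Theta(S)$. A union bound then shows that one such batch, running in worst-case time $\mathcal{O}(ST)$, finds a verifiable solution with constant probability, and repeating the batch $\mathrm{poly}(|x|)$ times (stopping as soon as the verifier accepts) boosts success to $1-2^{-c|x|}$ while keeping total time $\mathcal{O}^{\star}(S\cdot T)$. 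Note that the simpler fix of raising the per-trial cap to $2ST$, so that $\Pr[\text{success}\wedge\text{within cap}]\ge 1/S - 1/(2S)=1/(2S)$, would instead yield total time $\mathcal{O}^{\star}(S^2 T)$, which is too lossy here: for example Lemma~\ref{lem:afd_AFD1} has $S=\Theta(c_k^k c_\ell^\ell\cdot k)$, and a quadratic loss in $S$ would destroy the claimed base-of-exponent bounds.
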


We will use the following notion of separations in a graph from \cite{li-soda20}:
\begin{definition}{\rm (Simple Separator, \cite{li-soda20}).} Given a graph $G(V,E)$, a partition $(A,B,S) \in \binom{V(G)}{\cdot,\cdot,\cdot}$ of $V$ is a separation if there are no edges between $A$ and $B$.
\end{definition}

\begin{definition}{\rm (Three-Way Separator, \cite{li-soda20}).} Given a graph $G = (V, E)$, a three-way separator is a partition $(S_{\{1\}}, S_{\{2\}}, S_{\{3\}}, S_{\{1,2\}}, S_{\{1,3\}},$ $S_{\{2,3\}}, S_{\{1,2,3\}})$ of $V$ such that there are no edges between any two sets $S_I$, $S_J$ whose sets $I$ and $J$ are disjoint.
\end{definition}

A $\beta$-separator for a graph $G(V,E)$ is a set of vertices whose removal from $G$ leaves no connected component of size larger than $\frac{|V|}{\beta}$, where $\beta > 0$ is some constant. Thus, a $\beta$-separator is a balanced separator of the graph. More generally, one can define a $\beta$-separator with respect to a weight function on the vertices. We now give a method to construct a $\beta$-separator of a graph $G$ given a tree decomposition (Lemma~\ref{lem:generalized_beta_separator}). 



\begin{lemma}\label{lem:generalized_beta_separator}
    Given a graph $G(V,E)$ on $n$ vertices with vertex weights $\omega(v)$ and its tree decomposition $\mathbb{T}$ of width $\tw$, for any $\beta > 0$, we can delete a set $S$ of $\beta (\tw+1)$ vertices so that every connected component of $G - S$ has  weight at most $\frac{\omega(V)}{\beta}$ in polynomial time.
\end{lemma}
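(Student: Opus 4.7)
The plan is to reduce the separator problem on $G$ to a weighted node-separator problem on the underlying tree $T$ of $\mathbb{T}$, and then solve the latter by a classical post-order greedy. Root $T$ arbitrarily. By property~(3) the bags containing any $v \in V$ form a connected subtree $T_v \subseteq T$; let $\mathrm{top}(v)$ denote its shallowest node, and set tree-node weights $\tilde\omega(x) := \sum_{v:\mathrm{top}(v)=x}\omega(v)$, so that $\sum_x \tilde\omega(x) = \omega(V)$. For any $R \subseteq V(T)$ with $S := \bigcup_{x \in R} B_x$, properties~(2)--(3) of the decomposition imply that if $\{u,v\}$ is an edge of $G - S$, then some bag of $T - R$ contains both endpoints, which forces $\mathrm{top}(u)$ and $\mathrm{top}(v)$ to lie in the same connected component $T_j$ of $T - R$. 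Consequently every component of $G - S$ is contained in a single set $V_j := \{v \notin S : \mathrm{top}(v) \in T_j\}$, and $\omega(V_j) \le \sum_{x \in T_j}\tilde\omega(x)$.

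It therefore suffices to choose $R \subseteq V(T)$ with $|R| < \beta$ such that every component of $T - R$ has $\tilde\omega$-weight at most $\omega(V)/\beta$; then $|S| \le |R|(\tw+1) \le \beta(\tw+1)$ while every component of $G - S$ has weight at most $\omega(V)/\beta$, proving the lemma. For the tree problem I plan to run the standard post-order greedy, maintaining a pending weight $p(x) := \tilde\omega(x) + \sum_{c:\text{child of }x} p(c)$; whenever $p(x) > \omega(V)/\beta$ I add $x$ to $R$ and reset $p(x) \leftarrow 0$. The reset guarantees that distinct picks absorb disjoint portions of the total $\tilde\omega$-weight; since each pick absorbs strictly more than $\omega(V)/\beta$, we get $|R| \cdot \omega(V)/\beta < \omega(V)$, i.e.\ $|R| < \beta$. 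All the steps---computing $\mathrm{top}(\cdot)$, the post-order traversal, the updates to $p$, and forming $S$---run in polynomial time.

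The main obstacle, in my view, is not the greedy itself (which is textbook) but the routing argument that relates the $V_j$'s to the components of $G - S$. The subtle point is that a vertex $v$ with $\mathrm{top}(v) \in T - R$ can still land in $S$ if a deeper bag containing $v$ belongs to $R$, and one must use the connectedness of $T_v$ (property~(3)) to argue that this cannot create an edge of $G - S$ whose endpoints fall in two different $V_j$'s. Once this routing is verified, the greedy analysis and the polynomial-time bound follow immediately.
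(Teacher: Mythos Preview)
Your argument is correct and follows essentially the same bottom-up greedy as the paper: root the decomposition, repeatedly pick a deepest bag whose ``pending'' subtree weight exceeds $\omega(V)/\beta$, add that bag to $S$, and observe that at most $\beta$ picks can occur. The only real difference is bookkeeping: the paper works directly with $w(x)=\omega(V_x)$ and, after each pick, physically removes the subtree and deletes $B_x$ from the remaining bags, whereas you pre-assign each vertex's weight to its unique top bag and run a single post-order pass with the accumulator $p(\cdot)$. Your version is a little cleaner---it avoids the iterated modification of $\mathbb{T}$ and makes the ``disjoint absorbed weight'' argument immediate---but the underlying idea and the resulting bound are the same.

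One small remark: the ``subtle point'' you flag is not actually a danger. If $v\in S$ then $v$ is simply not an endpoint of any edge of $G-S$, so it cannot straddle two $V_j$'s. The place where property~(3) is genuinely used is the step you already carried out: for an edge $\{u,v\}$ of $G-S$ one has $T_u\cap R=T_v\cap R=\varnothing$, hence the bag witnessing the edge, $\mathrm{top}(u)$, and $\mathrm{top}(v)$ all lie in the same component of $T-R$.
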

\begin{proof}
    Given a bag $x$ of the $\mathbb{T}$, we define the weight of the subtree rooted at $x$ ($w(x)$) to be the sum of weights of vertices present in the set formed by union of all bags in the subtree of x. Formally, $w(x) := \sum\limits_{v \in V_x} \omega(v)$. Start with an empty set $S$.
    

    Exhaustively, select a bag $x$ of maximal depth such that $w(x) > \frac{\omega(V)}{\beta}$, then remove the bag $x$ and its subtree and add all vertices in $B_x$ to the set $S$. Also, delete the vertices in $B_x$ from all other bags. Note that the maximality condition assures us that the subtrees rooted at the children of $x$ have total weight at most $\frac{\omega(V)}{\beta}$ each. Moreover, by deleting the subtree rooted at $x$, we remove at least $\frac{\omega(V)}{\beta}$ weight, which can happen at most $\beta$ times. As each bag has size at most $t+1$, the total vertices we select are at most $\beta (t+1)$ to be added to $S$.
    
    To see how there are no connected components of size more than $\frac{\omega(V)}{\beta}$ in $G - S$ left, suppose that the tree decomposition left after following this algorithm is $\mathbb{T'}$. Now assume that a connected component $C$ of weight more than $\frac{\omega(V)}{\beta}$ exists in $G - S$. Then all of its vertices in their entirety must lie inside $\mathbb{T}'$ (since all children of a deleted bag have weight at most $\omega(V)/\beta$). Now, take the vertex of $C$ which is in the least depth in $\mathbb{T}'$ and say it lies in the bag $c$. All the members of its connected components therefore have to appear in the subtree rooted at $c$. Therefore, $w(c) \ge \frac{\omega(v)}{\beta}$ which would mean that this is not the terminal condition for our algorithm. 
    
    From the description of the algorithm it is easy to see that it runs in polynomial time. 
\end{proof}

In \cite{li-soda20}, the authors presented a method involving randomized reductions and small separators to get faster randomized algorithms for \fvs. It turns out that this method can be generalized to work for a certain set of ``vertex-deletion problems''. We will now describe the basic structure of this method and will follow this outline wherever this method is used in the rest of the paper.

Throughout this outline, assume that we are working on some vertex-deletion problem $\mathcal{P}$. Let $G(V,E)$ be the graph involved in a given instance of $\mathcal{P}$. A valid solution $S\subseteq V$ is a set of vertices of $G$ which solves the given problem instance of $\mathcal{P}$. 

The method is divided into two cases: A dense case and a sparse case.

\medskip
\noindent
{\bf Dense Case.}
The algorithm goes into this case when for a given problem all the existing solution sets are of high average degree. In formal terms, every set $S \subseteq V$ of size $k$ which is a valid solution of the given instance satisfies $deg(S) > c\cdot k$, where $c = \Theta(1)$.

To handle this case, a vertex $v\in V$ is sampled randomly based on a weight function $\omega(v)$ which depends on $deg(v)$, deletes $v$ and makes appropriate updates to the parameters. In this paper, we use $\omega(v) = deg(v) - 2$ for all the problems discussed. This process acts like a probabilistic reduction rule for the problem as it may fail with certain probability. 

\medskip
\noindent
{\bf Sparse Case.}
The algorithm goes into this case when for a given problem there exists a solution set which has low average degree. In formal terms, there exists a vertex subset $ S \subseteq V$ of size $k$ which is a valid solution of the given instance and satisfies $deg(S) \le c \cdot k$, where $c = \mathcal{O}(1)$. Due to this reason, the number of edges in the given graph can be bounded, thus the input graph $G$ is sparse.

The proof for the small separator lemma in \cite{li-soda20} doesn't require the remaining graph, i.e. the graph obtained by deleting the solution set, to be a forest only. As long as there is a good $\beta$-separator of the graph $G-S$, the proof works. Lemma \ref{lem:generalized_beta_separator} helps to construct such a $\beta$-separator of size $\beta (\tw+1)$ for a graph with given tree decomposition of width $\tw$.


The small separator helps to construct a  tree decomposition of small width, given a solution set with bounded degree. The idea suggested in \cite{li-soda20} was to use iterative compression techniques to construct a solution utilizing the small separator. This also requires solving a bounded degree version of the problem, which can be done using \emph{Cut \& Count} based algorithms. Specific details for each problem will be explained in the corresponding sections in due course.


\section{Restricted-Independent Almost Forest Deletion}\label{sec:iafd}

In this section we give our algorithm for {\sc RIAFD} and prove Theorem~\ref{thm:afd} and the first part of Theorem~\ref{thm:cut&count}. We first formally show that  {\sc RIAFD} is a generalization of {\sc Almost Forest Deletion}. For any instance $G$ of {\sc Almost Forest Deletion}, subdivide the edges of $G$ and add all the newly created subdivision vertices to $R$. The parameters $k$ and $\ell$ remain the same.

\begin{lemma}\label{lem:afd_iafd}
    Given an instance of {\sc Almost Forest Deletion} $(G(V,E),k,\ell)$, an equivalent instance of 
    {\sc RIAFD}, $(G'(V',E'),k',\ell',R)$, can be constructed as follows:
    \begin{enumerate}
        \item $V' = V, E' = \varnothing, R = \varnothing$.
        \item For each $e=(u,v) \in E$, add a vertex $v_e$ to $V'$ as well as to $R$. Add edge $(u,v_e)$ and $(v_e,v)$ to $E'$ (Essentially, subdivide $e$).
        \item $k' = k, \ell' = \ell$.
    \end{enumerate}
\end{lemma}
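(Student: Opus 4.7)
The plan is to set $S' = S$ on both sides of the equivalence and verify each of the {\sc RIAFD} solution conditions; the non-trivial content reduces to a single circuit-rank invariance, namely that $G - S$ is an $\ell$-forest iff $G' - S$ is an $\ell$-forest for every $S \subseteq V$. First, I would dispatch the easy structural facts: every edge of $G'$ is incident to exactly one vertex of $R$ (the relevant subdivision vertex), so no two vertices of $V$ are adjacent in $G'$, which makes any $S \subseteq V$ automatically independent in $G'$ and disjoint from $R$. The size and edge-budget bounds transfer verbatim since $k' = k$ and $\ell' = \ell$. This leaves only the $\ell$-forest equivalence.

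I would then reformulate ``$\ell$-forest'' in terms of the circuit rank $\mu(H) := |E(H)| - |V(H)| + c(H)$, where $c(H)$ is the number of connected components of $H$: by Lemma~\ref{lem:forestcc}, $H$ is an $\ell$-forest iff $\mu(H) \leq \ell$. It thus suffices to prove $\mu(G-S) = \mu(G'-S)$. I would partition $E$ by the number of endpoints in $S$: let $m_i$ be the number of edges of $G$ with exactly $i$ endpoints in $S$, for $i \in \{0,1,2\}$. Since all subdivision vertices survive in $G'-S$ (they lie in $R$ and $S \cap R = \emptyset$), the relevant counts are $|V(G'-S)| = (|V|-|S|) + (m_0+m_1+m_2)$, $|E(G'-S)| = 2m_0 + m_1$, and $c(G'-S) = c(G-S) + m_2$; the last identity follows because each $v_e$ with both endpoints of $e$ in $S$ becomes an isolated vertex, each with exactly one endpoint in $S$ becomes a pendant of an existing component of $G-S$, and each with no endpoint in $S$ merely subdivides an edge of $G-S$ without affecting connectivity. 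Substituting into $\mu$ causes the $m_1$ and $m_2$ contributions to cancel, leaving $\mu(G'-S) = m_0 - (|V|-|S|) + c(G-S) = \mu(G-S)$, which is the expected invariance of circuit rank under edge subdivision.

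I do not anticipate any real obstacle here; this is a routine subdivision calculation. The only point requiring attention is the correct bookkeeping of the pendant and isolated subdivision vertices associated with edges meeting $S$, since they persist in $G'-S$ and must be shown to contribute equally to the vertex count and to the edge-plus-component count so as to cancel cleanly in the $\mu$ computation.
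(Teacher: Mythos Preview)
Your proof is correct and complete. The paper itself states this lemma without proof, treating the reduction as folklore (the surrounding text simply asserts that subdividing every edge and putting the subdivision vertices into $R$ yields an equivalent instance). Your argument supplies exactly the missing details, and the circuit-rank calculation is the natural way to verify the $\ell$-forest equivalence.

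Two minor remarks. First, the characterisation ``$H$ is an $\ell$-forest iff $\mu(H)\le\ell$'' is precisely the content of Lemma~\ref{lem:afcc} in the paper (Lemma~\ref{lem:forestcc} is only the $\ell=0$ case), so you may want to cite that instead. Second, for the reverse direction it is worth making explicit the one-line observation that any {\sc RIAFD} solution $S'$ for $(G',k',\ell',R)$ automatically satisfies $S'\subseteq V$ because $S'\cap R=\varnothing$ and $V'\setminus R=V$; you use this implicitly when you write ``set $S'=S$ on both sides'', but stating it makes clear why the same circuit-rank identity handles both implications.
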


In this section, we present fast randomized algorithms for \textsc{RIAFD}. In Section \ref{sec:afd_cutcount} we present a $\mathcal{O}^{\star}(3^{\tw})$ running time algorithm based on the \emph{Cut \& Count} paradigm. Based on this, we give a $\mathcal{O}^{\star}(3^k3^\ell)$ time and polynomial space algorithm in Section \ref{sec:afd_3k3l}. In Section \ref{sec:afd_improve_k}, we further improve the dependence on $k$ by using modified techniques from \cite{li-soda20} to get an algorithm with running time $\mathcal{O}^{\star}(2.85^k8.54^\ell)$ and polynomial space as well as an algorithm with running time $\mathcal{O}^{\star}(2.7^k36.61^\ell)$ but exponential space. Finally, in Section \ref{sec:afd_improve_l}, we improve the dependence on $\ell$ by creating a tree decomposition of width $k + \frac{3}{5.769}\ell$ and run the \emph{Cut \& Count} algorithm presented in Section \ref{sec:afd_cutcount} to get an algorithm with running time $\mathcal{O}^{\star}(3^k1.78^\ell)$. Henceforth, the term {\sf riafd-set} corresponds to a solution for given instance of \textsc{RIAFD} and the term {\sf afd-set} corresponds to a solution for given instance of \textsc{AFD}.
\subsection{$3^{\tw}$ Algorithm} \label{sec:afd_cutcount}

We use the \emph{Cut \& Count} technique to solve {\sc RIAFD} in $\mathcal{O}^*(3^{\tw})$ time. First of all, we require the following lemma.
\begin{lemma}
\label{lem:afcc}
A graph $G = (V, E)$ with $n$ vertices and $m$ edges and non-negative integer $\ell$ is an $\ell$-Forest if and only if it has at most $n-m+\ell$ connected components. 
\end{lemma}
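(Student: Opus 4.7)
The plan is to reduce both directions of the equivalence to Lemma~\ref{lem:forestcc} (the $\ell=0$ case) by carefully tracking how the number of connected components changes when edges are added or removed. The two facts I will lean on are: (i) a forest on $n$ vertices with $m$ edges has exactly $n-m$ components, and (ii) a spanning forest of any connected subgraph uses exactly one fewer edge than vertices, so a spanning forest of $G$ itself has exactly $n - c(G)$ edges, where $c(G)$ denotes the number of connected components of $G$.

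For the forward direction, I would pick a witness $E' \subseteq E$ with $|E'| \le \ell$ such that $F := G - E'$ is a forest. Applying Lemma~\ref{lem:forestcc} to $F$, which has $n$ vertices and $m - |E'|$ edges, gives $c(F) = n - m + |E'|$. Since $G$ is obtained from $F$ by reinserting $|E'|$ edges, and adding edges can only merge components, $c(G) \le c(F) = n - m + |E'| \le n - m + \ell$, as required.

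For the backward direction, I would fix any spanning forest $F$ of $G$ (obtainable greedily, e.g., one spanning tree per connected component). By fact (ii), $F$ uses $n - c(G)$ edges, so the set $E' := E(G) \setminus E(F)$ of non-tree edges has size $m - (n - c(G)) = m - n + c(G)$. Under the hypothesis $c(G) \le n - m + \ell$ this bound rearranges to $|E'| \le \ell$. Since $G - E' = F$ is a forest by construction, $G$ is an $\ell$-forest.

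I do not foresee any real obstacle — the statement is essentially bookkeeping on top of Lemma~\ref{lem:forestcc}, with the only subtlety being to remember that $|E'|$ in the forward direction could be strictly smaller than $\ell$, which is harmless because we only need an inequality in the conclusion.
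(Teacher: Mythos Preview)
Your proof is correct and follows essentially the same approach as the paper: both directions use a spanning forest to translate between the edge-deletion definition of an $\ell$-forest and the component-count bound, with Lemma~\ref{lem:forestcc} as the base case. Your handling of the $|E'| \le \ell$ subtlety in the forward direction is in fact slightly cleaner than the paper's own write-up.
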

\begin{proof} 
    \textbf{Forward Direction:} By definition of $\ell$-Forest, if you are given an $\ell$-Forest with $n$ vertices and $m$ edges, there exists a set $S$ of $\ell$ edges whose removal leaves a forest with $n$ vertices and $m - \ell$ edges. By Lemma \ref{lem:forestcc}, this leftover forest has at most $n - (m - \ell)$ connected components. Adding back the edges from the set $S$ to the $\ell$-forest cannot result in an increase in the number of connected components. Therefore, the $\ell$-Forest also has at most $n - m + \ell$ connected components.
    
    \textbf{Reverse Direction:} We are given a graph $G = (V, E)$ with $n$ vertices, $m$ edges and at most $n - m + \ell$ connected components. Let the $r$ connected components be $C_1, C_2, \hdots C_{r}$ having $n_1, n_2, \hdots n_{r}$ vertices each respectively. The subgraph consisting of only the spanning trees of the connected components is a forest with $\sum\limits_{i =0}^{r} (n_i - 1) = n - r \ge n - (n- m + \ell) \ge m - \ell$ edges. Let the edge set of the subgraph be $E'$. Therefore, $E \setminus E'$ of cardinality at most $\ell$ is the set of edges to be removed from $G$ to obtain a forest. Therefore, $G$ is an $\ell$-Forest.
\end{proof}
    
Moving on to the \emph{Cut \& Count} Algorithm. Firstly, we define the set $U = V$. We assume that we are given a weight function $\omega : U \rightarrow \{1,\hdots, N\}$, where $N$ is some fixed integer.
    
 \noindent 
 \textbf{The Cut Part:} For integers $A, B, W$ we define:
    \begin{enumerate}
        \item $\mathcal{R}_W^{A, B}$ to be the family of solution candidates: $\mathcal{R}_W^{A, B}$ is the family of sets $X$, where $X \subseteq V$,  $|X| = A$, $G[X]$ contains exactly $B$ edges, $(V \setminus X) \cap R = \varnothing$, $G[V \setminus X]$ is an independent set and $\omega(V \setminus X) = W$;
        \item $S_W^{A, B}$ to be the set of solutions: the family of sets $X$, where $X \in \mathcal{R}_W^{A, B}$ and $G[X]$ is an $\ell$-Forest;
        \item $\mathcal{C}_W^{A, B}$ to be the family of pairs $\big(X, (X_L,X_R)\big)$, where $X \in \mathcal{R}_W^{A, B}$ and $(X_L, X_R)$ is a consistent cut of $G[X]$.
    \end{enumerate}
    Observe that the graph $G$ admits an Restricted Independent Almost Forest Deletion set $F \subseteq V$ of size $k$ if and only if there exist integers $B, W$ such that the set $S_W^{n-k, B}$ is non-empty.
    
\noindent     
\textbf{The Count Part:} Note that for any $A, B, W, X \in \mathcal{R}_W^{A, B}$, there are $2^{cc(G[X])}$ cuts $(X_L, X_R)$ such that $\big(X, (X_L, X_R)\big) \in \mathcal{C}_W^{A, B}$, where by $cc(G[X])$ we denote the number of connected components of $G[X]$.
    
    Now we describe a procedure that, given a nice tree decomposition $\mathbb{T}$, weight function $\omega$ and integers $A, B, W, t$, computes $|\mathcal{C}_W^{A, B}|$ modulo $2^t$ using dynamic programming.
    
    For every bag $x \in \mathbb{T}$, integers $0 \le a \le |V|$, $0 \le b < |V|$ $0 \le w \le \omega(V)$ and $s \in \{\text{\textbf{F}},\text{\textbf{L}}, \text{\textbf{R}}\}^{B_x}$ (called the coloring), define:
    
    \begin{eqnarray*}
        \mathcal{R}_x(a,b,w) & = & \Big{\{} X \;\big| \;X\subseteq V_x \wedge |X| = a \wedge |E_x\cap E(G[X])|=b \wedge \; \left(V_x \setminus X\right) \cap R = \varnothing \wedge\\ & & |E(G[V_x \setminus X])| = 0 \wedge \; \omega(V_x \setminus X) = W \Big{\}}\\
        \mathcal{C}_x(a,b,w) & = & \Big{\{} \big(X, (X_L,X_R)\big)\; \big| \; X \in \mathcal{R}_x(a,b,w) \wedge \\ & & \big(X, (X_L, X_R)\big) \text{ is a consistently cut subgraph of }G_x \Big{\}}\\
        A_x(a,b,w,s) & = & \Big{|}\Big{\{} \big(X, (X_L,X_R)\big) \in \mathcal{C}_x(a,b,w) \; \big| \; \\ & & \big(s(v) \in \{\text{\textbf{L}},\text{\textbf{R}}\} \implies v \in X_{s(v)} \big) \wedge \big(s(v)=\text{\textbf{F}} \implies v \notin X \big) \Big{\}}\Big{|}
    \end{eqnarray*}
    
    The algorithm computes $A_x(a,b,w,s)$ for all bags $x \in \mathbb{T}$ in a bottom-up fashion for all reasonable values of $a$, $b$, $w$ and $s$. We now define the recurrence for $A_x(a,b,w,s)$ that is used by the dynamic programming algorithm. Let $v$ denote the vertex introduced and contained in an introduce vertex bag, $(u,v)$ the edge introduced in the introduce edge bag, and let $y$, $z$ stand for the left and right child of $x \in \mathbb{T}$. Assume all computations to be modulo $2^t$.
    \begin{itemize}
        \item \textbf{Leaf bag:} 
        \begin{align*}
            A_x(0,0,0,\varnothing) &= 1
        \end{align*}
        \item \textbf{Introduce vertex bag:}
        \begin{align*}
            A_x(a,b,w,s \cup \{(v,\text{\textbf{F}})\}) &= [v \notin R] \; A_y(a,b,w - \omega(v),s) \\
            A_x(a,b,w,s \cup \{(v,\text{\textbf{L}})\}) &= A_y(a-1, b, w, s)\\
            A_x(a,b,w,s \cup \{(v,\text{\textbf{R}})\}) &= A_y(a-1, b, w, s)
        \end{align*}
        \item \textbf{Introduce edge bag:}
        \begin{align*}
            A_x(a,b,w,s) &= [s(u)\neq s(v) \iff (s(u)=\text{\textbf{F}} \lor s(v)=\text{\textbf{F}})]A_y(a,b-[s(u) = s(v) \neq \text{\textbf{F}}], w, s)
        \end{align*}
        \item \textbf{Forget bag:}
        \begin{align*}
            A_x(a,b,c,w,s) &= \sum\limits_{\alpha \in \{\text{\textbf{F}}, \text{\textbf{L}}, \text{\textbf{R}}\}} A_x(a,b,w,s[v\rightarrow \alpha])
        \end{align*}
        \item \textbf{Join bag:}
        \begin{align*}
            A_x(a,b,w,s) &= \sum\limits_{\mathclap{\substack{a_1 + a_2 = a + |s^{-1}(\{\text{\textbf{L}},\text{\textbf{R}}\})|\\
            b_1 + b_2 = b \\
            w_1 + w_2 = w + \omega(s^{-1}(\{\text{\textbf{F}}\}))
            }}}A_y(a_1, b_1, w_1, s)\cdot A_z(a_2, b_2, w_2, s)
        \end{align*}
    \end{itemize}

Let $r \in \mathbb{T}$ be the root bag. Therefore, $A_r(A,B,W,\varnothing) \equiv |\mathcal{C}_W^{A, B}| \ (\textrm{mod} \ 2^t)$ which is our required answer.

\begin{lemma}
    \label{lem:afdccmain}
    Let $G(V,E)$ be a graph and $d$ be an integer. Pick $\omega'(v) \in \{1, \ldots, 2 |V|\}$ uniformly and independent at random for every $v \in V$, and define $\omega(v):= |V|^2\omega'(v) + deg(v)$ and $n = |V|$. The following statements hold:
    \begin{enumerate}
        \item If for some integers $m'$, $W = i|V|^2 + d$ we have that $|\mathcal{C}_W^{n-k,m'}| \not\equiv 0\ (\textrm{mod} \ 2^{n-k-m'+l + 1})$, then $G$ has a {\sf riafd-set} $F$ of size $k$ satisfying $deg(F) = d$.
        
        \item If $G$ has a {\sf riafd-set} $F$ of size $k$ satisfying $deg(F) = d$, then with probability at least $1/2$ for some $m'$, $W = i|V|^2 + d$ we have that $|\mathcal{C}_W^{n-k,m'}| \not\equiv 0\ (\textrm{mod}  \ 2^{n-k-m'+\ell + 1})$.
    \end{enumerate}
\end{lemma}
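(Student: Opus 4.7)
The plan is to carry out the standard Cut \& Count analysis, adapted to the $\ell$-forest setting via Lemma~\ref{lem:afcc}. A key structural observation I will use throughout is that because $\omega(v) = |V|^2 \omega'(v) + deg(v)$ and $deg(F) \leq 2|E(G)| < |V|^2$ for every $F\subseteq V$, the quantity $deg(F)$ can be read off from $\omega(F)$ as the residue modulo $|V|^2$. Thus the constraint $W = i|V|^2 + d$ together with $\omega(V \setminus X) = W$ forces $deg(V \setminus X) = d$; combined with $|X| = n-k$, every $X \in \mathcal{R}_W^{n-k, m'}$ corresponds to a candidate deletion set $F = V \setminus X$ of size $k$ with $deg(F) = d$ that avoids $R$ and is independent in $G$.

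The central counting identity, which drives both directions, is $|\mathcal{C}_W^{n-k, m'}| = \sum_{X \in \mathcal{R}_W^{n-k, m'}} 2^{cc(G[X])}$ (as noted in the Count Part of the paper). Whenever $G[X]$ is not an $\ell$-forest, Lemma~\ref{lem:afcc} applied to $G[X]$ (with $n-k$ vertices and $m'$ edges) gives $cc(G[X]) \geq n-k-m'+\ell+1$, so such contributions vanish modulo $2^{n-k-m'+\ell+1}$. Conversely, if $G[X]$ is an $\ell$-forest then Lemma~\ref{lem:afcc} gives $cc(G[X]) \leq n-k-m'+\ell$, so $2^{cc(G[X])}$ is a nonzero residue modulo $2^{n-k-m'+\ell+1}$.

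For the forward direction, if $|\mathcal{C}_W^{n-k, m'}| \not\equiv 0 \pmod{2^{n-k-m'+\ell+1}}$, then by the preceding paragraph at least one $X$ in the family must have $G[X]$ an $\ell$-forest, in which case $F := V \setminus X$ is a riafd-set of size $k$ with $deg(F) = d$. For the reverse direction I apply the isolation lemma to the nonempty family $\mathcal{F} = \{F \subseteq V : F \text{ is a riafd-set of size } k \text{ with } deg(F) = d\}$, using the weights $\omega'$ drawn uniformly from $\{1, \ldots, 2|V|\}$; this isolates $\mathcal{F}$ with probability at least $1 - |V|/(2|V|) = 1/2$, producing a unique $F^*\in \mathcal F$ of minimum $\omega'$-weight $i^*$. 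Setting $X^* := V \setminus F^*$, $m'^* := |E(G[X^*])|$, and $W^* := |V|^2 i^* + d$, I will argue that any other $X \in \mathcal{R}_{W^*}^{n-k, m'^*}$ with $G[X]$ an $\ell$-forest would correspond to a distinct $F \in \mathcal{F}$ with $\omega'(F) = i^*$, contradicting the isolation guarantee; combined with the second-paragraph bounds this yields $|\mathcal{C}_{W^*}^{n-k, m'^*}| \equiv 2^{cc(G[X^*])} \not\equiv 0 \pmod{2^{n-k-m'^*+\ell+1}}$.

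The main technical subtlety to verify carefully is the $\ell$-shift in the modulus: in classical FVS the Cut \& Count uses modulus $2^{n-k-m'+1}$ because a forest has $cc(G[X])$ exactly equal to $n-k-m'$, whereas here the $\ell$-forest condition only yields the inequality $cc(G[X]) \leq n-k-m'+\ell$. The modulus $2^{n-k-m'+\ell+1}$ is calibrated to be just large enough to zero out every non-$\ell$-forest contribution while still preserving a nonzero residue for each $\ell$-forest candidate; this is exactly what makes the isolation-based argument go through, and is really the only place where Lemma~\ref{lem:afcc} enters the analysis.
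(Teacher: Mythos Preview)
Your proposal is correct and follows essentially the same approach as the paper: both arguments hinge on the identity $|\mathcal{C}_W^{n-k,m'}| = \sum_X 2^{cc(G[X])}$, use Lemma~\ref{lem:afcc} to separate $\ell$-forests from non-$\ell$-forests via the threshold $cc(G[X]) \leq (n-k)-m'+\ell$, and apply the Isolation Lemma with weights $\omega'$ for Item~2. Your write-up is somewhat more explicit than the paper's about why the condition $W = i|V|^2 + d$ pins down $deg(F)=d$ (via $deg(F) < |V|^2$), but this is a presentational difference rather than a methodological one.
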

\begin{proof}
This proof is similar to the one for {\sc fvs} in \cite{li-soda20}.

\textbf{Item 1:} Note that if $|\mathcal{C}_W^{n-k,m'}| \not\equiv 0\ (\textrm{mod} \ 2^{n-k-m'+\ell + 1})$, then there must be some vertex subset $F$ of size $k$ such that $F \cap R = \varnothing$, $G[F]$ is an independent set and the number of choices of $X_L$, $X_R$ with $(V \setminus F, (X_L, X_R)) \in \mathcal{C}_W^{n-k,m'}$ is not a multiple of $2^{n-k-m'+\ell + 1}$. Due to independence in choice of cuts for connected components of $G[V \setminus F]$ on whether to put it in $X_L$ or $X_R$ $G[V \setminus F]$ must have at most $n-k-m'+\ell$ connected components. Therefore, by Lemma \ref{lem:afcc} $G[V \setminus F]$ must be an $\ell$-Forest, making $F$ a {\sf riafd-set} of size $k$. The condition on degree follows from the weighting.

\textbf{Item 2:} First apply Lemma \ref{lem:isolation} with $U = V$ and the set family $\mathcal{F}$ being the set of all {\sf riafd-set} $F$ satisfying $deg(F) = d$ with weighting done based on $\omega'$. With probability $1/2$, there will be some weight $i$ such that there is a unique {\sf riafd-set} $F$ with $def(F) = d$ and weight $i$. Therefore, for the weight function $\omega$, we have $W = \omega(F) = i|V|^2 + d$. Since $\omega'$ isolated $F$ out of $\mathcal{F}$ and $d < |V|^2$ (for $k >0$), this is the only $F$ which has a contribution in $\mathcal{C}_W^{n-k,m'}$ that is not a multiple of $2^{n-k-m'+\ell + 1}$ as it has $2^{cc(G[V \setminus F])} \le 2^{n-k-m'+\ell}$ valid cuts.
\end{proof}

While it is clear from the DP and Lemma \ref{lem:afdccmain} that we can get an $\mathcal{O}^\star\left(3^{\tw}\right)$ running time, we will provide the details of a slightly more generalized algorithm which is able to utilize additional structure in the tree decomposition and improves the space bound.

\subsection{$3^{k+\ell}$ Algorithm in Polynomial Space}\label{sec:afd_3k3l}
The above \emph{Cut \& Count} algorithm  utilizes exponential space. We notice that in all the problems discussed in the paper, the tree decomposition that we have always has a {\em  large set which is present in all bags of the tree decomposition}. We will exploit this structure to obtain a polynomial space algorithm.

\begin{definition}
    Given a set $S \subseteq V$ and a function $f: S \leftarrow \left\{\text{\textbf{F}},\text{\textbf{L}},\text{\textbf{R}}\right\}$ we define the quantity $\mathcal{C}_{W,f}^{A, B}$ as follows:
    
    \begin{eqnarray*}
    \mathcal{C}_{W,f}^{A, B} & = & \Big\{\big(X, (X_L,X_R)\big)\, \Big|\: X \in \mathcal{R}_W^{A, B} \land (X_L, X_R) \text{ is a consistent cut of } G\big[X\big] \land \\ & & \big(v \in S \implies v \text{ agrees with } f\big) \Big\}.
    \end{eqnarray*}
    
    where ``$v$ agrees with $f$'' means that $v \in V \setminus X$ if $f(v) = \text{\textbf{F}}$, $v \in X_L$ if $f(v) = \text{\textbf{L}}$ and $v \in X_R$ if $f(v) = \text{\textbf{R}}$.
\end{definition}

\begin{claim}
\label{claim:afdpoly}
     Given a tree decomposition $\mathbb{T}$ with a set $S\subseteq V$ which is present in all its bags, a fixed integer $t$ and a function $f: S \rightarrow \left\{\text{\textbf{F}},\text{\textbf{L}},\text{\textbf{R}}\right\}$, there is a routine \texttt{RIAFD-FCCount}($\mathbb{T},R,A,B,W,f,t$) which can compute $|\mathcal{C}_{W,f}^{A, B}| \;(\textrm{mod} \; 2^t)$ in time $\mathcal{O}^\star\left(3^{\tw - |S|}\right)$. 
\end{claim}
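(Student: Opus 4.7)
The plan is to adapt the dynamic programming of Section \ref{sec:afd_cutcount} by treating the colours of the vertices in $S$ as hard-wired to $f$, so that the per-bag state space collapses from $3^{|B_x|}$ to $3^{|B_x \setminus S|} \leq 3^{\tw + 1 - |S|}$.

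First, I would convert $\mathbb{T}$ into a nice tree decomposition $\mathbb{T}'$ of the same width with the additional invariant $S \subseteq B_x$ for every bag $x$ (including leaves and root). Since $S$ already lies in every bag of $\mathbb{T}$, this is a routine modification of the standard conversion: leaf bags are initialised with $B_x = S$ (rather than empty), introduce-vertex nodes handle only vertices of $V \setminus S$, introduce-edge nodes handle every edge of $G$ (with edges of $G[S]$ introduced immediately above the leaves), forget-vertex nodes forget only vertices of $V \setminus S$, join nodes satisfy $B_x = B_y = B_z \supseteq S$, and the root bag equals $S$. The width bound is preserved because every original bag of $\mathbb{T}$ already contained $S$.

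Second, I would run the recurrences of Section \ref{sec:afd_cutcount} verbatim on $\mathbb{T}'$, with the sole restriction that every colouring $s \in \{\text{\textbf{F}}, \text{\textbf{L}}, \text{\textbf{R}}\}^{B_x}$ in the DP satisfies $s|_S = f$. The leaf bag is initialised by setting $A_x(a_0, 0, w_0, f) = 1$ whenever $f^{-1}(\text{\textbf{F}}) \cap R = \varnothing$ and $0$ otherwise, where $a_0 = |f^{-1}(\{\text{\textbf{L}}, \text{\textbf{R}}\})|$ and $w_0 = \omega(f^{-1}(\text{\textbf{F}}))$, reflecting that the vertices of $S$ are already ``inside'' the bag with colours frozen to $f$. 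All computations are carried out modulo $2^t$, and the required quantity is $A_r(A, B, W, f)$, read at the root $r$ with $B_r = S$. Correctness is inherited from the correctness of the original Cut~\&~Count DP, since restricting to colourings agreeing with $f$ on $S$ is exactly the extra condition in the definition of $\mathcal{C}_{W,f}^{A,B}$.

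Finally, for the time bound, each bag has at most $3^{|B_x| - |S|} \leq 3^{\tw + 1 - |S|}$ admissible colourings and only polynomially many $(a, b, w)$ tuples, giving $\cO^\star(3^{\tw - |S|})$ entries per bag; summed over the $\cO(n\tw)$ bags of $\mathbb{T}'$, and noting that the join-bag recurrence contributes only polynomial overhead in the split over $(a_1, a_2)$, $(b_1, b_2)$, $(w_1, w_2)$, we obtain the claimed $\cO^\star(3^{\tw - |S|})$ total running time. The main obstacle is really the first step: carefully engineering the nice tree decomposition so that $S$ sits inside every bag without inflating the width, and so that edges of $G[S]$ are introduced early enough that the remainder of the DP only ever operates on colourings of $S$ frozen to $f$; once that is in place, the rest is a transparent restriction of the established DP.
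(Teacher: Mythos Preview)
Your proposal is correct and follows essentially the same approach as the paper: restrict the Cut~\&~Count DP of Section~\ref{sec:afd_cutcount} to colourings that agree with $f$ on $S$, so each bag contributes only $3^{|B_x \setminus S|} \le 3^{\tw+1-|S|}$ states. The paper's proof is a two-sentence sketch that simply says ``compute only those $A_x(a,b,w,s)$ with $s|_{B_x\cap S}=f|_{B_x\cap S}$''; you are more careful in explicitly re-engineering the nice tree decomposition so that $S$ sits in every bag (including leaves and root), which makes the initialisation and the final read-off cleaner. That extra care is not strictly required --- in the standard conversion every nice bag is a subset of some original bag, so $|B_x\setminus S|\le \tw+1-|S|$ already holds and the paper's restriction $s|_{B_x\cap S}=f|_{B_x\cap S}$ suffices --- but your version is a perfectly valid (and arguably more transparent) way to spell out the same idea.
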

\begin{claimproof}
    We will give a brief description of the routine \texttt{RIAFD-FCCount} as that will suffice to prove this claim.
    In every entry of the DP table described for $|\mathcal{C}_W^{A, B}|$, just compute all values of $A_x(a,b,w,s)$, where $s|_{B_x \cap S} = f|_{B_x \cap S}$ and ignore all computations that do not agree to this condition. This means per bag, only $\mathcal{O}^\star(3^{\tw - |S|})$ computations are required (since in all bags only at most $\tw+1 - |S|$ values of $s$ are not ``fixed'' by $f$). The required answer is in the root bag $r$ as the entry $A_r(A,B,W,\varnothing) \equiv |\mathcal{C}_{W,f}^{A, B}| \ (\textrm{mod} \ 2^t)$.
\end{claimproof}

Now, given a tree decomposition $\mathbb{T}$ with a set $S\subseteq V$ which is present in all its bags, we can see that,
$$|C_W^{A,B}| = \sum\limits_{\text{All possible } f: S \rightarrow \left\{\text{\textbf{F}},\text{\textbf{L}},\text{\textbf{R}}\right\}} |\mathcal{C}_{W,f}^{A, B}|.$$

Now, we define a procedure \texttt{RIAFDCutandCount} which given a tree decomposition $\mathbb{T}$, a set $S\subseteq V$ present in all bags of $\mathbb{T}$ uses the above fact to improve the space bound from $\mathcal{O}^{\star}\left(3^{\tw}\right)$ to $\mathcal{O}^{\star}\left(3^{\tw - |S|}\right)$.

\begin{algorithm}[!ht]
\label{alg:afd_cc}
\DontPrintSemicolon
\caption{$\mathtt{RIAFDCutandCount}$$(\mathbb{T},R,k,\ell,S)$}
\SetKwInOut{Input}{Input}\SetKwInOut{Output}{Output}
\SetKwData{afdfcc}{RIAFD-FCCount}
\SetKwData{infs}{Infeasible}
\SetKwData{c}{count}
\SetKwData{flag}{flag}
\SetKw{Break}{break}
\SetKw{Return}{return}
\Input{Tree decomposition $\mathbb{T}$, $G = (V,E)$, set $R$, parameters $k \le n$ and $\ell = \mathcal{O}(n^2)$}
\Output{A {\sf riafd-set} $F$ of size at most $k$ or \infs}
\Begin{
    \For{$n-k \le A \le n$, $0 \le B \le A +\ell-1$, $0 \le W \le 2|V|^4 + 2|E|$}{
        $t \leftarrow A - B + \ell + 1$\;
        \For{$n^{\mathcal{O}(1)}$ iterations}{
            \c $\leftarrow 0$\;
            Randomly initialize $\omega$ as stated in Lemma \ref{lem:afdccmain} considering $\mathbb{G} = G$\;
            \For{all possible $f: S \rightarrow \left\{\text{\textbf{F}},\text{\textbf{L}},\text{\textbf{R}}\right\}$}{
                \c $\leftarrow$ \c $+$ $\mathtt{RIAFD-FCCount}$($\mathbb{T}$,$R$,$A$,$B$,$W$,$f$,$t$)\;
            }
            \If{\c $\not\equiv 0 \ (\textrm{mod}\ 2^t)$}{
                $F \leftarrow$ a {\sf riafd-set} of $G$ constructed using self-reduction\;
                \Return $F$
            }
        }
    }
\Return \infs\;
}
\end{algorithm}

\begin{theorem}
\label{thm:afdcc}
  Given a tree decomposition $\mathbb{T}$, a set $S\subseteq V$ present in all bags of $\mathbb{T}$, a set $R$ and parameters $k$ and $\ell$, $\mathtt{RIAFDCutandCount}$ solves {\sc RIAFD} in $\mathcal{O}^{\star}\left(3^{\tw}\right)$ time and $\mathcal{O}^{\star}\left(3^{\tw - |S|}\right)$ space with high probability.
\end{theorem}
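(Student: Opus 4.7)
The plan is to combine Claim \ref{claim:afdpoly} with the elementary identity
\[
|\mathcal{C}_W^{A,B}| \;=\; \sum_{f: S \to \{\textbf{F},\textbf{L},\textbf{R}\}} |\mathcal{C}_{W,f}^{A,B}|
\]
so that the outer loop in \texttt{RIAFDCutandCount} accumulates $|\mathcal{C}_W^{A,B}|\pmod{2^t}$ without ever materializing a table of size $3^{\tw}$. Concretely, for each triple $(A,B,W)$, the variable \texttt{count} iterates over the $3^{|S|}$ colorings $f$ of $S$, calls the subroutine of Claim \ref{claim:afdpoly} (which runs in $\mathcal{O}^\star(3^{\tw-|S|})$ time and polynomial space beyond the accumulator), and adds the returned value modulo $2^t$. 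The crucial point is that the summation is performed sequentially: only one subroutine invocation is live at any moment, and only the scalar \texttt{count} and the current $f$ persist between invocations.

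For correctness, fix an arbitrary outer random draw of $\omega$ as specified in Lemma \ref{lem:afdccmain}. If the algorithm returns some $F$, then \texttt{count} was non-zero modulo $2^{A-B+\ell+1}$ for some triple $(A,B,W)$, so by item~1 of Lemma \ref{lem:afdccmain} a valid {\sf riafd-set} of size $k=n-A$ exists, and the self-reduction step produces one: for each vertex $v\in V\setminus R$ I would tentatively move $v$ into $R$ (forcing $v\notin F$), rerun the counting routine with the same seed, and check whether a non-zero residue is still witnessed; if yes, commit to $v\notin F$, otherwise commit to $v\in F$. This polynomial-length cascade reconstructs an actual solution. Conversely, if a {\sf riafd-set} of size $k$ exists, item~2 of Lemma \ref{lem:afdccmain} guarantees that for a random $\omega$, with probability at least $1/2$ some triple $(A,B,W)$ with $A=n-k$ witnesses a non-zero residue. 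The inner $n^{\mathcal{O}(1)}$ repetitions with fresh random weights then boost the success probability to high probability via Lemma \ref{lem:randomalgo}.

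For the resources, the outer $(A,B,W)$-enumeration contributes only a polynomial factor since $A,B\le n$ and $W\le \mathrm{poly}(n)$ by the construction of $\omega$. For each fixed $(A,B,W)$ the work is $3^{|S|}$ invocations of \texttt{RIAFD-FCCount}, each costing $\mathcal{O}^\star(3^{\tw-|S|})$ time by Claim \ref{claim:afdpoly}, totaling $\mathcal{O}^\star(3^{\tw})$. The space is dominated by a single \texttt{RIAFD-FCCount} call, hence $\mathcal{O}^\star(3^{\tw-|S|})$; the self-reduction does not blow this up because each recursive check reuses the same counting routine.

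The main obstacle I foresee is not the arithmetic but the self-reduction argument: I must be careful that after committing $v\in F$ (equivalently deleting $v$ and lowering $k$ by one) or $v\notin F$ (equivalently adding $v$ to $R$), the residual instance still satisfies the hypotheses of Lemma \ref{lem:afdccmain} with the same random seed, so that a witnessing triple is preserved. This is standard but needs the isolation weight to be regenerated only at the outermost level of repetition, and a clean bookkeeping argument to show that every non-zero witness survives a correct branching choice.
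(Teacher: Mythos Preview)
Your proposal is correct and follows essentially the same approach as the paper: both use the identity $|\mathcal{C}_W^{A,B}|=\sum_f|\mathcal{C}_{W,f}^{A,B}|$, invoke Claim~\ref{claim:afdpoly} for each of the $3^{|S|}$ colorings sequentially, and appeal to Lemma~\ref{lem:afdccmain} for correctness and to the $n^{\mathcal{O}(1)}$ inner repetitions for the high-probability guarantee. One small correction to your self-reduction sketch: you should \emph{not} reuse the same random seed when re-running the counting routine after a commitment; the isolation argument of Lemma~\ref{lem:afdccmain} must be re-applied to the residual instance with fresh weights (and boosted by its own $n^{\mathcal{O}(1)}$ repetitions), since the originally isolated witness need not remain isolated after restricting the solution family. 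The paper simply cites ``self-reduction'' without detail, but the standard implementation re-randomizes at each step, which keeps the overall overhead polynomial and resolves the obstacle you flagged.
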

\begin{proof}
    We first prove the probability bound. By Lemma \ref{lem:afdccmain} Item $(2)$ if a {\sf riafd-set} of size at most $k$ exists, then for some values satisfying $n-k \le A \le |V|$, $0 \le B \le A +\ell+1$ and $0 \le W \le 2|V|^4 + 2|E|$, in each iteration of the for block starting at Line $5$ \texttt{count} $\not\equiv 0 \ (\textrm{mod}\ 2^t)$ with probability $1/2$. Lemma \ref{lem:afdccmain} Item $(1)$ makes it so that whenever we have \texttt{count} $\not\equiv 0 \ (\textrm{mod}\ 2^t)$, there is guaranteed to be a {\sf riafd-set} i.e. there are no false positives. Therefore, in $n^{\cO(1)}$ iterations, we obtain the required {\sf riafd-set}, if it exist, with high probability and if such a set doesn't exist, \texttt{RIAFDCutandCount} will always return \texttt{Infeasible}.
    
    Now, to prove the time and space complexity bounds, we first take note of the fact that by Claim \ref{claim:afdpoly} Line $8$ takes $\mathcal{O}^{\star}\left(3^{\tw - |S|}\right)$ time and $\mathcal{O}^{\star}\left(3^{\tw - |S|}\right)$ space. Since the number of possible $f: S \rightarrow \left\{\text{\textbf{F}},\text{\textbf{L}},\text{\textbf{R}}\right\}$ is $3^{|S|}$, Line $8$ runs for $\mathcal{O}^{\star}\left(3^{\tw}\right)$ time but since each run is independent it still requires only $\mathcal{O}^{\star}\left(3^{\tw - |S|}\right)$ space. All other lines contribute at most polynomial cost overall to the total running time and space. Therefore, the time and space bounds for Line $8$ are the ones for the complete algorithm. 
\end{proof}

\begin{lemma}
\label{lem:afd_kltw}
Given a graph $G(V,E)$ and a {\sf riafd-set} $F$ of size $k$ you can construct a tree-decomposition $\mathbb{T}$ which contains a set $S \supseteq F$ of size at most $k+\ell$ in all bags and has width at most $|S|+1$ in polynomial time.
\end{lemma}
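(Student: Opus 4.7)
The plan is to exploit the fact that $G - F$ is an $\ell$-forest, i.e.\ it becomes a forest after deleting some edge set $E_\ell$ with $|E_\ell| \le \ell$. The forest part has a trivial width-$1$ tree decomposition, so I will take such a decomposition and ``enrich'' every bag with $F$ (to handle all edges incident to $F$) and with a single endpoint of each edge of $E_\ell$ (to handle the extra edges that create the $\ell$ cycles). This is exactly the parameter I am aiming for.

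More concretely, first fix a set $E_\ell \subseteq E(G - F)$ with $|E_\ell| \le \ell$ such that $T := (G - F) - E_\ell$ is a forest; this set exists and can be found in polynomial time because $G - F$ is an $\ell$-forest (for instance, by greedily deleting any edge that closes a cycle). For each edge $e \in E_\ell$, pick an arbitrary endpoint and put it in a set $P$, so $|P| \le \ell$. Define $S := F \cup P$; then $|S| \le k + \ell$ as required. Next, construct a tree decomposition $\mathbb{T}_0$ of the forest $T$ of width at most $1$ in polynomial time (each bag corresponds to an edge or isolated vertex of $T$, and the different trees of the forest are connected through empty bags). Finally, let $\mathbb{T}$ be obtained from $\mathbb{T}_0$ by adding every vertex of $S$ to every bag.

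The verification that $\mathbb{T}$ is a valid tree decomposition of $G$ is then routine and splits into three checks. Coverage: the bags of $\mathbb{T}_0$ cover $V \setminus F$ (vertices of $T$), and every bag now contains $F$, so all of $V$ is covered. Edge coverage: edges of $T$ are already covered in $\mathbb{T}_0$; for an edge $(u,v) \in E_\ell$, WLOG $u \in P \subseteq S$ lies in every bag, so in particular in any bag of $\mathbb{T}_0$ containing $v$; edges incident to $F$ are covered because both the $F$-endpoint (in $S$, hence in every bag) and the other endpoint (in some bag of $\mathbb{T}_0$) end up in that bag. Running intersection: for $v \in S$ the property is trivial since $v$ is in every bag, and for $v \notin S$ the occurrences are exactly the same as in $\mathbb{T}_0$, which was a valid tree decomposition. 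Each bag has at most $|S| + 2$ vertices (the $\le 2$ original vertices of $\mathbb{T}_0$ plus $S$), so $\tw(\mathbb{T}) \le |S| + 1$.

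The construction is entirely elementary, so there is no real technical obstacle; the only ``idea'' is the economical choice of taking a single endpoint per removed edge rather than both, which is what shaves the $\ell$ bound down from $2\ell$ and is essential for matching the claimed $|S| \le k + \ell$. Polynomial running time is immediate: locating $E_\ell$ by a greedy cycle-breaking procedure is polynomial, building a width-$1$ decomposition of a forest is polynomial, and augmenting every bag with $S$ touches only polynomially many bags.
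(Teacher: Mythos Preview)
Your proof is correct and follows essentially the same approach as the paper: find at most $\ell$ ``extra'' edges in the $\ell$-forest $G-F$, add one endpoint of each to $F$ to obtain $S$, build a width-$1$ tree decomposition of the remaining forest, and then put $S$ into every bag. The only cosmetic difference is that the paper removes the chosen endpoints before building the forest decomposition (working on $G[V\setminus S]$), whereas you leave them in $T$ and then redundantly add them via $S$; both yield the same width bound $|S|+1$.
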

\begin{proof}
Initially the set $S = F$. $G[V \setminus F]$ is an $\ell$-forest. Now, find any spanning tree of each connected component. We can see that the union of the spanning trees is the forest with maximum number of edges that spans $G[V \setminus F]$. Therefore, there can be at most $\ell$ edges that were left out from the forest since $G[V \setminus F]$ is an $\ell$-forest. Add one end-point from each of these leftover edges to the set $S$.
This set $S$ is now an {\sf fvs} of $G$ of size at most $k+\ell$. Therefore, we can construct a tree decomposition $\mathbb{T}$ of width $1$ of the forest $G[V \setminus S]$. Add the set $S$ to all bags of $\mathbb{T}$. Therefore, width of $\mathbb{T}$ is now at most $|S| + 1$. This completes our construction. It's easy to see from the description of the construction procedure that it takes polynomial time. 
\end{proof}

\begin{algorithm}[!ht]
\label{alg:afd_3k3l}
\DontPrintSemicolon
\caption{$\mathtt{RIAFD3k3l}$$(G,R,k,\ell)$}
\SetKwInOut{Input}{Input}\SetKwInOut{Output}{Output}
\SetKwData{afdcc}{RIAFDCutandCount}
\SetKwData{infs}{Infeasible}
\SetKwData{c}{count}
\SetKwData{flag}{flag}
\SetKw{Break}{break}
\SetKw{Return}{return}
\Input{Graph $G = (V,E)$, set $R$, parameters $k \le n$ and $\ell = \mathcal{O}(n^2)$.}
\Output{A {\sf riafd-set} $F$ of size at most $k$ or \infs.}
\Begin{
Order the vertices $V$ arbitrarily as $(v_1, v_2, \ldots, v_n)$ \;
$F \leftarrow \varnothing$\;
\For{$i = 1, 2, \ldots, n$}{
    $\mathbb{T} \leftarrow$ Compute the tree decomposition of $G\left[\left\{v_1, \ldots, v_{i-1}\right\}\right]$ by Lemma \ref{lem:afd_kltw} on input $F$\;
    $S\leftarrow$ F\;
    Add $v_i$ to all bags of $\mathbb{T}$ and to $S$\;
    $F \leftarrow$ $\mathtt{RIAFDCutandCount}$($\mathbb{T}$,$R$,$k$,$\ell$,$S$)\;
    \If{$F$ is \infs}{
        \Return \infs\;
    }
}
\Return $F$\;
}
\end{algorithm}

Now, we restate Theorem \ref{thm:afd} (1) and prove it. Also, note that we set $R=\varnothing$ in all the cases except for when there is an explicit requirement of a restricted set of vertices.

$\blacktriangleright$ \textbf{Theorem \ref{thm:afd} (1)}.
{\em The randomized algorithm \hyperref[alg:afd_3k3l]{$\mathtt{RIAFD3k3l}$} solves {\sc Almost Forest Deletion} in $\mathcal{O}^{\star}\left(3^{k}3^{\ell}\right)$ time and polynomial space with high probability.}
\begin{proof}
Suppose that there exists a {\sf riafd-set} $F^\star$ of size at most $k$. Let $\left(v_1, \ldots, v_i\right)$ be the ordering from Line $2$, and define $V_i := \left\{v_1, \ldots, v_i\right\}$. Observe that $F^\star \cap V_i$ is a {\sf riafd-set} of $G[V_i]$, so {\sc RIAFD} problem on Line $8$ is feasible. Line $8$ correctly computes a {\sf riafd-set} with high probability on any given iteration. Therefore, with high probability, such a {\sf riafd-set} for $G$ is returned by a union bound. 

We now bound the running time and space complexity. On Line $5$, the current set $F$ is a {\sf riafd-set} of $G[V_{i-1}]$, so Lemma \ref{lem:afd_kltw} guarantees tree decomposition $\mathbb{T}$ of width at most $k+\ell+1$, and adding $v_i$ to each bag on Line $7$ increases the width by at most $1$. Also Lemma \ref{lem:afd_kltw} guarantees a set $S \le k+\ell$ such that $\tw(\mathbb{T}) - |S| \le 1$ and adding $v_i$ to the set $S$ increases its size by $1$. Therefore, by Theorem \ref{thm:afdcc}, Line $8$ runs in time $\mathcal{O}^\star(3^{k+\ell})$ and space $\mathcal{O}^\star(1)$ as desired.  
\end{proof}

\subsection{Improving the Dependence on $k$}\label{sec:afd_improve_k}
In this subsection, we try to reduce the dependence on $k$ by allowing an increase in dependence on $\ell$. We use the method of \cite{li-soda20} using the outline described in Section \ref{sec:prelims}. 
Following are some simple reduction rules for {\sc RIAFD}:

\begin{definition}\label{red:afd1}
    \textbf{Reduction $1$:} Apply the following rules exhaustively, until the remaining graph has minimum vertex degree at least $2$:
    \begin{enumerate}
        \item Delete all vertices of degree at most one in the input graph.
        \item If $k<0$, then we have a no instance. If $k > 0$ \emph{and} $G$ is an $\ell$-forest, we have a yes instance. If $k=0$, we have a yes instance iff $G$ is an $\ell$-forest.
    \end{enumerate}
\end{definition}

\subsubsection{Dense Case}
Now we give a probabilistic reduction for {\sc RIAFD} that capitalizes on the fact that a large number of edges are incident to the {\sf riafd-set}. In particular, for a yes instance we focus on obtaining a probabilistic reduction that succeeds with probability strictly greater than $1/3$ so as to achieve a randomized algorithm running in time $\mathcal{O}^{\star}(3-\epsilon)^{k}$ with high probability.

\begin{definition} \label{red:afd2}
    \textbf{Reduction $2$ (P):} Assume that Reduction 1 does not apply and $G$ has a vertex of degree at least $3$. Sample a vertex $v \in V$ proportional to $\omega(v) := (deg(v)-2)$ if $v \notin R$, else $\omega(v) := 0$. That is, select each vertex with probability $\frac{\omega(v)}{\omega(V)}$. Delete $v$ and add its neighbours to $R$. Decrease $k$ by $1$.
\end{definition}

\begin{claim} \label{claim:afd_degf}
    Let $G$ be a graph, $F$ an {\sf afd-set} of $G$. Denote $\overline{F} := V \setminus F$. We have that,
    $$deg(\overline{F}) \le deg(F) + 2(|\overline{F}|-1+\ell) $$
\end{claim}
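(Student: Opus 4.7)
The plan is to prove the claim via a standard edge-counting identity combined with the structural bound on edges in an $\ell$-forest.

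First, I would write down the two basic handshake-style identities. For any subset $F \subseteq V$,
\[
deg(F) = 2\,|E(G[F])| + |E(F,\overline{F})|, \qquad deg(\overline{F}) = 2\,|E(G[\overline{F}])| + |E(F,\overline{F})|,
\]
since each edge inside $G[F]$ contributes twice to $deg(F)$, each edge inside $G[\overline{F}]$ contributes twice to $deg(\overline{F})$, and each edge of the cut $E(F,\overline{F})$ contributes once to each of $deg(F)$ and $deg(\overline{F})$. Subtracting these gives
\[
deg(\overline{F}) - deg(F) = 2\,|E(G[\overline{F}])| - 2\,|E(G[F])| \;\le\; 2\,|E(G[\overline{F}])|.
\]

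Next I would use the hypothesis that $F$ is an {\sf afd-set}, i.e.\ $G[\overline{F}]$ is an $\ell$-forest. By definition we can delete at most $\ell$ edges from $G[\overline{F}]$ to obtain a forest on $|\overline{F}|$ vertices, and any forest on $n$ vertices has at most $n-1$ edges. Hence
\[
|E(G[\overline{F}])| \;\le\; (|\overline{F}| - 1) + \ell.
\]
Substituting into the previous display yields $deg(\overline{F}) - deg(F) \le 2(|\overline{F}| - 1 + \ell)$, which is exactly the desired inequality.

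There is no real obstacle here; the statement is purely a bookkeeping consequence of double-counting edges and the trivial upper bound on edges in an $\ell$-forest. The only mild care needed is that we do not assume $G[\overline{F}]$ is connected — the bound $|\overline{F}|-1+\ell$ still holds, because even a disconnected forest on $n$ vertices has at most $n-1$ edges. This is the inequality the algorithm then plugs into the probabilistic analysis of Reduction~2, where the sampling weights $\omega(v) = deg(v) - 2$ are chosen precisely so that $\omega(\overline{F}) = deg(\overline{F}) - 2|\overline{F}|$ can be controlled against $\omega(F) = deg(F) - 2|F|$ using this very bound.
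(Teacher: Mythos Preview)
Your proof is correct and follows essentially the same approach as the paper: both arguments use the handshake decomposition of $deg(\overline{F})$ together with the edge bound $|E(G[\overline{F}])| \le |\overline{F}| - 1 + \ell$ for an $\ell$-forest. The only cosmetic difference is that the paper bounds the cross-term directly via $|E(F,\overline{F})| \le deg(F)$, whereas you subtract the two handshake identities so the cross-term cancels exactly and then drop the nonnegative $2|E(G[F])|$ term; these are equivalent.
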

\begin{claimproof}
    This proof is based on simple observations. Notice that $deg(\overline{F}) = 2E(\overline{F})+ E(\overline{F},F)$. As $G[\overline{F}]$ is an $\ell$-forest, $E(\overline{F}) \le |\overline{F}|-1+\ell$. Also, $E(\overline{F},F)\le deg(F)$. Therefore, 
    $$deg(\overline{F})\le 2(|\overline{F}|-1+\ell)+deg(F)$$
\end{claimproof}

\begin{lemma} \label{lem:afd_red2}
    Given a graph $G$, if there exists a {\sf riafd-set} $F$ of size $k$ such that $deg(F) \ge \frac{4-2\epsilon}{1-\epsilon}(k+\ell)$,  then success of Reduction 2, which is essentially sampling a vertex $v\in F$ occurs with probability at least $\frac{1}{3-\epsilon}$.
\end{lemma}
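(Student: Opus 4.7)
The plan is to write the success probability as a ratio of total weights and then use Claim~\ref{claim:afd_degf} together with the minimum-degree guarantee from Reduction~1 to control the weight on $\overline{F} = V \setminus F$. Since Reduction~2 picks vertex $v$ with probability $\omega(v)/\omega(V)$ and $\omega$ vanishes on $R$, the event ``$v \in F$'' has probability $\omega(F)/(\omega(F) + \omega(\overline{F}))$, so I just need a good lower bound on $\omega(F)$ and a matching upper bound on $\omega(\overline{F})$.

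First I would compute $\omega(F)$ exactly. Since an {\sf riafd-set} is by definition disjoint from $R$, every vertex of $F$ contributes $\deg(v)-2$, giving $\omega(F) = \deg(F) - 2k$. Next I would upper-bound $\omega(\overline{F})$. Because Reduction~1 has been applied, every vertex has degree at least $2$, so the summand $\deg(v)-2$ is nonnegative; hence discarding the restriction $v \notin R$ only increases the sum, yielding $\omega(\overline{F}) \le \deg(\overline{F}) - 2|\overline{F}|$. Now I plug in Claim~\ref{claim:afd_degf}, which gives $\deg(\overline{F}) \le \deg(F) + 2(|\overline{F}|-1+\ell)$, and the $2|\overline{F}|$ terms cancel cleanly to leave $\omega(\overline{F}) \le \deg(F) - 2 + 2\ell$.

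Putting these bounds together, the success probability is at least
\[
\frac{\deg(F) - 2k}{\bigl(\deg(F)-2k\bigr) + \bigl(\deg(F)-2+2\ell\bigr)} \;=\; \frac{\deg(F)-2k}{2\deg(F) - 2k + 2\ell - 2}.
\]
The remaining step is purely algebraic: rearranging the inequality $\tfrac{\deg(F)-2k}{2\deg(F)-2k+2\ell-2} \ge \tfrac{1}{3-\epsilon}$ reduces to $(1-\epsilon)\deg(F) \ge (4-2\epsilon)k + 2\ell - 2$. The hypothesis $\deg(F) \ge \tfrac{4-2\epsilon}{1-\epsilon}(k+\ell)$ gives $(1-\epsilon)\deg(F) \ge (4-2\epsilon)k + (4-2\epsilon)\ell$, and since $4-2\epsilon \ge 2$ for $\epsilon \le 1$, this dominates $(4-2\epsilon)k + 2\ell - 2$, closing the inequality.

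I do not expect any serious obstacle here: the only subtlety is keeping track of why it is \emph{safe} to upper bound $\omega(\overline{F})$ by dropping the $R$-restriction, which crucially uses $\deg(v) \ge 2$ from Reduction~1 to ensure all summands are nonnegative. The rest is a direct combination of the degree identity on $F$, the edge-counting bound from Claim~\ref{claim:afd_degf}, and one line of algebra.
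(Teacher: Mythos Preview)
Your proposal is correct and follows essentially the same route as the paper: compute $\omega(F)=\deg(F)-2k$, bound $\omega(\overline{F})$ via Claim~\ref{claim:afd_degf}, and finish with one line of algebra. The only cosmetic differences are that the paper works with the ratio $\omega(F)/\omega(\overline{F}) \ge \tfrac{1}{2-\epsilon}$ instead of the success probability directly, and it drops the $-2$ you retain; your extra remark on why the $R$-restriction can be discarded (using $\deg(v)\ge 2$ from Reduction~1) is a nice clarification the paper leaves implicit.
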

\begin{proof}
    Let $F \in V$ is a {\sf riafd-set} of $G$ of size exactly $k$. For Reduction $2$ to succeed with probability at least $\frac{1}{3-\epsilon}$, we need $\frac{\omega(F)}{\omega(\overline{F})} \ge \frac{1}{2-\epsilon}$.
    
    The value of $\omega(F)$ can be rewritten as,
    $$\omega(F) = \sum_{v \in F} (deg(v) - 2) = deg(F)-2k.$$
    By Claim \ref{claim:afd_degf} (as {\sf riafd-set} is also an {\sf afd-set}),
    $$\omega(\overline{F}) \le \sum_{v \in \overline{F}}(deg(v)-2) = deg(\overline{F}) - 2|\overline{F}| \le deg(F) + 2(|\overline{F}|- 1 + \ell) - 2|\overline{F}| \le deg(F) + 2\ell.$$
    
    Therefore, 
    $$\frac{\omega(F)}{\omega(\overline{F})} \ge \frac{deg(F)-2k}{deg(F)+2\ell} = 1 - \frac{2(k+\ell)}{deg(F)+2\ell} \operatorname*{\ge}\limits^{(\ell\ge 0)} 1 - \frac{2(k+\ell)}{deg(F)}. $$
    
    Hence, we need
    $$ 1-\frac{2(k+\ell)}{deg(F)} \ge \frac{1}{2-\epsilon} \iff deg(F) \ge \frac{4-2\epsilon}{1-\epsilon}(k+\ell).$$
\end{proof}

\subsubsection{Sparse Case}
For the sparse case, we first construct a small separator. Due to the presence of two variables ($k$ and $\ell$), we have to modify the small separator lemma in \cite{li-soda20} with a bivariate analysis. Also, though we are discussing {\sc RIAFD}, we will show how to construct a small separator assuming that we are given an {\sf afd-set}, as a {\sf riafd-set} is also an {\sf afd-set}.

\paragraph{Small Separator}
The main idea, as presented in \cite{li-soda20}, is to convert an {\sf afd-set} with small average degree into a good tree decomposition. In particular, suppose a graph $G$ has an {\sf afd-set} $F$ of size $k$ with $deg(F) \le \overline{d}(k+\ell)$, where $\overline{d} = \mathcal{O}(1)$. We show how to construct a tree decomposition of width $(1 - \Omega(1))k + (2 - \Omega(1))\ell$. Note that $\overline{d}$ is not exactly the average degree of $F$. This definition helps us to bound the width of the tree decomposition well.

Before constructing this separator, we will first see a construction of a $\beta$-separator of an $\ell$-Forest. We could use Lemma \ref{lem:generalized_beta_separator}, but the size of the separator obtained would be $\ell\cdot o(k)$ which is huge (treewidth $\le \ell$). We now give a method to construct a $\beta$-separator of size $\ell + o(k)$.

\begin{lemma}\label{lem:afd_betasep}
    Given an $\ell$-forest $T(V,E)$ on $n$ vertices with vertex weights $\omega(v)$, for any $\beta > 0$, we can delete a set $S$ of $\beta + \ell$ vertices in polynomial time so that every connected component of $T - S$ has total weight at most $\frac{\omega(V)}{\beta}$.
\end{lemma}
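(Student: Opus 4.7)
The plan is to decouple the two sources of slack in the bound $\beta + \ell$. Since $T$ is an $\ell$-forest, there exists a set $E'$ of at most $\ell$ edges whose removal from $T$ yields a forest; such $E'$ is found in polynomial time by computing any spanning forest of $T$ and letting $E'$ consist of the remaining edges. I first build a vertex cover of the graph $(V, E')$ of size at most $|E'| \le \ell$, by greedily picking one endpoint per still-uncovered edge, and place it in $S$. After this deletion every edge of $E'$ has at least one endpoint removed, so the induced subgraph on the surviving vertices is contained in $T - E'$ and is therefore an honest forest $F$ of total weight at most $W := \omega(V)$.

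Next I apply a rooted centroid-style procedure to $F$ to extract at most $\beta$ further vertices. Root each tree of $F$ arbitrarily, and for every vertex $v$ let $s(v)$ denote the weight of the subtree rooted at $v$ in the \emph{current} working forest. Repeatedly pick a deepest $v$ with $s(v) > W/\beta$, add $v$ to $S$, and delete its entire subtree from the working forest; by the deepest-vertex choice every child $c$ of $v$ satisfies $s(c) \le W/\beta$, so the subtrees hanging off those children become components of the final $T - S$ and each has weight at most $W/\beta$. Every iteration strips more than $W/\beta$ weight off the working forest, so the loop halts in at most $\beta$ iterations; when it halts, every subtree of the residual working forest has weight at most $W/\beta$, so the surviving component(s) of $T - S$ also satisfy the bound.

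In total $|S| \le \ell + \beta$, and both phases clearly run in polynomial time (a spanning-forest computation, then $\mathcal{O}(\beta)$ tree traversals to locate and process each deepest overweight vertex). The main subtlety, relative to directly invoking Lemma~\ref{lem:generalized_beta_separator}, is obtaining the coefficient $\beta$ rather than $2\beta$ on the forest part: that lemma yields $\beta(\tw + 1) = 2\beta$ because bags in the width-$1$ tree decomposition of a tree contain two vertices, and sidestepping this factor-two loss requires cutting at a \emph{single} chosen vertex per iteration instead of at an entire bag. The deepest-overweight-vertex rule above is precisely this refinement.
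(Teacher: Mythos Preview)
Your proof is correct and follows essentially the same approach as the paper: first delete at most $\ell$ vertices covering the non-spanning-forest edges, then on the resulting forest repeatedly pick a deepest vertex whose subtree weight exceeds $\omega(V)/\beta$ and remove it (the paper phrases both phases in the same way). Your closing remark explaining why this beats a direct application of Lemma~\ref{lem:generalized_beta_separator} by a factor of two is a nice addition that the paper does not spell out.
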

\begin{proof}
    Construct some spanning tree for each connected component of $T$, call this resultant forest $T'$. Let $X$ be the set of remaining edges which are not in $T'$. For each edge in $X$, delete one vertex from $T'$. As $|X| \le \ell$, we will delete at most $\ell$ vertices. The resultant will still be a forest, call it $T''$.
    
    Now, root every component of the forest $T''$ at an arbitrary vertex. Iteratively select a vertex $v$ of maximal depth whose subtree has total weight more than $\frac{\omega(V)}{\beta}$, and then remove $v$ and its subtree. The subtrees rooted at the children of $v$ have total weight at most $\frac{\omega(V)}{\beta}$, since otherwise, $v$ would not satisfy the maximal depth condition. Moreover, by removing the subtree rooted at $v$, we remove at least $\frac{\omega(V)}{\beta}$ total weight, and this can only happen $\beta$ times. Thus, we delete at most $\beta + \ell$ vertices overall.
\end{proof}

With the help of Lemma \ref{lem:afd_betasep}, we will now proceed to the small separator lemma.

\begin{lemma}\label{lem:afd_smallsep}{\rm (Small Separator).} Given an instance $(G, k, \ell)$ and an {\sf afd-set} $F$ of $G$ of size $k$, define $\overline{d} := \frac{deg(F)}{k+\ell}$, and suppose that $\overline{d} = \mathcal{O}(1)$. There is a randomized algorithm running in expected polynomial time that computes a separation $(A, B, S)$ of $G$ such that:
    \begin{enumerate}
        \item $|A \cap F|, |B \cap F| \ge (2^{-\overline{d}}-o(1))(k+\ell) - \ell$
        \item $|S| \le (1 + o(1))(k+\ell) - |A \cap F| - |B \cap F|$
    \end{enumerate}
\end{lemma}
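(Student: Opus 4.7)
The plan is to adapt the Li--Nederlof construction for \fvs to the $\ell$-forest setting, where the residual graph $G[V \setminus F]$ is an $\ell$-forest rather than a forest. The heart of the argument is to randomly 2-color $F$ and then use a balanced separator of $G[V \setminus F]$ to patch the two colored sides together into a true separation. The extra $\ell$ appears naturally as an additive term, both in the separator size (via Lemma~\ref{lem:afd_betasep}) and in the $-\ell$ slack in bound~(1).

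First I would independently color each $v \in F$ with color $\mathbf{L}$ or $\mathbf{R}$, each with probability $1/2$, yielding $F = F_L \cup F_R$. Next I would apply Lemma~\ref{lem:afd_betasep} to the $\ell$-forest $G[V \setminus F]$ with a weight function that counts edges toward $F$ (e.g.\ $\omega(v) := |N(v) \cap F|$) and with $\beta$ chosen to be $\omega(k+\ell)$ but $o(k+\ell)$ (say $\beta = (k+\ell)/\log(k+\ell)$). This produces a separator $S_0 \subseteq V \setminus F$ of size at most $\beta + \ell = \ell + o(k+\ell)$ such that each component $C$ of $G[V \setminus F] - S_0$ has total edge-weight to $F$ at most $\overline{d}(k+\ell)/\beta = o(k+\ell)$. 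I would then classify each component $C$ as pure-$\mathbf{L}$, pure-$\mathbf{R}$, or mixed according to the colors of its $F$-neighbors, and declare the separation: pure-$\mathbf{L}$ components and every $v \in F_L$ whose neighborhood touches only $F_L$ and pure-$\mathbf{L}$ components go into $A$; symmetrically for $B$; everything else (mixed components, $S_0$, and the ``bad'' vertices of $F$) is dumped into $S$. One can verify that no edge crosses $A$--$B$.

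The main quantitative step, and the one I expect to be the main obstacle, is showing that a $v \in F$ ends up in $A \cup B$ with probability essentially $2^{-\overline{d}}$. For this I would define, for each $v \in F$, its \emph{effective degree} to be the total number of $F$-vertices whose colors must agree with $v$'s color for $v$ to be placed safely: this counts each direct $F$-neighbor of $v$ once, and each component-neighbor contributes the distinct $F$-vertices adjacent to that component. Since each edge of $G$ incident to $F$ contributes at most $1$ to exactly one such effective degree (bar the $o(k+\ell)$ edges absorbed into $S_0$), the sum of effective degrees is at most $\deg(F) = \overline{d}(k+\ell)$. Conditioning on the color of $v$, the probability all effective neighbors match $v$'s color is exactly $2^{-\mathrm{effdeg}(v)}$; convexity of $x \mapsto 2^{-x}$ and linearity of expectation then give
\[
\mathbb{E}\bigl[|A \cap F|\bigr], \; \mathbb{E}\bigl[|B \cap F|\bigr] \;\ge\; \tfrac{1}{2}\cdot |F| \cdot 2^{-\overline{d}} - o(k+\ell) - \ell,
\]
where the $-\ell$ accounts for the at-most-$\ell$ additional vertices of $F$ that can be forced into $S$ due to the $\ell$ extra edges in the $\ell$-forest. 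Standard repetition/derandomization upgrades the expectation bound to a high-probability guarantee of $(2^{-\overline{d}}-o(1))(k+\ell)-\ell$, proving~(1). For bound~(2), observe $|S| \le |S_0| + |F \setminus (A \cup B)| \le \ell + o(k+\ell) + |F| - |A\cap F| - |B \cap F|$, and since $|F| = k \le k+\ell$ this rearranges to $(1+o(1))(k+\ell) - |A\cap F| - |B \cap F|$, as required. The delicate part is verifying that the effective-degree accounting really is bounded by $\deg(F)$ up to an $o(k+\ell)$ loss; this relies crucially on the fact that the weighted $\beta$-separator removes components whose aggregate $F$-incidence exceeds our budget, so each of the remaining components contributes only a small $\log$-factor to any single $v$'s effective degree.
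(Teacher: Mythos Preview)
Your overall strategy---random two-colouring plus a balanced separator of the $\ell$-forest $G[V\setminus F]$---has the right shape, but you colour the wrong side of the structure, and this breaks the quantitative accounting you rely on. The claim ``each edge of $G$ incident to $F$ contributes at most $1$ to exactly one such effective degree'' is false. Take a component $C$ of $G[V\setminus F]-S_0$ with $d$ distinct $F$-neighbours $u_1,\dots,u_d$. In your scheme $u_i$ is safe only if all of $u_1,\dots,u_d$ share $u_i$'s colour, so $C$ adds $d-1$ to the effective degree of \emph{every} $u_i$, a total of $d(d-1)$; yet $C$ may carry only $d$ edges to $F$. With your choice $\beta=(k+\ell)/\log(k+\ell)$ a component can have $d=\Theta(\log(k+\ell))$, so the average effective degree becomes $\Theta(\log(k+\ell))$ rather than $\overline d$, and the safeness probability collapses to $(k+\ell)^{-\Theta(1)}$. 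Even granting the accounting, your bound $\mathbb{E}\bigl[|A\cap F|\bigr]\ge\tfrac12\,|F|\cdot 2^{-\overline d}$ is a factor of two short of the target $(2^{-\overline d}-o(1))(k+\ell)-\ell$, since $v$'s own coin must also land correctly. (There is also a slip in the $|S|$ bound: you place mixed components into $S$ but then omit them from the count. This one is repairable---every $F$-neighbour of a mixed component is automatically bad, so mixed components can safely be dumped into $A$---but it is not what you wrote.)

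The paper avoids all of this by colouring the \emph{other} side. It builds a bipartite graph $H$ on $F\uplus\mathcal R$, where $\mathcal R$ has one vertex per component of $G-F-S_\epsilon$ and one subdivision vertex per edge of $E[F,F]$, and then two-colours $\mathcal R$ rather than $F$. Now $v\in F$ joins $A$ exactly when all of its $\mathcal R$-neighbours are red, an event of probability $2^{-\deg_H(v)}\ge 2^{-\deg_G(v)}$; crucially $\deg_H(v)$ is governed by $\deg_G(v)$ alone and is oblivious to how many other $F$-vertices share a component with $v$. Convexity over $F'$ (after inserting $\ell$ dummy terms $2^0$ and subtracting $\ell$) then gives exactly the stated bound, and concentration comes from a Hoeffding argument on colour classes of a proper colouring of the intersection graph on $F'$---not from bare repetition, which does not by itself force both $|A\cap F|$ and $|B\cap F|$ to be large in the same trial.
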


\begin{proof}
The proof will be similar to \cite{li-soda20} (Lemma $4$). First, we fix a parameter $\epsilon:= (k+\ell)^{-0.01}$ throughout the proof. Apply Lemma \ref{lem:afd_betasep} to the $\ell$-forest $G - F$ with $\beta = \epsilon (k+\ell)$ and vertex $v$ weighted by $|E[v, F]|$. Let $S_{\epsilon}$ be the output. Observe that:
$$|S_{\epsilon}| \le \ell+\epsilon(k+\ell) = \ell + o(k+\ell),$$
and every connected component $C$ of $G - F - S_{\epsilon}$ satisfies,
$$|E[C, F]| \le \frac{|E[\overline{F},F]|}{\epsilon(k+\ell)} \le \frac{deg(F)}{\epsilon(k+\ell)} \le \frac{\overline{d}(k+\ell)}{\epsilon(k+\ell)} = \frac{\overline{d}}{\epsilon}$$

Now form a bipartite graph $H$, as in \cite{li-soda20}, i.e., on the vertex bipartition $F \uplus \mathcal{R}$, where $F$ is the {\sf afd-set}, and there are two types of vertices in $\mathcal{R}$, the component vertices and the subdivision vertices. For every connected component $C$ in $G - F - S_{\epsilon}$, there is a component vertex $v_C$ in $\mathcal{R}$ that represents that component, and it is connected to all vertices in $F$ adjacent to at least one vertex in $C$. For every edge $e = (u, v)$ in $E[F,F]$, there is a vertex $v_e$ in $\mathcal{R}$ with $u$ and $v$ as its neighbours. Observe that: 
\begin{itemize}
    \item $|\mathcal{R}| \le |E[\overline{F}, F]| + 2|E[F,F]| = deg(F).$
    \item every vertex in $\mathcal{R}$ has degree at most $\frac{\overline{d}}{\epsilon}.$
    \item the degree of a vertex $v \in F$ in $H$ is at most $deg(v).$.
\end{itemize}
The algorithm that finds a separator $(A,B,S)$ works as follows. For each vertex in $\mathcal{R}$, color it red or blue uniformly and independently at random. Every component $C$ in $G - F - S_{\epsilon}$ whose vertex $v_C$ is colored red is added to $A$ in the separation $(A, B, S)$, and every component whose vertex $v_C$ is colored blue is added to $B$. Every vertex in $F$ whose neighbors are all colored red joins $A$, and every vertex in $F$ whose neighbors are all colored blue joins $B$. The remaining vertices in $F$, along with the vertices in $S_{\epsilon}$, comprise $S$. It is easy to see that $(A,B,S)$ is a separation.

\begin{claim}
    $(A,B,S)$ is a separation.
\end{claim}

We now show with good probability both conditions $(1)$ and $(2)$ hold. The algorithm can then repeat the process until both conditions hold.

\begin{claim}
    With probability at least $1-\frac{1}{k^{\mathcal{O}(1)}}$ condition $(1)$ holds for $(A,B,S)$.
\end{claim} 
\begin{claimproof}
Firstly, notice that $F$ has at most $\epsilon (k+\ell)$ vertices with degree at least $\frac{\overline{d}}{\epsilon}$. These can be ignored as they affect condition $(1)$ only by an additive $\epsilon(k+\ell) = o(k+\ell)$ factor. Let $F'$ be the vertices with degree at most $\frac{\overline{d}}{\epsilon}$. Now, consider the \emph{intersection graph} $I$ on vertices of $F'$ formed by connecting two vertices if they share a common neighbour (in $\mathcal{R}$). Since every vertex in $F'$ and all the component vertices have degree at most $\frac{\overline{d}}{\epsilon}$, the maximum degree of $I$ is at most $\left(\frac{\overline{d}}{\epsilon}\right)^2$. Color the vertices of $F'$ with $\left(\frac{\overline{d}}{\epsilon}\right)^2+1$ colors such that the vertices of the same color class form an independent set in $I$, using the standard greedy algorithm. Note that, within each color class, the outcome of each vertex whether it joins $A, B$ or $S$ is independent across vertices.

Let $F'_i$ be the set of vertices colored $i$. If $|F'_i| \le k^{0.9}$, then this color class can be ignored since the sum of all such $|F'_i|$ is at most $\left(\left(\frac{\overline{d}}{\epsilon}\right)^2+1\right)k^{0.9} = o(k)$ and this affects condition $(1)$ by an additive $o(k)$ factor. Henceforth, assume $|F'_i| \ge k^{0.9}$. Each vertex $v \in F'_i$ has at most $deg(v)$ neighbours in $H$. So, it can join $A$ with an independent probability of at least $2^{-deg(v)}$. Let $X_i = |F'_i \cap A|$, then by Hoeffding's inequality \footnote{We use the notation $exp(x)$ to denote the function $e^x$.},
$$\Pr[X_i \le E[X_i] - k^{0.8}]\le 2\cdot exp\left(-2\cdot\frac{\left(k^{0.8}\right)^2}{|F'_i|}\right) \le 2\cdot exp\left(-2\cdot\frac{k^{1.6}}{k}\right) \le \frac{1}{k^{\mathcal{O}(1)}}$$
for large enough $k$.

By a union bound over all the $\le k^{0.1}$ such color classes with $|F'_i| \ge k^{0.9}$, the probability that $|F'_i \cap A| \ge E[|F'_i \cap A|] - k^{0.8}$ for each $F'_i$ is $1 - \frac{1}{k^{\mathcal{O}(1)}}$. In this case,
\begin{eqnarray*}
|F \cap A| & \ge & \sum\limits_{i:|F'_i|\ge k^{0.9}} \left(E[|F'_i \cap A|] - k^{0.8}\right)\\
 & \ge & \sum\limits_{i:|F'_i|\ge k^{0.9}} \sum\limits_{v\in F'_i} \left(2^{-deg(v)}\right) - k^{0.1}\cdot k^{0.8}\\
 & = & \sum\limits_{v \in F'} 2^{-deg(v)} + \sum\limits_{j = 1}^{\ell} 2^{0} - \ell - o(k)\\
 & \ge & \left(|F'|+\ell\right)\cdot 2^{-\frac{deg(F')}{|F'|+\ell}} - \ell - o(k),
\end{eqnarray*}
where the last inequality follows from convexity of the function $2^{-x}$. Recall that $|F'| \ge k - o(k+\ell)$, and observe that $\frac{deg(F')}{|F'|+\ell} \le \frac{deg(F)}{k+\ell} = \overline{d}$ since the vertices in $F \setminus F'$ are vertices with degree greater than some threshold. Thus,
$$|F \cap A| \ge \left(k+\ell - o(k+\ell)\right)\cdot 2^{-\overline{d}} - l - o(k) \ge \left(2^{-\overline{d}} - o(1)\right)(k+\ell) - \ell,$$
proving condition $(1)$ for $A$. The argument for $|B \cap F|$ is symmetric.
\end{claimproof}

\begin{claim}
    With probability at least $1-\frac{1}{k^{\mathcal{O}(1)}}$ condition $(2)$ holds for $(A,B,S)$.
\end{claim}
\begin{claimproof}
    Note that at most $\ell + o(k+l)$ vertices in $S$ are from $S_{\epsilon}$, and the other vertices in $S$ are from the set $F \setminus ((F\cap A) \cup (F \cap B))$ which has size $k - |A \cap F|-|B \cap F|$. Thus, the overall size of $S \le \left(1+o(1)\right)(k+\ell)-|A\cap F|-|B\cap F|$
\end{claimproof}
\end{proof}

\begin{lemma}\label{lem:afd_smallsep_final}
     Let $G$ be a graph and $F$ be a {\sf afd-set} of $G$ of size $k$, and define $\overline{d} := \frac{deg(F)}{k+\ell}$. There is a randomized algorithm that, given $G$ and $F$, computes a tree decomposition of $G$ of width at most $(1-2^{-\overline{d}}+o(1))k + (2-2^{-\overline{d}}+o(1))\ell$, and runs in polynomial time in expectation.
\end{lemma}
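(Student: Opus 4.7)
The plan is to invoke Lemma~\ref{lem:afd_smallsep} once on $G$ with afd-set $F$ to obtain a separation $(A,B,S)$ satisfying its two conclusions, and then assemble a tree decomposition of $G$ directly from $(A,B,S)$ without any further recursion.

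The key observation that drives the width bound is that the separator $S$ does much more than just balance the partition. Inspecting the proof of Lemma~\ref{lem:afd_betasep}, which is what produces $S_{\epsilon}\subseteq S$, the set $S_{\epsilon}$ is built by first deleting one endpoint of each of the at most $\ell$ non-spanning-forest edges of the $\ell$-forest $G-F$, and only then running a balanced separator on the resulting forest. Hence $S\supseteq S_{\epsilon}$ already contains a feedback vertex set of $G-F$ of size at most $\ell$, and so $G-F-S$ is in fact a \emph{forest}, not merely an $\ell$-forest. In particular, $G[A\setminus F]$ and $G[B\setminus F]$ are forests as subgraphs of $G-F-S$, which means $F\cap A$ is an FVS of $G[A]$ and $F\cap B$ is an FVS of $G[B]$.

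Given this, the tree decomposition is built in the obvious way. On the $A$-side, take a width-$1$ tree decomposition of the forest $G[A\setminus F]$, add $F\cap A$ to every bag to upgrade to a tree decomposition of $G[A]$ of width at most $|F\cap A|+1$, and then add $S$ to every bag to extend this to a tree decomposition $\mathbb{T}_A$ of $G[A\cup S]$ of width at most $|S|+|F\cap A|+1$. Define $\mathbb{T}_B$ symmetrically. Since $S$ lies in every bag of both $\mathbb{T}_A$ and $\mathbb{T}_B$, joining the two by an arbitrary edge between one bag of each yields a valid tree decomposition of $G$: all $A$-$A$, $A$-$S$, $S$-$S$, $B$-$S$, and $B$-$B$ edges are covered, and there are no $A$-$B$ edges by the separation property. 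The width is therefore at most $\max\bigl(|S|+|F\cap A|,\,|S|+|F\cap B|\bigr)+O(1)$, and combining the two conclusions of Lemma~\ref{lem:afd_smallsep} gives
\[
|S|+|F\cap A|\;\le\;(1+o(1))(k+\ell)-|F\cap B|\;\le\;(1-2^{-\overline{d}}+o(1))(k+\ell)+\ell,
\]
which rearranges to $(1-2^{-\overline{d}}+o(1))k+(2-2^{-\overline{d}}+o(1))\ell$; the $B$-side is symmetric. The small-separator step runs in expected polynomial time by Lemma~\ref{lem:afd_smallsep}, and everything else is deterministic polynomial time.

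The subtle step that has to be argued explicitly is the ``$G-F-S$ is a forest'' observation in the second paragraph. Without it, one would have to settle for $\tw(G[A])\le |F\cap A|+\ell+1$ by treating the $\ell$-forest structure generically, which would blow the coefficient of $\ell$ in the final width up from $2-2^{-\overline{d}}$ to $3-2^{-\overline{d}}$ and miss the claimed bound by an additive $\ell$.
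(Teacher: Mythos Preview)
Your proposal is correct and essentially identical to the paper's proof: both invoke Lemma~\ref{lem:afd_smallsep}, observe (via the construction in Lemma~\ref{lem:afd_betasep}) that $S\supseteq S_\epsilon$ already kills the $\ell$ surplus edges of $G-F$ so that $G[A\cup S]-(F\cup S)=G[A\setminus F]$ is a forest, add $(F\cap A)\cup S$ to every bag of its width-$1$ decomposition, do the same on the $B$-side, and glue the two pieces together. Your explicit remark that without this observation the $\ell$-coefficient would be $3-2^{-\overline d}$ rather than $2-2^{-\overline d}$ is a nice addition that the paper leaves implicit.
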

\begin{proof}
    Compute a separation $(A, B, S)$ following Lemma \ref{lem:afd_smallsep}. Conditions $(1)$ and $(2)$ can be checked easily in polynomial time, so we can repeatedly compute a separation until they both hold.
    
    Notice that $G[A \cup S] - (F \cup S)$ is a forest, as $S_{\epsilon}$ includes the $\ell$ vertices corresponding to the $\ell$ extra edges of the $\ell$-Forest $G-F$. Thus, $(A \cap F) \cup S$ is a {\sf fvs} of $A \cup S$. The size of this {\sf fvs} is,
    $$|(A \cap F) \cup S| = |A \cap F|+ |S| \le (1+o(1))(k+\ell) - |B \cap F| \le (1 - 2^{-\overline{d}}+o(1))k + (2 - 2^{-\overline{d}}+o(1))\ell.$$
    
    Therefore, we can compute a tree decomposition of $G[A \cup S]$ of width $(1 - 2^{-\overline{d}}+o(1))k + (2 - 2^{-\overline{d}}+o(1))\ell$ as follows: start with a tree decomposition of width $1$ of the forest $G[A\cup S] - (F \cup S)$, and then add all vertices in $(A \cap F) \cup S$ to each bag. Call this tree decomposition of $G[A \cup S]$ as $\mathbb{T}_1$. Similarly, compute a tree decomposition of $G[B\cup S]$ in the same way, call it $\mathbb{T}_2$. 
    
    Since there is no edge connecting $A$ to $B$, we can construct the tree decomposition $\mathbb{T}$ of $G$ by simply adding an edge between an arbitrary node from $\mathbb{T}_1$ and $\mathbb{T}_2$. It is evident from the construction procedure that $\mathbb{T}$ is a valid tree decomposition of $G$ and it takes polynomial time to compute it.
\end{proof}

\begin{note}\label{note:afd_smallsep}
    Using the tree decomposition obtained in Lemma \ref{lem:afd_smallsep_final}, we can run the \emph{Cut \& Count} algorithm from Section \ref{sec:afd_cutcount}. But this will utilize exponential space. To get polynomial space, we use the idea from Claim \ref{claim:afdpoly}.
    In the proof of Lemma \ref{lem:afd_smallsep_final}, observe that the set $(A \cap F) \cup S$ is present in every bag of $\mathbb{T}_1$. Similarly, $(B\cap F) \cup S$ is present in every bag of $\mathbb{T}_2$. This observation is crucial for the proof of Lemma \ref{lem:afd_bafd1}.
\end{note}

As we are in the sparse case, there exists a {\sf riafd-set} $F$ of size $k$ with bounded degree, i.e., $deg(F) \le \overline{d}k$. We call this bounded version of the problem {\sc BRIAFD}. As we saw, the small separator helps in constructing a tree decomposition of small width, but requires that we are given an {\sf afd-set} of size $k$ and bounded degree. To attain this, we use an \emph{Iterative Compression} based procedure which at every iteration constructs a {\sf riafd-set} of size at most $k$ with bounded degree and uses it to construct the small separator. Using this small separator we construct a tree decomposition of small width and run a \emph{Cut \& Count} based procedure to solve bounded {\sc RIAFD} problem for the current induced subgraph, i.e, get a {\sf riafd-set} of size at most $k$ with bounded degree.

Now, we give the claimed \hyperref[alg:bafd1]{\texttt{BRIAFD1}} algorithm, which is a \emph{Cut \& Count} based algorithm which solves {\sc BRIAFD} given a small separator.

\begin{algorithm}[!ht]
\label{alg:bafd1}
\DontPrintSemicolon
\caption{\texttt{BRIAFD1}$(G,R,k,\ell,F,A,B,S,\overline{d})$}
\SetKwInOut{Input}{Input}\SetKwInOut{Output}{Output}
\SetKwData{afdcc}{RIAFDCutandCount}
\SetKwData{afdfcc}{RIAFD-FCCount}
\SetKwData{c}{count}
\SetKwData{ca}{countA}
\SetKwData{cb}{countB}
\SetKw{Break}{break}
\SetKwData{flag}{flag}
\SetKwFunction{tdp}{treewidthDP}
\SetKwData{infs}{Infeasible}
\SetKw{Return}{return}
\SetKwData{Heads}{Heads}
\SetKwFunction{BAFD}{\hyperref[bafd1]{BRIAFD1}}
\Input{Graph $G(V,E)$, a set $R$, an {\sf afd-set} $F$ of size at most $k+1$, the parameters $k, \overline{d} \le n$ and $\ell \le m$ and a separation $(A,B,S)$ from Lemma \ref{lem:afd_smallsep}.}
\Output{Either output a {\sf riafd-set} $F^\star$ of size at most $k$ satisfying $deg(F^\star) \le \overline{d}(|F^\star|+\ell)$, or conclude that one does not exist (\infs).}
\Begin{
    \For{$A \ge n - k$, $0 \le B \le A+\ell-1$, $W = i|V|^2 + d$ for some $d \le \overline{d}(n-A+\ell)$}{
        $t \leftarrow A - B + \ell + 1$\;
        \For{$n^{\mathcal{O}(1)}$ iterations}{
            \c $\leftarrow 0$\;
            Randomly initialize $\omega$ as stated in Lemma \ref{lem:afdccmain}\;
            Generate tree decompositions $\mathbb{T}_1$ and $\mathbb{T}_2$ as defined in proof of Lemma \ref{lem:afd_smallsep_final}\;
            \For{all possible $f: S \rightarrow \left\{\text{\textbf{F}},\text{\textbf{L}},\text{\textbf{R}}\right\}$}{
                \For{$W'$, $A'$, $B'$ such that $0\le W' \le W$, $0 \le A' \le A$, $0 \le B' \le B$}{
                \ca $\leftarrow 0$\;
                    \For{all possible $f_A: (A \cap F) \rightarrow \left\{\text{\textbf{F}},\text{\textbf{L}},\text{\textbf{R}}\right\}$}{
                        \ca $\leftarrow$ \ca $+$ $\mathtt{RIAFD-FCCount}$($\mathbb{T}_1$,$R$,$A'$,$B'$,$W'$,$f \uplus f_A$,$t$)\;
                    }
                    \cb $\leftarrow 0$\;
                    \For{all possible $f_B: (B \cap F) \rightarrow \left\{\text{\textbf{F}},\text{\textbf{L}},\text{\textbf{R}}\right\}$}{
                        \cb $\leftarrow$ \cb $+$ $\mathtt{RIAFD-FCCount}$($\mathbb{T}_2$,$R$,$A - A' +$\; $|f^{-1}\left(\{\text{\textbf{L}},\text{\textbf{R}}\}\right)|$,$B - B' + \left|E\left[f^{-1}\left(\{\text{\textbf{L}},\text{\textbf{R}}\}\right)\right]\right|$,$W - W' + \omega(f^{-1}(\textbf{F}))$,$f \uplus f_B$,$t$)\;
                    }
                    \c $\leftarrow$ \c $+$ \ca $.$ \cb\;
                }
            }
            \If{\c $\not\equiv 0 \ (\textrm{mod}\ 2^t)$}{
                $F^\star \leftarrow$ a {\sf riafd-set} of $G$ of size at most $k$ satisfying  $deg(F^\star) \le \overline{d}(|F^\star|+\ell)$ constructed using self-reduction\;
                \Return $F^\star$\;
            }
    }
}
\Return \infs\;

}
\end{algorithm}

\begin{lemma}\label{lem:afd_bafd1}
    Given a graph $G$, a set $R$, an {\sf afd-set} $F$ of $G$ of size at most $k+1$, parameter $\overline{d}$, and a separation $(A, B, S)$ as given by Lemma \ref{lem:afd_smallsep}, the Algorithm \hyperref[alg:bafd1]{\texttt{BRIAFD1}} outputs an {\sf riafd-set} $F^{\star}$ of size at most $k$ satisfying $deg(F^{\star}) \le \overline{d}(|F^{\star}|+\ell)$, or Infeasible if none exists. The algorithm uses $\mathcal{O}^{\star}(3^{(1-2^{-\overline{d}}+o(1))k}\cdot 3^{(2-2^{-\overline{d}}+o(1))\ell})$ time and polynomial space and succeeds with high probability.
\end{lemma}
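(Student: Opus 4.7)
The plan is to verify that the Cut \& Count DP of Section~\ref{sec:afd_cutcount} factors cleanly through the separator $(A,B,S)$ of Lemma~\ref{lem:afd_smallsep}, so that we can evaluate it bag‑by‑bag in polynomial space on two tree decompositions that each have a large common part, and then read off the running time from the separator size bounds. Throughout, fix integers $A$ (complement size), $B$ (edges in $G[V\setminus\text{solution}]$), $W$ (weight of $V\setminus\text{solution}$) and a random weight function $\omega$ as in Lemma~\ref{lem:afdccmain}. Build $\mathbb{T}_1$ and $\mathbb{T}_2$ as in Lemma~\ref{lem:afd_smallsep_final}: by Note~\ref{note:afd_smallsep}, $(A\cap F)\cup S$ lies in every bag of $\mathbb{T}_1$ and $(B\cap F)\cup S$ in every bag of $\mathbb{T}_2$, and both have width at most one more than the size of the respective common set.

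\textbf{Factorisation and correctness.} Because $(A,B,S)$ is a separation, $G$ is the union of $G[A\cup S]$ and $G[B\cup S]$ glued along $S$. For any coloring $f: S\to\{\textbf{F},\textbf{L},\textbf{R}\}$ and any $(X,(X_L,X_R))\in\mathcal{C}_W^{A,B}$ with $v\in S$ agreeing with $f$, the pair splits into the two induced pairs on $A\cup S$ and $B\cup S$, both of which are consistently cut subgraphs and respect $f$ on $S$. Conversely, any compatible pair of such subpairs on the two sides glues back to an element of $\mathcal{C}_{W,f}^{A,B}$. The only subtlety is double counting of $S$: vertices of $S$ with $f(v)\in\{\textbf{L},\textbf{R}\}$ are counted in both sides, and edges of $E[f^{-1}(\{\textbf{L},\textbf{R}\})]$ and weights $\omega(f^{-1}(\textbf{F}))$ are counted twice. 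This is exactly the compensation in Line~15 of the pseudocode: summing the product $\texttt{countA}\cdot\texttt{countB}$ over $(A',B',W')$ gives $|\mathcal{C}_{W,f}^{A,B}|\pmod{2^t}$, and summing over $f$ yields $|\mathcal{C}_W^{A,B}|\pmod{2^t}$ by Claim~\ref{claim:afdpoly}. An application of Lemma~\ref{lem:afdccmain} then gives: if this count is nonzero mod $2^{t}$ for some $(A,B,W)$ with $W\equiv d\pmod{|V|^2}$ and $d\le\overline{d}(n-A+\ell)$, a bounded-degree {\sf riafd-set} exists, and a unique such set is isolated with probability at least $1/2$ by the randomised weights. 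Polynomially many iterations boost this to high probability, and self‑reduction recovers the set.

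\textbf{Running time and space.} Each invocation of $\mathtt{RIAFD\text{-}FCCount}$ in Lines~12 and~15 fixes the colours on $(A\cap F)\cup S$ (resp.\ $(B\cap F)\cup S$) via $f\uplus f_A$ (resp.\ $f\uplus f_B$), which is the entire common part of $\mathbb{T}_1$ (resp.\ $\mathbb{T}_2$). By Claim~\ref{claim:afdpoly} each such call therefore runs in $\mathcal{O}^\star(3^{\,\tw(\mathbb{T}_i)-|(A\cap F)\cup S|})=\mathcal{O}^\star(3^{1})$ time and polynomial space. The outer enumerations cost
\[
3^{|S|}\cdot\bigl(3^{|A\cap F|}+3^{|B\cap F|}\bigr)\cdot n^{\mathcal{O}(1)}.
\]
From Lemma~\ref{lem:afd_smallsep} we have $|S|\le (1+o(1))(k+\ell)-|A\cap F|-|B\cap F|$ and $|B\cap F|\ge(2^{-\overline d}-o(1))(k+\ell)-\ell$, hence
\[
|S|+|A\cap F|\le (1+o(1))(k+\ell)-|B\cap F|\le \bigl(1-2^{-\overline d}+o(1)\bigr)k+\bigl(2-2^{-\overline d}+o(1)\bigr)\ell,
\]
and symmetrically for $|S|+|B\cap F|$. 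Multiplying by the polynomial overhead of the Cut \& Count ranges of $A,B,W,A',B',W'$ yields the claimed $\mathcal{O}^\star\bigl(3^{(1-2^{-\overline d}+o(1))k}\cdot 3^{(2-2^{-\overline d}+o(1))\ell}\bigr)$ time bound. Since each call to $\mathtt{RIAFD\text{-}FCCount}$ needs only polynomial space and we process the enumerations sequentially, storing only running sums \texttt{count}, \texttt{countA}, \texttt{countB}, the whole algorithm uses polynomial space.

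\textbf{Main obstacle.} The delicate part is getting the three compensation terms in the $\mathbb{T}_2$ call (for size, edge count and weight) exactly right, and verifying that the resulting bijection between the product of side‑decompositions and $\mathcal{C}_{W,f}^{A,B}$ is counted correctly modulo $2^t$ so that the non‑vanishing guarantee of Lemma~\ref{lem:afdccmain} is preserved through the factorisation. Once this bookkeeping is done, the time analysis is a direct substitution of the separator bounds of Lemma~\ref{lem:afd_smallsep}, and the space bound follows because we never materialise the full DP table, only the colour‑restricted versions provided by Claim~\ref{claim:afdpoly}.
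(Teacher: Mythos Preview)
Your proposal is correct and follows essentially the same approach as the paper's proof: factor $|\mathcal{C}_W^{A,B}|$ through the separation $(A,B,S)$ by enumerating colourings $f$ of $S$ and splits of the parameters, use Claim~\ref{claim:afdpoly} together with Note~\ref{note:afd_smallsep} to make each $\mathtt{RIAFD\text{-}FCCount}$ call polynomial, and then bound $3^{|S|+|A\cap F|}+3^{|S|+|B\cap F|}$ via the separator conditions of Lemma~\ref{lem:afd_smallsep}. Your write-up is in fact more explicit than the paper's (you spell out the compensation terms and the inequality $|S|+|A\cap F|\le(1-2^{-\overline d}+o(1))k+(2-2^{-\overline d}+o(1))\ell$), but the argument is the same.
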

\begin{proof}
    For the time bound, firstly notice that lines $12$ and $15$ take polynomial time due to the observation given in Note \ref{note:afd_smallsep} and Claim \ref{claim:afdpoly}. All other steps listed in the algorithm \texttt{BRIAFD1} are polynomial time except lines $8$, $11$ and $14$, which jointly give rise to $3^{|S|+|A \cap F|} + 3^{|S|+|B \cap F|}$ iterations. By the conditions $(1)$ and $(2)$ of the separator $(A,B,S)$ in Lemma \ref{lem:afd_smallsep}, we get the desired time bound. The space bound is evident from the description of the algorithm \texttt{BRIAFD1} and Claim \ref{claim:afdpoly}. Also, by Line $4$ and Lemma \ref{lem:randomalgo}, the algorithm succeeds with high probability.
    
    For the correctness, first we claim that at Line $16$, count $= |\mathcal{C}_W^{n-k,B}|$ for some $A$, $B$ and $W = in^2 + d$ (from Lemma \ref{lem:afdccmain}). To see the claim, observe that we are iterating over all possible mappings of $S$. For each mapping and every possible split of the parameters $W$ and $B$, the algorithm computes the number countA (resp. countB) denoting the ``extensions'' of the mapping in $G[A \cup S]$ (resp. $G[B \cup S]$) that respect the split, and then multiplies countA and countB. To see why these counts are multiplied, notice that there are no edges between $A$ and $B$. So, extending into $G[A \cup S]$ is independent to extending into $G[B \cup S]$. This along with the correctness of \texttt{RIAFD-FCCount} proves the claim, thereby proving the correctness.
\end{proof}

And now we give the Iterative Compression routine \hyperref[alg:afdic1]{\texttt{RIAFD\_IC1}}, as explained above, which solves {\sc BRIAFD}. 

\begin{algorithm}[!ht]
\label{alg:afdic1}
\DontPrintSemicolon
\caption{$\mathtt{RIAFD\_IC1}$$(G,R,k,\ell,\overline{d})$}
\SetKwInOut{Input}{Input}\SetKwInOut{Output}{Output}
\SetKwComment{Comment}{$\triangleright$ Invariant: }{}
\SetKwData{infs}{Infeasible}
\SetKwFunction{bafd}{BRIAFD1}
\SetKw{Return}{return}
\Input{Graph $G = (V,E)$, a set $R$ and parameters $k,\overline{d} \le n$ and $\ell \le m$ where $\overline{d} = \mathcal{O}(1)$.}
\Output{A {\sf riafd-set} $F^\star$ of $G$ of size at most $k$ satisfying $deg(F^\star) \le \overline{d}(|F^\star|+\ell)$ or \infs.}
\Begin{
Order the vertices $V$ in ascending order of degrees and call them $(v_1, v_2, \ldots, v_n)$ \;
$F^\star \leftarrow \varnothing$\;
\For{$i = 1, 2 \ldots, n$}{
    \Comment*[h]{$deg(F^\star) \le \overline{d}(|F^\star| + \ell)$}\;
    Compute a separation $(A,B,S')$ of $G\left[\left\{v_1, \ldots, v_{i-1}\right\}\right]$ by Lemma \ref{lem:afd_smallsep} using $F^\star, \overline{d}$\;
    $S \leftarrow S' \cup \left\{v_i\right\}$ so $(A,B,S)$ is a separation of $G\left[\left\{v_1, \ldots, v_{i}\right\}\right]$\;
    $F^\star \leftarrow$ \bafd($G$,$R, k, \ell, F^\star \cup \{v_i\}, A, B, S, \overline{d}$)\;
    \If{$F^\star$ is \infs}{
        \Return \infs\;
    }
    }

\Return $F^\star$\;
}
\end{algorithm}

\begin{lemma}\label{lem:afd_IC1}
    Algorithm \hyperref[alg:afdic1]{$\mathtt{RIAFD\_IC1}$} solves {\sc BRIAFD} in $\mathcal{O}^{\star}(3^{(1 - 2^{-\overline{d}}+o(1))k} \cdot 3^{(2 - 2^{-\overline{d}}+o(1))\ell})$ time and polynomial space.
\end{lemma}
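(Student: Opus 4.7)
The plan is to apply the standard iterative compression template with Lemma~\ref{lem:afd_bafd1} (i.e.\ \texttt{BRIAFD1}) as the workhorse. I would prove correctness by induction on $i$: after the $i$-th iteration, the variable $F^{\star}$ is a riafd-set of $G[V_i]$, where $V_i:=\{v_1,\ldots,v_i\}$, of size at most $k$ satisfying the bounded-degree invariant $\deg(F^{\star})\le\overline{d}(|F^{\star}|+\ell)$, or the algorithm has correctly returned Infeasible. The base case $i=0$ is immediate with $F^{\star}=\varnothing$.

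For the inductive step, $F^{\star}\cup\{v_i\}$ is an afd-set of $G[V_i]$ of size at most $k+1$, exactly what Lemma~\ref{lem:afd_smallsep} consumes; applying it produces a separation $(A,B,S')$, and augmenting $S'$ by $v_i$ gives a separation $(A,B,S)$ of $G[V_i]$ that fits the input interface of \texttt{BRIAFD1}. That call then returns, with high probability, either a riafd-set $F^{\star}$ of $G[V_i]$ of size at most $k$ meeting the degree bound, or Infeasible. To rule out spurious Infeasible reports, I would show that any hypothetical optimal solution $F^{\mathrm{opt}}$ of the whole \textsc{BRIAFD} instance restricts to a valid bounded-degree riafd-set $F^{\mathrm{opt}}\cap V_i$ of $G[V_i]$: independence, the $\ell$-forest property, and $R$-avoidance are all preserved by restriction to an induced subgraph, and $\deg_{G[V_i]}(F^{\mathrm{opt}}\cap V_i)\le \deg_{G}(F^{\mathrm{opt}})$.

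For the resource bounds, each iteration does only polynomial work outside of \texttt{BRIAFD1}, and the call itself runs in $\mathcal{O}^{\star}(3^{(1-2^{-\overline{d}}+o(1))k}\cdot 3^{(2-2^{-\overline{d}}+o(1))\ell})$ time and polynomial space by Lemma~\ref{lem:afd_bafd1}. The $n$ outer iterations inflate the running time only by a polynomial factor absorbed into $\mathcal{O}^{\star}$, and since no exponential state is carried across iterations the overall space stays polynomial; a union bound over the $n$ high-probability calls yields high-probability success overall.

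The main difficulty I anticipate is the clean propagation of the bounded-degree invariant from one iteration to the next: \texttt{BRIAFD1} guarantees $\deg(F^{\star})\le\overline{d}(|F^{\star}|+\ell)$, but the quantity $|F^{\star}|$ can fluctuate, and reconciling this with the cruder bound $\deg(F^{\mathrm{opt}}\cap V_i)\le\overline{d}(k+\ell)$ coming from restricting an optimal global solution requires a careful look at which $\overline{d}$ is being fed into Lemma~\ref{lem:afd_smallsep} at each step. The ascending-degree ordering on Line~2 is exactly the hook that keeps the marginal degree contributed by each new $v_i$ under control, so that the separator produced in the next round still satisfies the conditions of Lemma~\ref{lem:afd_smallsep}; this is the bookkeeping step I would write with the most care.
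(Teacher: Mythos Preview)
Your proposal is essentially the paper's proof: iterative compression, feasibility via restricting a global optimum, and the invariant $\deg(F^\star)\le\overline{d}(|F^\star|+\ell)$ carried forward by \texttt{BRIAFD1}'s output guarantee. One clarification is in order, though. The ascending-degree ordering is \emph{not} needed for the separator step---the separator is built from the previous $F^\star$, which already satisfies the invariant by \texttt{BRIAFD1}'s contract. The ordering is needed solely for the feasibility argument: your crude bound $\deg_{G[V_i]}(F^{\mathrm{opt}}\cap V_i)\le\deg_G(F^{\mathrm{opt}})\le\overline{d}(k+\ell)$ does \emph{not} give what \texttt{BRIAFD1} requires, namely $\deg(F_i)\le\overline{d}(|F_i|+\ell)$ with $F_i:=F^{\mathrm{opt}}\cap V_i$. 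The paper's fix is that, because $V_i$ collects the $i$ globally lowest-degree vertices, $F_i$ consists of the lowest-degree members of $F^{\mathrm{opt}}$, so the prefix average satisfies $\frac{\deg(F_i)}{|F_i|+\ell}\le\frac{\deg(F^{\mathrm{opt}})}{k+\ell}\le\overline{d}$; no per-step adjustment of $\overline{d}$ is needed.
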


\begin{proof}
    Suppose there exists a {\sf riafd-set} $F$ of size $k$ satisfying $deg(F) \le \overline{d}(k+\ell)$. Let $\left(v_1, \ldots, v_i\right)$ be the ordering from Line $2$, and define $V_i := \left\{v_1, \ldots, v_i\right\}$. Observe that $F \cap V_i$ is a {\sf riafd-set} of $G[V_i]$ of size at most $k$. Let $F_i = F \cap V_i$ and $|F_i| = k_i \le k$. Due to the ordering from Line $2$, $F_i$ are the vertices of least degrees in $F$. Thus, $\frac{deg(F_i)}{k_i+\ell} \le \frac{deg(F)}{k+l} \le \overline{d}$. Hence, {\sc BRIAFD} problem on Line $7$ is feasible.
    
    Line $7$ correctly computes a bounded degree {\sf riafd-set} of size at most $k$ with high probability, by Lemma \ref{lem:afd_bafd1}. Therefore, with high probability, a {\sf riafd-set} of size $k$ is returned.

     We now bound the running time. On Line $5$, the current set $F^\star$ is a {\sf riafd-set} of $G[V_{i-1}]$ satisfying $deg(F^\star) \le \overline{d}(|F^\star|+\ell)$, so Lemma \ref{lem:afd_smallsep_final} guarantees tree decompositions $\mathbb{T}_1$ and $\mathbb{T}_2$ of width at most $(1 - 2^{-\overline{d}} + o(1))k + (2 - 2^{-\overline{d}} + o(1))\ell$, and adding $v_i$ to each bag on Line $6$ increases the width by at most $1$. By Lemma \ref{lem:afd_bafd1}, Line $7$ runs in time $\mathcal{O}^{\star}\left(3^{(1-2^{-\overline{d}}+o(1))k}\cdot 3^{(2-2^{-\overline{d}}+o(1))\ell}\right)$, as desired. The space bound is evident from the description of \texttt{RIAFD\_IC1} and Lemma \ref{lem:afd_bafd1}.
\end{proof}

\paragraph{Three-Way Separator}
Similar to small separator, a bivariate analysis has to be done in the case of the Three-Way separator too. The outline of the analysis is similar to Lemma \ref{lem:afd_smallsep}.

\begin{lemma}\label{lem:afd_3waysep} {\rm (Three-Way Separator).} Given an instance $(G, k)$ and an {\sf afd-set} $F$ of size $k$, define $\overline{d} := \frac{deg(F)}{k+\ell}$, and suppose that $\overline{d} = \mathcal{O}(1)$. There is a randomized algorithm running in expected polynomial time that computes a three-way separation $(S_1, S_2, S_3, S_{1,2}, S_{1,3}, S_{2,3}, S_{1,2,3})$ of $G$ such that there exists values $f_1, f_2$ satisfying:
    \begin{enumerate}
        \item $f_1 k \ge (3^{-\overline{d}}-o(1))(k+\ell)-\ell$
        \item $f_1 k - o(k+\ell) \le |S_i \cap F| \le f_1 k + o(k+\ell)$ for all $i \in [3]$ 
        \item $(f_2 + 2f_1)k \ge \big((\frac{2}{3})^{\overline{d}} - o(1)\big)(k+\ell) - \ell$
        \item $f_2 k - o(k+\ell) \le |S_{i,j}| \le f_2 k + o(k+\ell)$ for all $1 \le i < j \le 3$
    \end{enumerate}
\end{lemma}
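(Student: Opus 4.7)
The plan is to mimic the construction in Lemma~\ref{lem:afd_smallsep} but with a three-color random palette instead of two, defining the seven-part partition by the set of colors appearing in each $F$-vertex's neighborhood in an auxiliary bipartite graph.

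First I would fix $\epsilon := (k+\ell)^{-0.01}$ and apply Lemma~\ref{lem:afd_betasep} to the $\ell$-forest $G - F$ with $\beta = \epsilon(k+\ell)$ and vertex weights $\omega(v) = |E[v, F]|$, obtaining a set $S_\epsilon$ of size at most $\ell + \epsilon(k+\ell) = \ell + o(k+\ell)$ such that every component $C$ of $G - F - S_\epsilon$ satisfies $|E[C, F]| \le \overline{d}/\epsilon$. I would then construct the bipartite graph $H$ on $F \uplus \mathcal{R}$ exactly as in Lemma~\ref{lem:afd_smallsep}: $\mathcal{R}$ contains one component vertex $v_C$ per component $C$ of $G - F - S_\epsilon$ (joined in $H$ to every $F$-vertex incident to $C$) and one subdivision vertex $v_e$ per edge $e \in E[F, F]$ (joined to the two endpoints of $e$).

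Next I would color each vertex of $\mathcal{R}$ uniformly and independently with a color from $\{1, 2, 3\}$ and read off the seven-part partition as follows: each component $C$ with $v_C$ colored $c$ is added to $S_{\{c\}}$; each vertex $v \in F$ is placed into $S_{I(v)}$, where $I(v) \subseteq \{1, 2, 3\}$ is the set of colors appearing on its $H$-neighbors; and the vertices of $S_\epsilon$ are placed into $S_{\{1,2,3\}}$. This is a valid three-way separator because any edge of $G$ whose endpoints lie in $S_I$ and $S_J$ is witnessed in $H$ by a common $\mathcal{R}$-neighbor (a component vertex for edges into components, the subdivision vertex $v_e$ for edges between two $F$-vertices) whose color lies in $I \cap J$; edges touching $S_{\{1,2,3\}}$ are trivially unproblematic. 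Define $f_1 := |S_{\{1\}} \cap F|/k$ and $f_2 := |S_{\{1,2\}}|/k$.

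For the size bounds, I would follow the concentration analysis of Lemma~\ref{lem:afd_smallsep} verbatim: discard the at most $\epsilon(k+\ell) = o(k+\ell)$ vertices of $F$ of $H$-degree exceeding $\overline{d}/\epsilon$ (they perturb every count by $o(k+\ell)$), partition the remainder $F'$ via a greedy $((\overline{d}/\epsilon)^2+1)$-coloring of its intersection graph, apply Hoeffding's inequality within each color class of size $\ge k^{0.9}$, and take a union bound. Each $v \in F'$ lies in $S_{\{c\}}$ with probability $3^{-\deg_H(v)}$, so the expectation of $|S_{\{c\}} \cap F|$ is $\sum_{v \in F'} 3^{-\deg_H(v)}$; the Jensen-with-$\ell$-dummy-zero-degree-terms trick from Lemma~\ref{lem:afd_smallsep} gives
\[
\sum_{v \in F'} 3^{-\deg_H(v)} \;\ge\; (|F'| + \ell)\cdot 3^{-\deg_H(F')/(|F'|+\ell)} - \ell \;\ge\; (3^{-\overline{d}} - o(1))(k+\ell) - \ell,
\]
yielding condition~$(1)$, with condition~$(2)$ then following from the symmetry of the three colors together with the concentration just established. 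For conditions~$(3)$--$(4)$, observe that $|S_{\{i\}} \cap F| + |S_{\{j\}} \cap F| + |S_{\{i,j\}}|$ counts exactly $\{v \in F : I(v) \subseteq \{i,j\}\}$, whose expectation is $\sum_{v \in F'} (2/3)^{\deg_H(v)}$; the same Jensen-plus-dummies step lower-bounds this by $((2/3)^{\overline{d}} - o(1))(k+\ell) - \ell$, translating (after dividing by $k$) into the $(f_2 + 2f_1)k$ bound of $(3)$, while concentration together with $(2)$ gives $(4)$. Repeating the random coloring until all four conditions hold gives the claimed expected-polynomial-time procedure.

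The main obstacle I foresee is purely bookkeeping: ensuring that the $o(k+\ell)$ error bars arising from (i)~the high-degree vertices dropped into $F \setminus F'$, (ii)~the small intersection-color classes, and (iii)~the $S_\epsilon$ vertices re-routed into $S_{\{1,2,3\}}$ can all be absorbed consistently across the four conditions, and in particular that the symmetry across the three colors legitimately lets us collapse three independent expectations to a single $f_1$ (and three to a single $f_2$) without loosening the inequalities.
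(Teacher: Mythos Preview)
Your proposal is correct and follows essentially the same approach as the paper's proof: three-color $\mathcal{R}$ uniformly, read off $S_I$ from the color set seen by each $F$-vertex (with components going to the singleton parts and $S_\epsilon$ absorbed into $S_{\{1,2,3\}}$), and reuse the intersection-graph-coloring/Hoeffding concentration and the Jensen-with-$\ell$-dummy-terms trick from the two-way lemma. The only cosmetic difference is that the paper takes $f_1,f_2$ to be the deterministic expected fractions $\sum_d p_d|F'_d|/|F'|$ rather than the observed fractions $|S_{\{1\}}\cap F|/k$ and $|S_{\{1,2\}}|/k$ you chose, but concentration makes these agree up to $o(k+\ell)$ so the distinction is immaterial.
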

\begin{proof}
    This proof is similar to \cite{li-soda20}(Lemma $14$) and uses the idea from Lemma \ref{lem:afd_smallsep}. Firstly, we start out the same: fix $\epsilon := (k+\ell)^{-0.01}$, apply Lemma \ref{lem:afd_betasep} on $G - F$ (to construct $S_\epsilon$), and construct the bipartite graph $H$ on the bipartition $F \uplus \mathcal{R}$ in the same way as in Lemma \ref{lem:afd_smallsep}. Recall that,
    \begin{itemize}
        \item $|\mathcal{R}| \le |E[\overline{F}, F]| + 2|E[F,F]| = deg(F)$.
        \item every vertex in $\mathcal{R}$ has degree at most $\frac{\overline{d}}{\epsilon}$.
        \item the degree of a vertex $v \in F$ in $H$ is at most $deg(v)$.
    \end{itemize}
    Now, instead of randomly two-coloring the vertex set $\mathcal{R}$, the algorithm three-colors it. That is, for each vertex in $\mathcal{R}$, color it with a color in $\{1, 2, 3\}$ chosen uniformly and independently at random. For each subset $I \subseteq 2^{[3]}\setminus \{\varnothing\}$, create a vertex set $S_I$ consisting of all vertices $v \in F$ whose neighborhood in $H$ sees the color set $I$ precisely. More formally, let $c(v)$ and $N(v)$ be the color of $v \in \mathcal{R}$ and the neighbors of $v$ in $H$, and define $S_I = \{v \in F: \sum\limits_{u\in N(v)} c(u) = I\}$. Furthermore, if $I$ is a singleton set $\{i\}$, then add (to $S_I$) all vertices in the connected components $C$ whose component vertex in $\mathcal{R}$ is colored $i$. Henceforth, we abuse notation, referring to sets $S_{\{1\}}, S_{\{1,2\}},$ etc. as $S_1, S_{1,2},$ etc.
    
    \begin{claim}\label{claim:afd_3way_1}
        $(S_1, S_2, S_3, S_{1,2}, S_{1,3}, S_{2,3}, S_{1,2,3})$ is a three-way separator.
    \end{claim} 
    
    \begin{claim}\label{claim:afd_3way_2}
        For $f_1 := \frac{\sum_d p_d |F'_d|}{|F'|}$, condition $(2)$ holds with probability at least $1 - \frac{1}{k^{\mathcal{O}(1)}}$.
    \end{claim}
    
    \begin{claim}\label{claim:afd_3way_3}
        For $f_2 := \frac{\sum_d p_d |F'_d|}{|F'|}$, condition $(4)$ holds with probability at least $1 - \frac{1}{k^{\mathcal{O}(1)}}$.
    \end{claim}
    
   In Claim  \ref{claim:afd_3way_3}, $p_d$ is the probability of a vertex to join $S_{i,j}$ for any $i,j\in[3]$ such that $i\neq j$. The proofs of Claims \ref{claim:afd_3way_1}, \ref{claim:afd_3way_2} and \ref{claim:afd_3way_3} are very similar to the proofs in \cite{li-soda20} and the proof of Lemma \ref{lem:afd_smallsep}. Hence, they are omitted.
    
    \begin{claim}\label{claim:afd_3way_4}
        For $f_1 := \frac{\sum_d p_d |F'_d|}{|F'|}$, condition $(1)$ holds with probability at least $1 - \frac{1}{k^{\mathcal{O}(1)}}$
    \end{claim}
    \begin{claimproof}
        Here, $p_d$ is the probability of a vertex with degree $d$ to join $S_i$ for any $i \in [3]$. It's easy to see that $p_d = 3^{-d}$. Observe that $\frac{deg(F')}{|F'|+\ell} \le \frac{deg(F)}{k+\ell} = \overline{d}$, since the vertices in $F\setminus F'$ are precisely vertices with degree exceeding some threshold, and $|F'| \ge k - o(k+\ell)$. Also, due to the convexity of the function $3^{-x}$, we get       
        \begin{eqnarray*}
            f_1k \ge f_1|F'| & = & \sum\limits_{d} |F'_d|\cdot 3^{-d}\\
             & = & \sum\limits_{v \in F'} 3^{-deg(v)} + \sum\limits_{j = 1}^{\ell}3^{0} - \ell\\
             & \ge & (|F'| + \ell)3^{-\frac{deg(F')}{|F'|+\ell}} - \ell\\
             & \ge & (3^{-\overline{d}} - o(1))(k+\ell) - \ell,
        \end{eqnarray*}
        proving condition $(1)$.
    \end{claimproof}
    \begin{claim}\label{claim:afd_3way_5}
        For $f_1 := \frac{\sum_d p_d |F'_d|}{|F'|}$ and $f_2 := \frac{\sum_d p_d |F'_d|}{|F'|}$, condition $(3)$ holds.
    \end{claim}
    
    \begin{claimproof}
        Let $q_d$ be the probability that a vertex $v$ of degree $d$ joins one of $S_i$, $S_2$ or $S_{1,2}$. Since this is also the probability that no neighbour of $v$ is colored $3$, we have $q_d = \big(\frac{2}{3}\big)^{d}$. Let $p_{1,d}$ and $p_{2,d}$ be the probabilities $p_d$ in the Claims \ref{claim:afd_3way_2} and  \ref{claim:afd_3way_3} respectively, so that $q_d = 2p_{1,d} + p_{2,d}$. Therefore,
        
        \begin{eqnarray*}
            2f_1 k + f_2 k \ge 2f_1|F'| + f_2|F'| & = & 2\cdot \sum\limits_{d} p_{1,d}\cdot|F'_d| + \sum\limits_{d} p_{2,d}\cdot|F'_d|\\
             & = & \sum\limits_{d}|F'|\cdot q_d\\
             & = & \sum\limits_{v \in F'} \left(\frac{2}{3}\right)^{deg(v)} + \sum\limits_{j = 1}^{\ell}3^{0} - \ell\\
             & \ge & (|F'| + \ell)\left(\frac{2}{3}\right)^{\frac{deg(F')}{|F'|+\ell}} - \ell,
        \end{eqnarray*}
        where the last inequality follows from convexity of $\left(\frac{2}{3}\right)^x$. Again, we have $\frac{deg(F')}{|F'|+\ell} \le \frac{deg(F)}{k+\ell} = \overline{d}$, and $|F'| \ge k - o(k+\ell)$. So,
        $$(f_2 + 2f_1)k \ge \left(\left(\frac{2}{3}\right)^{\overline{d}} - o(1)\right)(k+\ell) - \ell,$$
        proving condition $(3)$.
    \end{claimproof}
\end{proof}

We now describe the structure of the three-way separator in more detail which will help in designing the algorithm utilizing it. Let's say we are given a graph $G(V,E)$, an {\sf afd-set} $F$ of size at most $k+1$ and a three-way separation $(S_1, S_2, S_3, S_{1,2}, S_{2,3}, S_{2,3},S_{1,2,3})$ as in Lemma \ref{lem:afd_3waysep}. Let $f_1$ and $f_2$ be from the conditions of Lemma \ref{lem:afd_3waysep}. Define $f_3:=1-3f_1-3f_2$, so that $f_3k +\ell - o(k+\ell) \le |S_{1,2,3}|\le f_3k + \ell+ o(k+\ell)$.
    
Notice that $G[S_1 \cup S_{1,2}\cup S_{1,3} \cup S_{1,2,3}] - (F \cup S_{1,2,3})$ is a forest, as $S_{\epsilon}$ (from Lemma~\ref{lem:afd_3waysep}) includes the $\ell$ vertices corresponding to the $\ell$ extra edges of the $\ell$-Forest $G-F$. Thus, $(S_1 \cap F)\cup S_{1,2}\cup S_{1,3} \cup S_{1,2,3}$ is an {\sf fvs} of $S_1 \cup S_{1,2}\cup S_{1,3} \cup S_{1,2,3}$. The size of this {\sf fvs} is,
$$|(S_1 \cap F)\cup S_{1,2}\cup S_{1,3} \cup S_{1,2,3}| = |S_1 \cap F|+ |S_{1,2}|+|S_{1,3}|+ |S_{1,2,3}| \le (f_3+2f_2+f_1)k+\ell+o(k+\ell)$$
    
Therefore, we can compute a tree decomposition of $G[S_1 \cup S_{1,2}\cup S_{1,3} \cup S_{1,2,3}]$ of width $(f_3+2f_2+f_1)k+\ell+o(k+\ell)$ as follows: start with a tree decomposition of width $1$ of the forest $G[S_1 \cup S_{1,2}\cup S_{1,3} \cup S_{1,2,3}] - (F \cup S_{1,2,3})$, and then add all vertices in $(S_1 \cap F) \cup S_{1,2}\cup S_{1,3} \cup S_{1,2,3}$ to each bag. Call this tree decomposition $\mathbb{T}_1$. Similarly, we can compute a tree decomposition of $G[S_2 \cup S_{1,2}\cup S_{2,3} \cup S_{1,2,3}]$ and $G[S_3 \cup S_{1,3}\cup S_{2,3} \cup S_{1,2,3}]$ in the same way, call them $\mathbb{T}_2$ and $\mathbb{T}_3$ respectively. It is evident from the construction procedure it takes polynomial time to compute these tree decompositions.
    
\begin{note}\label{note:afd_3waysep}
    Observe that there is no edge connecting any pair among $S_1$, $S_2$ and $S_3$, and $S_{i,j}$ has neighbours only in $S_i$ and $S_j$. Also, the set $(S_1 \cap F) \cup S_{1,2}\cup S_{1,3} \cup S_{1,2,3}$ is present in every bag of $\mathbb{T}_1$. Similarly, $(S_2 \cap F) \cup S_{1,2}\cup S_{2,3} \cup S_{1,2,3}$ and $(S_3 \cap F) \cup S_{1,3}\cup S_{2,3} \cup S_{1,2,3}$ are present in every bag of $\mathbb{T}_2$ and $\mathbb{T}_3$ respectively. This observation and the three decompositions obtained will be crucial for the proof of Lemma \ref{lem:afd_bafd2}.
\end{note}

Similar to the two-way separator case, we now give the routines \hyperref[alg:bafd2]{\texttt{BRIAFD2}} and \hyperref[alg:afdic2]{\texttt{RIAFD\_IC2}} which will utilize the three-way separator.

\begin{algorithm}[!ht]
\DontPrintSemicolon
\caption{\texttt{BRIAFD2}$(G,R,k,\ell,F,S_1, S_2, S_3, S_{1,2}, S_{1,3}, S_{2,3}, S_{1,2,3},\overline{d})$}
\label{alg:bafd2}
\algorithmfootnote{Values of some variables are not assigned in the pseudocode above to maintain clarity. In the algorithm $w$, $a$, $b$ are variables to account for overcounting in $S_{1,2,3}$. If we define $s_1 = f^{-1}(\{\text{\textbf{L}}, \text{\textbf{R}}\})$ then $w = 2 \cdot \omega(S_{1,2,3} \setminus s_1)$, $a = 2 \cdot |s_1|$ and $b = 2 \cdot |E[s_1,s_1]|$. For the overcounting that takes place within $S_{1,2}$, $S_{2,3}$ and $S_{3,1}$ we define the variables $w_i$, $a_i$ and $b_i$ for $i \in [3]$. We take $w_1 = a_1 = b_1 = 0$. If we define $s_2 = f_2^{-1}(\{\text{\textbf{L}}, \text{\textbf{R}}\})$, then $w_2 = \omega(S_{1,2} \setminus s_2)$, $a_2 = |s_2|$, $b_2 = |E[s_2,s_2]|$. If we define $s_3 = f_1^{-1}(\{\text{\textbf{L}}, \text{\textbf{R}}\}) \uplus f_2^{-1}(\{\text{\textbf{L}}, \text{\textbf{R}}\})$, then $w_3 = \omega((S_{2,3} \uplus S_{3,1}) \setminus s_3)$, $a_3 = |s_3|$, $b_3 = |E[s_3,s_3]|$.}
\SetKwInOut{Input}{Input}\SetKwInOut{Output}{Output}
\SetKwData{afdcc}{RIAFDCutandCount}
\SetKwData{afdfcc}{RIAFD-FCCount}
\SetKwData{c}{count}
\SetKwData{ct}{count3}
\SetKwData{cz}{count0}
\SetKw{Break}{break}
\SetKwData{flag}{flag}
\SetKwFunction{tdp}{treewidthDP}
\SetKwData{infs}{Infeasible}
\SetKw{Return}{return}
\SetKwData{Heads}{Heads}
\SetKwFunction{BAFD}{\hyperref[alg:bafd2]{BRIAFD2}}
\Input{Graph $G = (V,E)$, a set $R$, an {\sf afd-set} of size at most $k+1$, the parameters $k, \overline{d} \le n$ and $\ell \le m$ and a separation $(S_1,S_2,S_3,S_{1,2},S_{1,3},S_{2,3},S_{1,2,3})$ from Lemma \ref{lem:afd_3waysep}.}
\Output{Either output a {\sf riafd-set} $F^\star$ of size at most $k$ satisfying $deg(F^\star) \le \overline{d}(|F^\star|+\ell)$, or conclude that one does not exist (\infs).}
\Begin{
    \For{$A \ge n - k$, $0 \le B \le A+\ell-1$, $W = i|V|^2 + d$ for some $d \le \overline{d}(n-A+\ell)$}{
        $t \leftarrow A - B + \ell + 1$\;
        \For{$n^{\mathcal{O}(1)}$ iterations}{
            \c $\leftarrow 0$\;
            Randomly initialize $\omega$ as stated in Lemma \ref{lem:afdccmain}\;
            Generate tree decompositions $\mathbb{T}_1$, $\mathbb{T}_2$ and $\mathbb{T}_3$ as stated in Note \ref{note:afd_3waysep}\;
            \For{all possible $f: S_{1,2,3} \rightarrow \left\{\text{\textbf{F}},\text{\textbf{L}},\text{\textbf{R}}\right\}$}{
                \For{nonnegative $W_i$, $A_i$, $B_i$, $i \in [3]$ such that $\sum_iW_i = W + w$, $\sum_iA_i = A + a$, $\sum_iB_i = B + b$}{
                    $H \leftarrow$ an empty graph with vertices indexed by $ \binom{S_{1,2}}{.,.,.}\cup\binom{S_{2,3}}{.,.,.} \cup \binom{S_{3,1}}{.,.,.}$\;
                    \For{$(i,j,k)$ in $\left\{(1,2,3),(2,3,1),(3,1,2)\right\}$}{
                        \For{all possible $f_1: S_{i,j} \rightarrow \left\{\text{\textbf{F}},\text{\textbf{L}},\text{\textbf{R}}\right\}$, $f_2: S_{i,k} \rightarrow \left\{\text{\textbf{F}},\text{\textbf{L}},\text{\textbf{R}}\right\}$}{
                            \ct $\leftarrow 0$\;
                            \For{all possible $f_3: S_{i} \cap F \rightarrow \left\{\text{\textbf{F}},\text{\textbf{L}},\text{\textbf{R}}\right\}$}{
                                \ct $\leftarrow$ \ct $+$ $\mathtt{RIAFD-FCCount}$($\mathbb{T}_i$, $R$, $A_i + a_i$, $B_i + b_i$, $W_i + w_i$, $f \uplus f_1 \uplus f_2\uplus f_3$)\;
                            }
                            Add edge $e$ between vertices $\left(f_1^{-1}(\text{\textbf{F}}),f_1^{-1}(\text{\textbf{L}}), f_1^{-1}(\text{\textbf{R}})\right)$ and $\left(f_2^{-1}(\text{\textbf{F}}),f_2^{-1}(\text{\textbf{L}}), f_2^{-1}(\text{\textbf{R}})\right)$ of $H$\;
                            Assign weight \ct $(\textrm{mod}\  2^t)$ to edge $e$\;
                        }
                    }
                    \cz $\leftarrow$ sum over the product of the three edges of all triangles of $H$\;
                    \c $\leftarrow$ \c $+$ \cz\;
                }
            }
            \If{\c $\not\equiv 0 \ (\textrm{mod}\ 2^t)$}{
                $F^\star \leftarrow$ a {\sf riafd-set} of $G$ of size $\le k$ satisfying  $deg(F^\star) \le \overline{d}(|F^\star|+\ell) $ constructed using self-reduction\;
                \Return $F^\star$\;
            }
        }
    }
\Return \infs\;
}
\end{algorithm}

\begin{lemma}\label{lem:afd_bafd2}
    Given a graph $G$, an {\sf afd-set} $F$ of $G$ of size at most $k+1$, parameter $\overline{d}$, and a three-way separation $(S_1, S_2, S_3, S_{1,2}, S_{1,3}, S_{2,3},S_{1,2,3})$ as given by Lemma \ref{lem:afd_3waysep}, the Algorithm \hyperref[alg:bafd2]{$\mathtt{BRIAFD2}$} outputs a {\sf riafd-set} of size at most $k$ satisfying $deg(F) \le \overline{d}(|F|+\ell)$, or Infeasible if none exists. The algorithm uses $\mathcal{O}^{\star}(3^{(1-min\{\left(\frac{2}{3}\right)^{\overline{d}},(3-\omega)\left(\frac{2}{3}\right)^{\overline{d}}+(2\omega-3)3^{-\overline{d}}\}+o(1))k}\cdot 3^{(1+\omega-((3-\omega)\left(\frac{2}{3}\right)^{\overline{d}}+(2\omega-3)3^{-\overline{d}})+o(1))\ell})$ time.
\end{lemma}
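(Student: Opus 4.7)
The plan is to follow the blueprint of Lemma \ref{lem:afd_bafd1}: argue that the algorithm computes $|\mathcal{C}_W^{n-k,B}|$ modulo $2^t$ for some admissible $(A,B,W)$ (giving correctness via Lemma \ref{lem:afdccmain}), and then bound the running time using the structural size bounds from Lemma \ref{lem:afd_3waysep}. For correctness, fix an outer coloring $f:S_{1,2,3}\to\{\text{\textbf{F}},\text{\textbf{L}},\text{\textbf{R}}\}$ and a split $(W_i,A_i,B_i)_{i\in[3]}$ of the global parameters (with the auxiliary corrections $w,a,b$ and $w_i,a_i,b_i$ stripping off the vertices/edges that would otherwise be counted multiply because $S_{1,2,3}$ and the $S_{i,j}$'s appear in more than one slab). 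Since the three-way separator has no edges between $S_i$ and $S_j$ for $i\ne j$, once the colorings of $S_{1,2,3}$, $S_{1,2}$, $S_{2,3}$, $S_{3,1}$ are pinned, any global consistent configuration decomposes into three \emph{independent} configurations, one per slab $\mathbb{T}_i$. The weight count3 stored on each edge of $H$ is exactly the number of slab-$i$ extensions consistent with the pinned colorings (summed over the private part $f_3\colon S_i\cap F\to\{\text{\textbf{F}},\text{\textbf{L}},\text{\textbf{R}}\}$), and is computed in polynomial time by \texttt{RIAFD-FCCount}: by Note \ref{note:afd_3waysep} each $\mathbb{T}_i$ carries a set of size $\tw(\mathbb{T}_i)-\OO(1)$ in every bag, precisely the set that $f\uplus f_1\uplus f_2\uplus f_3$ fixes, so Claim \ref{claim:afdpoly} applies.

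A triangle of $H$ corresponds to a mutually consistent triple of pair-colorings, and by the independence above the product of its three edge weights equals the number of global configurations consistent with that triple. Summing triangle-weight products over all triangles of the tripartite graph $H$ (done as the trace of $M_{12}M_{23}M_{31}$) thus yields $|\mathcal{C}_W^{n-k,B}|\pmod{2^t}$ after summing over $f$ and parameter splits, and Lemma \ref{lem:afdccmain} together with $n^{\OO(1)}$ repetitions gives the high-probability guarantee.

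For the running time, every \texttt{RIAFD-FCCount} call costs only polynomial time. Inside a single outer iteration, the two dominant costs are (i) computing all edge weights of $H$, at a cost $\OO^{\star}(3^{|S_{1,2}|+|S_{2,3}|+|S_{3,1}|}\cdot 3^{\max_i|S_i\cap F|})\le \OO^{\star}(3^{(2f_2+f_1)k+o(k+\ell)})$ by conditions (2) and (4) of Lemma \ref{lem:afd_3waysep}, and (ii) the matrix-product trace, at a cost $\OO\bigl((3^{\max_{ij}|S_{i,j}|})^{\omega}\bigr)\le \OO^{\star}(3^{\omega f_2 k+o(k+\ell)})$. Multiplying by the $3^{|S_{1,2,3}|}\le 3^{f_3 k+\ell+o(k+\ell)}$ outer iterations over $f$ and absorbing polynomial factors (from $(W,A,B,W_i,A_i,B_i)$) into $\OO^{\star}$, the total time is bounded by $\OO^{\star}\bigl(3^{(f_3+2f_2+f_1)k+\ell+o(k+\ell)} + 3^{(f_3+\omega f_2)k+\ell+o(k+\ell)}\bigr)$. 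Using $f_3=1-3f_1-3f_2$, one has the algebraic identities $f_3+2f_2+f_1=1-(2f_1+f_2)$ and $f_3+\omega f_2=1-(2\omega-3)f_1-(3-\omega)(2f_1+f_2)$; since $2\omega-3\ge 0$, conditions (1) and (3) of Lemma \ref{lem:afd_3waysep} give $f_3+2f_2+f_1\le 1-(2/3)^{\overline{d}}+o(1)$ and $f_3+\omega f_2\le 1-\bigl[(3-\omega)(2/3)^{\overline{d}}+(2\omega-3)3^{-\overline{d}}\bigr]+o(1)=1-Y+o(1)$. Taking the maximum of the two arms yields the claimed $k$-exponent $1-\min\{(2/3)^{\overline{d}},Y\}$, and propagating the $3^{\ell}$ factor from $|S_{1,2,3}|$ together with the $\ell/k$-slack hidden inside the lower bounds on $f_1$ and $2f_1+f_2$ (which arises from the trailing $-\ell$ in conditions (1) and (3)) delivers the stated $\ell$-exponent $1+\omega-Y$.

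The main obstacle I expect is the bookkeeping of the correction variables: verifying in detail that $w,a,b$ and $w_i,a_i,b_i$ precisely cancel the multiplicities of $S_{1,2,3}$ and the pair sets appearing in several \texttt{RIAFD-FCCount} calls, so that the product of the three slab counts equals the true count of global configurations. Closely related is tracking the $\ell$-terms cleanly through the chain of inequalities, so that the $\ell$ carried by $|S_{1,2,3}|$ combines with the $-\ell$ slacks in conditions (1) and (3) of Lemma \ref{lem:afd_3waysep} to give exactly the $(1+\omega-Y)\ell$ exponent rather than a larger quantity; the rest (tree-decomposition construction, polynomial-time per \texttt{RIAFD-FCCount} call, matrix multiplication) is routine once the correctness statement is crisply formulated.
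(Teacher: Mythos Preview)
Your approach matches the paper's proof essentially line for line: correctness via the tripartite graph $H$ and the triangle-sum/matrix-multiplication trick reducing to $|\mathcal{C}_W^{n-k,B}|\pmod{2^t}$, and the time bound via $3^{f_3k+\ell}\cdot(3^{(2f_2+f_1)k}+3^{\omega f_2 k})$ followed by the same algebraic substitution $f_3=1-3f_1-3f_2$ and conditions (1), (3) of Lemma~\ref{lem:afd_3waysep}.

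One small slip: you write the cost of filling the edge weights of $H$ as $\OO^{\star}\bigl(3^{|S_{1,2}|+|S_{2,3}|+|S_{3,1}|}\cdot 3^{\max_i|S_i\cap F|}\bigr)$, which would be $3^{(3f_2+f_1)k}$, and then bound it by $3^{(2f_2+f_1)k}$, which is smaller. The point is that the algorithm does not iterate over all three pair sets simultaneously; for each slab $i$ it iterates only over the two adjacent pair sets $S_{i,j},S_{i,k}$ and over $S_i\cap F$, giving $\sum_i 3^{|S_{i,j}|+|S_{i,k}|+|S_i\cap F|}\le \OO^{\star}(3^{(2f_2+f_1)k+o(k+\ell)})$. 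With that correction your running-time chain is identical to the paper's.
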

\begin{proof}
    For the time bound, firstly notice that lines $15$ takes polynomial time due to the observation given in Note \ref{note:afd_3waysep} and Claim \ref{claim:afdpoly}. Let $f_1,f_2$ and $f_3$ be from Lemma \ref{lem:afd_3waysep} and Note \ref{note:afd_3waysep}. For each of the $\mathcal{O}^{\star}(3^{f_3k + \ell + o(k+\ell)})$ iterations on Line $8$, building the graph $H$ (Lines $10-17$) takes time $\mathcal{O}^{\star}(3^{(2f_2+f_1)k+o(k+\ell)})$, and running matrix multiplication on Line $18$ on a graph with $\mathcal{O}^{\star}(3^{f_2k+o(k+\ell)})$ vertices to compute the sum over product of weights on the three edges of all triangles takes time $\mathcal{O}^{\star}(3^{\omega f_2 k + o(k+\ell)})$. Therefore, the total running time is
    
    \hspace{0.2cm}$\mathcal{O}^{\star}(3^{f_3k+\ell+o(k+\ell)}(3^{(2f_2+f_1)k+o(k+\ell)} + 3^{\omega f_2 k + o(k+\ell)}))$
    \begin{eqnarray*}
        &=&\mathcal{O}^{\star}(3^{(f_3+2f_2+f_1)k+\ell+o(k+\ell))} + 3^{(f_3+\omega f_2)k+l+o(k+\ell)})\\
        &=&\mathcal{O}^{\star}(3^{(1-f_2-2f_1)k+\ell+o(k+\ell))} + 3^{(1-(3-\omega) f_2-3f_1)k+l+o(k+\ell)})\\
        &=&\mathcal{O}^{\star}(3^{(1-(f_2+2f_1))k+\ell+o(k+\ell))} + 3^{(1-(3-\omega)(f_2+2f_1)-(2\omega-3)f_1)k+l+o(k+\ell)})\\
        &=&\mathcal{O}^{\star}(3^{(1-(f_2+2f_1))k+\ell+o(k+\ell))} + 3^{(1-(3-\omega)(f_2+2f_1)-(2\omega-3)f_1)k+l+o(k+\ell)})\\
        &\le &\mathcal{O}^{\star}((3^{1-(\frac{2}{3})^{\overline{d}}+o(1))k} + 3^{(1-((3-\omega)(\frac{2}{3})^{\overline{d}}+(2\omega-3)3^{-\overline{d}}+o(1))k})\cdot\\
        & & \:\:\:\:\:\:\:\:\:3^{(1+\omega-((3-\omega)(\frac{2}{3})^{\overline{d}}+(2\omega-3)3^{-\overline{d}})+o(1))\ell}),
    \end{eqnarray*}
    where the last inequality uses the conditions $(1)$ and $(3)$ of Lemma \ref{lem:afd_3waysep}, and the fact that $2\omega -3 \ge 0$. This gives the desired time bound.
    
    The proof of correctness is similar to proof of Lemma $15$ in \cite{li-soda20}. We claim that at Line $19$, count $= |\mathcal{C}_W^{n-k,B}|$ for some $A$, $B$ and $W = in^2 + d$ (from Lemma \ref{lem:afdccmain}). First observe that there is no edge between $S_1$ and $S_{2,3}$. So, number of extensions of $S_1$ only depend on $S_{1,2}$ and $S_{1,3}$. For each mapping of $S_{1,2} \cup S_{1,3}$, imagine adding an edge between the respective mappings in the graph $H$, with weight as the number of extensions in $S_1$. Proceed analogously in $S_2$ and $S_3$. Thus, $H$ will be a tripartite graph. Now, merging the solutions, i.e. finding the total number of extensions (for a fixed mapping of $S_{1,2,3}$), amounts to computing the sum over product of weights of three edges forming triangles in $H$, which can be solved using a standard matrix multiplication routine. This along with correctness of \texttt{RIAFD-FCCount} completes the proof of the claim, thereby completing the proof of correctness.
\end{proof}

\begin{algorithm}[!ht]
\label{alg:afdic2}
\DontPrintSemicolon
\caption{$\mathtt{RIAFD\_IC2}$$(G,R,k,\ell,\overline{d})$}
\SetKwInOut{Input}{Input}\SetKwInOut{Output}{Output}
\SetKwComment{Comment}{$\triangleright$ Invariant: }{}
\SetKwData{infs}{Infeasible}
\SetKwFunction{bafd}{BRIAFD2}
\SetKw{Return}{return}
\Input{Graph $G = (V,E)$, a set $R$ and parameters $k,\overline{d} \le n$ and $\ell \le m$ where $\overline{d} = \mathcal{O}(1)$.}
\Output{A {\sf riafd-set} $F^\star$ of size at most $k$ satisfying $deg(F^\star) \le \overline{d}(|F^\star|+\ell)$ or \infs.}
\Begin{
Order the vertices $V$ in ascending order of degrees and call them $(v_1, v_2, \ldots, v_n)$ \;
$F^\star \leftarrow \varnothing$\;
\For{$i = 1, 2, \ldots, n$}{
\Comment*[h]{$deg(F^\star) \le \overline{d}(|F^\star| + \ell)$}\;
    Compute a separation $(S_1,S_2, S_3, S_{1,2}, S_{1,3}, S_{2,3},S'_{1,2,3})$ of $G\left[\left\{v_1, \ldots, v_{i-1}\right\}\right]$ by Lemma \ref{lem:afd_3waysep} for given $F^\star$, $\overline{d}$\;
    $S_{1,2,3} \leftarrow S'_{1,2,3} \cup  \left\{v_i\right\}$, so $(S_1,S_2, S_3, S_{1,2}, S_{1,3}, S_{2,3},S_{1,2,3})$ is a three-way separation of $G\left[\left\{v_1, \ldots, v_{i}\right\}\right]$\;
    $F^\star \leftarrow$ \bafd($G, R, k, \ell, F^\star \cup \{v_i\}, S_1,S_2, S_3, S_{1,2}, S_{1,3}, S_{2,3},S_{1,2,3}, \overline{d}$)\;
    \If{$F^\star$ is \infs}{
        \Return \infs\;
    }
    }
    \Return $F^\star$\;
    
}
\end{algorithm}

\begin{lemma}\label{lem:afd_IC2}
    Algorithm \hyperref[alg:afdic2]{$\mathtt{RIAFD\_IC2}$} solves {\sc BRIAFD} in\\    $\mathcal{O}^{\star}(3^{(1-min\{(\frac{2}{3})^{\overline{d}},(3-\omega)(\frac{2}{3})^{\overline{d}}+(2\omega-3)3^{-\overline{d}}\}+o(1))k}\cdot 3^{(1+\omega-((3-\omega)(\frac{2}{3})^{\overline{d}}+(2\omega-3)3^{-\overline{d}})+o(1))\ell})$ time.
\end{lemma}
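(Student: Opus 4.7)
The plan is to mimic the proof of Lemma~\ref{lem:afd_IC1} essentially line for line, replacing the call to \texttt{BRIAFD1} by \texttt{BRIAFD2} and the small separator of Lemma~\ref{lem:afd_smallsep_final} by the three-way separator of Lemma~\ref{lem:afd_3waysep}. The only conceptual ingredient we rely on beyond what was already used for \texttt{RIAFD\_IC1} is Lemma~\ref{lem:afd_bafd2}, which gives the precise running time of \texttt{BRIAFD2} for a single iteration.

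First I would establish correctness. Suppose a {\sf riafd-set} $F$ of size at most $k$ with $\mathrm{deg}(F)\le \overline{d}(k+\ell)$ exists. Let $V_i = \{v_1,\ldots,v_i\}$ be the prefix in the ordering computed on Line~$2$ and set $F_i = F\cap V_i$, $k_i = |F_i|$. Clearly $F_i$ is a {\sf riafd-set} of $G[V_i]$. Since the ordering is by ascending degree, $F_i$ consists of vertices of smallest degree among $F$, so
\[
\frac{\mathrm{deg}(F_i)}{k_i + \ell} \;\le\; \frac{\mathrm{deg}(F)}{k+\ell} \;\le\; \overline{d}.
\]
Hence the \textsc{BRIAFD} instance on Line~$7$ is feasible at every iteration, and by Lemma~\ref{lem:afd_bafd2} the call to \texttt{BRIAFD2} returns a valid bounded-degree {\sf riafd-set} of $G[V_i]$ with high probability. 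A union bound over the $n$ iterations yields a {\sf riafd-set} of $G$ with high probability, and the Infeasible branch on Line~$9$ is only taken if no such set exists (so we never report false positives).

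For the running time, invariantly before iteration $i$ the set $F^\star$ is a {\sf riafd-set} of $G[V_{i-1}]$ satisfying $\mathrm{deg}(F^\star)\le \overline{d}(|F^\star|+\ell)$, so Lemma~\ref{lem:afd_3waysep} applied to $G[V_{i-1}]$ with $F^\star$ produces a three-way separation $(S_1,S_2,S_3,S_{1,2},S_{1,3},S_{2,3},S'_{1,2,3})$ satisfying conditions $(1)$--$(4)$ of that lemma. Moving $v_i$ into $S_{1,2,3}$ (Line~$6$) keeps this a valid three-way separation of $G[V_i]$ and increases $|S_{1,2,3}|$ by only $1$, which affects the bounds by additive $o(k+\ell)$. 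Feeding this separation to \texttt{BRIAFD2} on Line~$7$ then runs in the time asserted by Lemma~\ref{lem:afd_bafd2}, namely
\[
\cO^\star\bigl(3^{(1-\min\{(2/3)^{\overline{d}},(3-\omega)(2/3)^{\overline{d}}+(2\omega-3)3^{-\overline{d}}\}+o(1))k}\cdot 3^{(1+\omega-((3-\omega)(2/3)^{\overline{d}}+(2\omega-3)3^{-\overline{d}})+o(1))\ell}\bigr).
\]
All other steps inside the loop are polynomial, and the loop runs $n$ times, so the total running time matches the claimed bound up to a polynomial factor absorbed in $\cO^\star$.

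I do not anticipate a significant obstacle: the entire argument is a direct bookkeeping transfer of the \texttt{RIAFD\_IC1} proof (Lemma~\ref{lem:afd_IC1}), with the non-trivial work already done in Lemma~\ref{lem:afd_3waysep} (existence of the three-way separator with the stated balance conditions) and Lemma~\ref{lem:afd_bafd2} (correctness and complexity of one invocation of \texttt{BRIAFD2}). The only point to be slightly careful about is that the ordering by ascending degree is what forces the feasibility invariant $\mathrm{deg}(F_i)/(k_i+\ell)\le \overline{d}$ to be inherited from the global invariant, and that adding a single vertex to $S_{1,2,3}$ in Line~$6$ does not disturb the asymptotic width bounds on $\mathbb{T}_1,\mathbb{T}_2,\mathbb{T}_3$ underlying Lemma~\ref{lem:afd_bafd2}.
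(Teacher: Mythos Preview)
Your proposal is correct and follows exactly the approach the paper indicates: the paper omits the proof, stating only that it is similar to the proof of Lemma~\ref{lem:afd_IC1}, and your argument is precisely that transfer---replacing the two-way separator (Lemma~\ref{lem:afd_smallsep}) by the three-way separator (Lemma~\ref{lem:afd_3waysep}) and the per-iteration call \texttt{BRIAFD1} by \texttt{BRIAFD2} (Lemma~\ref{lem:afd_bafd2}), while reusing the ascending-degree ordering to maintain feasibility of the bounded-degree invariant.
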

The proof is similar to the proof of Lemma \ref{lem:afd_IC1}, hence it is omitted.

\subsubsection{Algorithms for {\sc RIAFD}}
Having described the \emph{Dense} and the \emph{Sparse} Cases, we now combine them to give the final randomized algorithms.

\paragraph{$2.85^k8.54^\ell$ Algorithm in Polynomial Space}
Now, we give the Algorithm \hyperref[alg:afd1]{\texttt{RIAFD1}$(G,k,\ell)$}, which is the complete randomized algorithm combining the Dense and the Sparse Cases (small separator).

\begin{algorithm}[!ht]
\label{alg:afd1}
\DontPrintSemicolon
\caption{\texttt{RIAFD1}$(G,R,k,\ell)$}
\SetKwInOut{Input}{Input}\SetKwInOut{Output}{Output}
\SetKwFunction{tdp}{treewidthDP}
\SetKwData{infs}{Infeasible}
\SetKw{Return}{return}
\SetKwData{Heads}{Heads}
\SetKwFunction{IC}{\hyperref[alg:afdic1]{RIAFD\_IC1}}
\SetKwFunction{AFD}{\hyperref[alg:afd1]{RIAFD1}}
\Input{Graph $G = (V,E)$, a set $R$, two parameters $k \le n$ and $\ell \le m$.}
\Output{Either output a {\sf riafd-set} $F$ of size at most $k$, or (possibly incorrectly) conclude that one does not exist (\infs).}
\Begin{
\For{$0 \le k' \le k$}{
Exhaustively apply Reduction 1 to $(G,R,k',\ell)$ to get vertex set $F'$ and the instance $(G',R,k'',\ell)$\;
$\overline{d} \leftarrow (4 - 2\epsilon)/(1 - \epsilon)$\;
Flip a coin with \Heads probability $3^{-(1 - 2^{\overline{d}})k''} \cdot 3^{-(2 - 2^{\overline{d}})\ell}$\;
\If{coin flipped \Heads}{
$F \leftarrow$ \IC{$G'$,$R$,$k''$,$\ell$, $\overline{d}$}\;}
\Else{
Apply Reduction 2 to $(G',R', k'', \ell)$ to get vertex $v\in V$ and instance $(G'',R'',k'' - 1,\ell)$\;
$F \leftarrow$ \AFD{$G''$,$R''$, $k''-1$, $\ell$} $\cup \left\{v\right\}$\;
}
\If{$F \cup F'$ is not \infs}{\Return $F \cup F'$}
}
\Return \infs}
\end{algorithm}

\begin{lemma}\label{lem:afd_AFD1}
Fix the parameter $\epsilon \in (0,1)$ and $\overline{d}:=\frac{4-2\epsilon}{1-\epsilon}$, let $c_k:=max\big{\{}3-\epsilon,3^{1-2^{-\overline{d}}}\big{\}}$ and $c_\ell:=3^{2-2^{-\overline{d}}}$. Then \hyperref[alg:afd1]{$\mathtt{RIAFD1}$$(G,k,\ell)$} succeeds with probability at least $\frac{c_k^{-k}c_\ell^{-\ell}}{k}$ and has $\mathcal{O}^{\star}(3^{o(k+\ell)})$ expected running time.
\end{lemma}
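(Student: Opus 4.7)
The plan is to decompose the statement into two independent bookkeeping calculations: an expected running time bound and a success probability bound. Both will be obtained by a recursive analysis that traces the biased coin flip at each invocation of \hyperref[alg:afd1]{\texttt{RIAFD1}}. The key design feature to exploit is that the Heads probability on Line~5 was chosen precisely so that it cancels with the cost of the sparse-case subroutine from Lemma~\ref{lem:afd_IC1}.

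For the expected running time, I would trace a single call to the recursion and observe that the only super-polynomial work happens inside \hyperref[alg:afdic1]{$\mathtt{RIAFD\_IC1}$}. At recursion depth $i$, the current parameter satisfies $k'' \le k - i$, and Lemma~\ref{lem:afd_IC1} bounds that subroutine's cost by $\mathcal{O}^{\star}(3^{(1-2^{-\overline{d}}+o(1))(k-i)} \cdot 3^{(2-2^{-\overline{d}}+o(1))\ell})$. Since $\mathtt{RIAFD\_IC1}$ is invoked only on Heads (probability $3^{-(1-2^{-\overline{d}})(k-i)} \cdot 3^{-(2-2^{-\overline{d}})\ell}$), the expected contribution at level $i$ collapses to $\mathcal{O}^{\star}(3^{o(k+\ell)})$. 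Summing over the at most $k$ recursion levels and the $k+1$ outer-loop iterations, and noting that Reduction~1 and Reduction~2 run in polynomial time, yields the claimed $\mathcal{O}^{\star}(3^{o(k+\ell)})$ expected bound.

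For the success probability, I would fix an optimal {\sf riafd-set} $F^{\star}$ of size at most $k$ and track a specific ``correct'' execution path: at every recursive call, if the current reduced instance is \emph{dense} for $F^{\star}$ (i.e.\ $\deg(F^\star \cap V') > \overline{d}(k''+\ell)$) we want Tails followed by Reduction~2 sampling into $F^\star$, and otherwise we want Heads followed by a successful $\mathtt{RIAFD\_IC1}$ call. Letting $T$ be the number of dense-case steps before entering the sparse case (so $0 \le T \le k$) and writing $\alpha := 1 - 2^{-\overline{d}}$, $\beta := 2 - 2^{-\overline{d}}$, the probability of this path factors as
\[
  \underbrace{(1-o(1))^T}_{\text{Tails flips}} \cdot \underbrace{(3-\epsilon)^{-T}}_{\text{Lemma \ref{lem:afd_red2} samples}} \cdot \underbrace{3^{-\alpha(k-T)}\cdot 3^{-\beta\ell}}_{\text{final Heads flip}} \cdot \underbrace{\Omega(1)}_{\text{Lemma \ref{lem:afd_IC1}}}\; =\; \Omega(1) \cdot 3^{-\beta\ell}\cdot 3^{-\alpha k}\cdot\left(\tfrac{3^\alpha}{3-\epsilon}\right)^T.
\]
A two-case analysis on the sign of $\log(3^\alpha/(3-\epsilon))$ (so that the exponent in $T$ is minimized at either $T=0$ or $T=k$) gives $\ge \max(3^\alpha, 3-\epsilon)^{-k}\cdot 3^{-\beta\ell}\cdot\Omega(1) = c_k^{-k}\, c_\ell^{-\ell}\cdot\Omega(1)$ in both regimes. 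The leftover $1/k$ slack is absorbed by a union bound over the up to $k$ possible depths at which an adverse coin flip could derail the correct path, or equivalently over the $k+1$ iterations of the outer for loop.

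The main obstacle I expect is a careful handling of the regime where $k''$ is small: there the Heads probability $3^{-\alpha k'' - \beta\ell}$ need not be $\le 1/2$, so the crude estimate $(1-p_i) \ge 1-o(1)$ for the Tails flips fails. I plan to handle this by observing that once $k''$ has dropped to a constant, the recursion terminates in polynomially many levels and both bounds follow trivially from the $3^{o(k+\ell)}$ slack already built into Lemma~\ref{lem:afd_IC1}. The other subtlety is verifying that the outer loop over $k'$ does not inflate the expected running time beyond $\mathcal{O}^{\star}(3^{o(k+\ell)})$, which follows because the cost per outer iteration was already polynomial in expectation.
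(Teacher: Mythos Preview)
Your approach is essentially the same as the paper's: both bound the expected running time by observing that the Heads probability on Line~5 is chosen to cancel the cost of \texttt{RIAFD\_IC1} from Lemma~\ref{lem:afd_IC1}, and both lower-bound the success probability by following a ``correct'' execution path that takes Tails plus Reduction~2 in dense steps (via Lemma~\ref{lem:afd_red2}) and Heads plus \texttt{RIAFD\_IC1} in the terminal sparse step. The paper presents the success-probability argument as an induction on $k''$ rather than your explicit product over $T$ steps, but this is only a difference in packaging.

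There is, however, a genuine gap in your account of the $1/k$ factor. Your claim that it is ``absorbed by a union bound over the up to $k$ possible depths at which an adverse coin flip could derail the correct path, or equivalently over the $k+1$ iterations of the outer for loop'' does not work: a union bound \emph{upper}-bounds the probability of a bad event, whereas here you need to \emph{lower}-bound the probability that every required Tails flip actually comes up Tails, and the outer loop contributes nothing since only the iteration $k'=|F^\star|$ matters. In the paper the $1/k$ arises precisely from this product of Tails probabilities. At a step where the current parameter is $k''$, the Heads probability satisfies
\[
3^{-(1-2^{-\overline{d}})k''}\cdot 3^{-(2-2^{-\overline{d}})\ell}\;\le\; c_k^{-k''}\,c_\ell^{-\ell}\;\le\; c_k^{-k''}\;\le\; 2^{-k''}\;\le\; \tfrac{1}{k''}
\]
(using $c_k\ge 2$ and $c_\ell\ge 1$), so the Tails probability is at least $1-1/k''$. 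In the paper's inductive step this gives
\[
\Bigl(1-\tfrac{1}{k''}\Bigr)\cdot\tfrac{1}{3-\epsilon}\cdot\frac{c_k^{-(k''-1)}c_\ell^{-\ell}}{k''-1}
\;\ge\;\frac{k''-1}{k''}\cdot\frac{1}{c_k}\cdot\frac{c_k^{-(k''-1)}c_\ell^{-\ell}}{k''-1}
\;=\;\frac{c_k^{-k''}c_\ell^{-\ell}}{k''},
\]
and the $1/k$ is exactly the telescoped product $\prod_{j}(1-1/j)$. In your direct formulation this is the computation you are missing: your crude $(1-o(1))^T$ estimate is too weak, and your proposed fix for small $k''$ does not address where the $1/k$ actually comes from.
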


\begin{proof}
    We will focus on running time for each iteration of the outer loop. The computation till line $6$ takes $n^{\mathcal{O}(1)}$ time. For each $k'' \in (0, k']$, Line $7$ is executed with probability $3^{-(1 - 2^{\overline{d}})k'} \cdot 3^{-(2 - 2^{\overline{d}})\ell}$ and takes time $\mathcal{O}^{\star}(3^{(1 - 2^{\overline{d}}+o(1))k'} \cdot 3^{(2 - 2^{\overline{d}}+o(1))\ell})$. So, in expectation, the total computation cost of Line $7$ is $\mathcal{O}^{\star}(3^{o(k+\ell)})$ per value of $k''$, and also $\mathcal{O}^{\star}(3^{o(k+\ell)})$ overall. Note here that for all values of $\epsilon \in (0,1)$,  $c_k\ge2$ and $c_l\ge1$.   
    
    Now, we prove that \hyperref[alg:afd1]{\texttt{RIAFD1}$(G,k,\ell)$} succeeds with probability $\frac{c_k^{-k}\cdot c_\ell^{-\ell}}{k}$. For simplicity of calculations, we replace $k'$ with $k$. Moreover, as each iteration is an independent trial, $k$ is an upper bound for any $k'$ that succeeds. We use Induction on $k$. The statement is trivial when $k=0$, since no probabilistic reduction is used and hence it succeeds with probability $1$. For the inductive step, consider an instance \hyperref[alg:afd1]{\texttt{RIAFD1}$(G,k+1,\ell)$}. Let $(G', k'',\ell)$ be the reduced instance after Line $3$. Suppose that every {\sf riafd-set} $F$ of $G$ of size $k''$ satisfies the condition $deg(F) \le \overline{d}(k''+l)$; here, we only need the existence of one such $F$. In this case, if Line $7$ is executed, then it will correctly output a {\sf riafd-set} $F$ of size at most $k''$, with high probability by Lemma \ref{lem:afd_IC1}. This happens with probability at least    
    $$3^{-(1-2^{\overline{d}})k''} \cdot 3^{-(2-2^{\overline{d}})\ell} \cdot \left(1 - \frac{1}{n^{\mathcal{O}(1)}}\right) \ge c_k^{-k''}\cdot c_\ell^{-\ell} \cdot\frac{1}{k} \ge \frac{c_k^{-k}\cdot c_\ell^{-\ell}}{k},$$
    as desired.
    
    Otherwise, suppose that the above condition doesn't hold for every {\sf riafd-set} $F$ of $G'$ of size $k''$. This means that there exists a {\sf riafd-set} $F$ of size $k''$ such that $deg(F) \ge \overline{d}(k''+l)$. In this case, by Lemma \ref{lem:afd_red2}, Reduction $2$ succeeds with probability at least $\frac{1}{3-\epsilon}$. This is assuming, of course, that Line $7$ is not executed, which happens with probability $1 - c_k^{-k''}\cdot c_\ell^{-\ell} \ge 1 - c_k^{-k''} \ge 1 - 2^{-k''} \ge 1 - \frac{1}{k''}$, since $c_l \ge 1$ and $c_k \ge 2$. By Induction, the recursive call on Line $10$ succeeds with probability at least $\frac{c_k^{-(k''-1)}\cdot c_\ell^{-\ell}}{(k''-1)}$. So, the overall probability of success is at least,
    $$\left(1-\frac{1}{k''}\right)\cdot\frac{1}{3-\epsilon}\cdot \frac{c_k^{-(k''-1)}\cdot c_\ell^{-\ell}}{(k''-1)} \ge \left(\frac{k''-1}{k'}\right)\cdot\frac{1}{c_k}\cdot \frac{c_k^{-(k''-1)}\cdot c_\ell^{-\ell}}{(k''-1)} = \frac{c_k^{-k''}\cdot c_\ell^{-\ell}}{k''} \ge \frac{c_k^{-k}\cdot c_\ell^{-\ell}}{k},$$
    as desired. Note that on line 10 adding the neighbours of $v$ to $R'$ in the recursive call ensures that $F$ is independent on addition of $v$ to it.
\end{proof}

Unless $R$ is explicitly nonempty, we set $R=\varnothing$ to solve {\sc RIAFD}. To optimize for $c_k$, we set $\epsilon \approx 0.155433$, giving $c_k \le 2.8446$ and $c_\ell \le 8.5337$. Theorem \ref{thm:afd} (2) now follows by combining Lemma \ref{lem:afd_AFD1} and Lemma \ref{lem:randomalgo}.

\paragraph{$2.7^k36.61^\ell$ Algorithm using Matrix Multiplication}
Using Lemma \ref{lem:afd_bafd2} and Lemma \ref{lem:afd_IC2} and the Dense Case, we now prove the main result, Theorem \ref{thm:afd} $(3)$, restated below.\\

\noindent
$\blacktriangleright$ \textbf{Theorem \ref{thm:afd} (3).} There is a randomized algorithm that solves {\sc RIAFD} in time $\mathcal{O}^{\star}(2.7^k36.61^\ell)$, with high probability.

\begin{proof}
    We run \hyperref[alg:afd1]{\texttt{RIAFD1}}, replacing every occurrence of \texttt{RIAFD\_IC1} with \texttt{RIAFD\_IC2}. We define $\overline{d} := \frac{4-2\epsilon}{1 - \epsilon}$ for some $\epsilon > 0$ (to be determined later); note that $\overline{d} \ge 4$ for any $\epsilon > 0$. Since $\omega < 2.3728639$ \cite{matmul}, by Lemma \ref{lem:afd_IC2}, \texttt{RIAFD\_IC2} runs in time $\mathcal{O}^{\star}(3^{(1-((3-\omega)(\frac{2}{3})^{\overline{d}}+(2\omega-3)3^{-\overline{d}})+o(1))k}\cdot 3^{(1+\omega-((3-\omega)(\frac{2}{3})^{\overline{d}}+(2\omega-3)3^{-\overline{d}})+o(1))\ell})$. Hence, \hyperref[alg:afd1]{\texttt{RIAFD1}} runs in time $\mathcal{O}^{\star}(c_k^k\cdot c_\ell^\ell)$, by Lemma \ref{lem:randomalgo} to get high success probability, for $c_k := max\big{\{} 3-\epsilon, 3^{1-((3-\omega)(\frac{2}{3})^{\overline{d}}+(2\omega-3)3^{-\overline{d}})}\big{\}}$ and \\
    $c_\ell := 3^{1+\omega-((3-\omega)(\frac{2}{3})^{\overline{d}}+(2\omega-3)3^{-\overline{d}})}$. For $\omega = 2.3728639$, we optimize for $c_k$ and set $\epsilon \approx 0.3000237$, giving $c_k \le 2.699977$ and $c_\ell \le 36.602$, as desired. If $\omega=2$, we can always substitute it's value and optimize on $c_k$ to get the values for $c_k$ and $c_\ell$, if required.
\end{proof}

\subsection{Improving Dependence on $\ell$}\label{sec:afd_improve_l}
In this subsection, we will try to reduce the dependence on $\ell$ in the \emph{Cut \& Count} algorithm. To achieve this, we will construct a tree decomposition with reduced dependence on $\ell$.

\begin{lemma}
\label{lem:afd_tw_l}
    Given a graph $G(V,E)$ with $tw(G)>2$ and a {\sf riafd-set} $F$ of size $k$, there exists a tree decomposition of width $k + \frac{3}{5.769}\ell + \mathcal{O}(\log(\ell))$ for $G$ and it can be constructed in polynomial time. 
\end{lemma}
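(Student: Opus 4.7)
The plan is to reduce the task to building a tree decomposition of the $\ell$-forest $H := G - F$ of width $\frac{3}{5.769}\ell + O(\log\ell)$; once such a decomposition is in hand, adding all $k$ vertices of $F$ to every bag adds exactly $k$ to the width and preserves the tree-decomposition axioms, giving the claim. It is trivial that $F$ can be spliced into every bag this way because $F$ is a common set of size $k$ and the connected-subtree property remains satisfied for each $v \in F$.

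To decompose $H$, I would recurse on the cyclomatic excess. At a subproblem that is a $\lambda$-forest on $n'$ vertices, I apply a fractional variant of Lemma \ref{lem:afd_betasep}, obtained by re-opening its proof: after choosing a spanning forest and rooting each tree, instead of peeling every extra-edge endpoint into the separator, I peel only an $\alpha$-fraction of them. The residual vertex-balance step of the original argument, tuned with a constant $\beta$, still produces pieces of size at most $n'/\beta$, and since the unused extra edges lie inside individual residual pieces their counts sum to at most $(1-\alpha)\lambda$. The separator $S$ of size $\alpha\lambda + O(\beta)$ becomes the root bag of the local decomposition, each residual piece is recursively decomposed, and the sub-decompositions are glued to the root by single edges. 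Only vertices of $S$ lie on edges crossing between pieces, so the tree-decomposition axioms are preserved.

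Summing the per-level separator charges along the recursion tree produces a geometric series whose value depends on how $\alpha$ and $\beta$ are tuned. When these are optimized so that the $\lambda$-budget shrinks by a factor $3^{-1/5.769}$ per split while a constant number of vertex-balance separators are added, the total bag size comes out to $\frac{3}{5.769}\ell$ from the cyclomatic part, plus $O(\log \ell)$ from the constant per-level overhead charged over the $O(\log \ell)$ levels where the $\lambda$-budget is active (once $\lambda$ falls to $O(1)$ the subproblem is effectively a forest and subsequent splits contribute only a benign $O(\log n)$ additive term). This exact constant is the one that, when combined with the $\OO^{\star}(3^{\tw})$ Cut \& Count algorithm of Theorem \ref{thm:cut&count}, yields the $3^k \cdot 1.78^\ell$ runtime promised in Theorem \ref{thm:afd}(4), since $3^{3/5.769} \approx 1.78$.

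The main obstacle is the fractional separator step: the stock Lemma \ref{lem:afd_betasep} demolishes all cycles simultaneously, so I must adapt its proof so that only an $\alpha$-fraction of extra edges is handled at the current level and the rest are delegated to children without breaking the vertex-balance property. This requires reweighting vertices by their participation in the remaining extra edges before invoking the rooted-subtree peeling step, and verifying that the choice of $\alpha$ meeting the $3^{1/5.769}$ geometric-decay target is compatible with the constant $\beta$ chosen for vertex balance. Everything else---constructing the spanning forest, building the width-$1$ tree decomposition of the forest part, adding $F$ to every bag, and showing polynomial running time because each recursive call is polynomial and the recursion has depth $O(\log n)$---is routine bookkeeping.
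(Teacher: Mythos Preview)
Your high-level outline---decompose the $\ell$-forest $H=G-F$ to small width, then add $F$ to every bag---matches the paper. But the mechanism you propose for decomposing $H$ has a genuine gap, and it is \emph{not} the one the paper uses.

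In your recursive scheme, the separator at a node handling a $\lambda$-forest has size $\alpha\lambda+O(\beta)$, and the children together inherit at most $(1-\alpha)\lambda$ extra edges. The width of the resulting tree decomposition is the \emph{sum} of separator sizes along a root-to-leaf path. In the worst case one child can inherit essentially all of the $(1-\alpha)\lambda$ remaining extra edges (your fractional variant of Lemma~\ref{lem:afd_betasep} balances vertex weight, not the cyclomatic excess), so along that path $\lambda_i\le(1-\alpha)^i\ell$ and the cyclomatic contribution is
\[
\sum_{i\ge 0}\alpha\lambda_i \;\le\; \alpha\ell\sum_{i\ge 0}(1-\alpha)^i \;=\; \ell,
\]
independently of $\alpha$. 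So no tuning of $\alpha$ and $\beta$ in the scheme you describe can push the coefficient below $1$, let alone to $3/5.769\approx 0.52$. Your sentence ``when these are optimized \ldots the total bag size comes out to $\tfrac{3}{5.769}\ell$'' is reverse-engineering the target constant from Theorem~\ref{thm:afd}(4) rather than deriving it.

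The paper avoids this entirely. It first applies the standard degree-reduction rules (delete degree-$\le 1$ vertices, contract degree-$2$ vertices) to $H$; these preserve treewidth and leave a graph $H''$ with minimum degree $\ge 3$ that is still an $\ell$-forest. The edge and vertex bounds $|E(H'')|\le |V(H'')|+\ell-1$ and $|E(H'')|\ge \tfrac32|V(H'')|$ then force $|V(H'')|\le 2\ell$ and $|E(H'')|\le 3\ell$. At this point the paper invokes a black-box result of Kneis et al.\ giving, for any graph, a tree decomposition of width $|E|/5.769+O(\log|V|)$ in polynomial time; applied to $H''$ this yields width $\tfrac{3}{5.769}\ell+O(\log\ell)$. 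Lifting back through the reductions and adding $F$ to every bag finishes. The constant $5.769$ is thus inherited from that external theorem, not produced by any separator recursion.
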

\begin{proof}
    Given $G(V,E)$ with $n$ vertices and $m$ edges, we define the graph $G'(V',E') := G[V/F]$. $G'$ is an $\ell$-Forest from the definition of {\sf riafd-set}. We apply the following reduction rules exhaustively on $G'$:
    \begin{itemize}
        \item $R_0$:  If there is a $v\in V'$ with $deg(v)=0$, then remove $v$.
        \item $R_1$: If there is a $v\in V'$ with $deg(v)=1$, then remove $v$.
        \item $R_2$: If there is a $v\in V'$ with $deg(v)=2$, then contract $v$, i.e.  remove $v$ and insert a new edge between its two neighbors, if no such edge exists.
    \end{itemize}
    
    For the safeness of the above reduction rules refer to \cite{kneis}. Let the reduced graph be called $G''(V'',E'')$. It is trivial to see that after applying these rules the $G''$ we get is also an $\ell$-Forest. Therefore, after removing at most $\ell$ edges from $G''$, we are left with at most $|V''| - 1$ edges (since the remaining graph is a forest). Therefore, we get that $|E''| \le |V''|+\ell-1$. Since the degree of each vertex in $G''$ is at least $3$, $|E''| \ge 3|V''|/2$. Therefore, $1.5|V''| \le |V''| + \ell - 1$ from which we obtain the bounds $|V''| \le 2\ell$ and $|E''| \le 3\ell$.  We need to use the following results  
    from~\cite{kneis}.
    
   \begin{theorem} {\rm \cite[Theorem 4.7]{kneis}.} Given a graph ${G}(V,E)$, we can obtain a tree decomposition of ${G}$ of width at most $|E|/5.769 + \mathcal{O}(log(|V|))$ in polynomial time. 
    \end{theorem}
    
    This implies that, $G''$ has a tree decomposition of width at most $\frac{3}{5.769}\ell + \mathcal{O}(\log(\ell))$ which can be computed in polynomial time. 
    
      \begin{lemma} {\rm \cite[Lemma 4.2]{kneis}.} Given a connected graph $G$, with $\tw(G)>2$ and let $G'$ be a graph obtained from $G$ by applying $R_0$, $R_1$ and $R_2$ then $\tw(G) = \tw(G')$    \end{lemma}
    
    Also, from proof of Lemma 4.2 of \cite{kneis}, it's easy to see that this also works on graphs which might not be connected. Given these facts, we see that we can obtain a tree decomposition of $G'$ with width at most $\frac{3}{5.769}\ell + \mathcal{O}(\log(\ell))$ in polynomial time from the tree decomposition of $G''$. Now to get the tree decomposition of the given graph instance $G$, add $F$ (of size $k$ which we removed) to all the bags of the tree decomposition of $G'$. This finally gives the required tree decomposition of $G$ of width at most $k + \frac{3}{5.769}\ell + \mathcal{O}(\log(\ell))$.
\end{proof}

We combine the treewidth bound that can be obtained from Lemma \ref{lem:afd_tw_l} with \emph{Iterative Compression}, together with the $3^{\tw}$ algorithm to obtain an $\mathcal{O}^\star(3^k1.78^\ell)$ algorithm for solving {\sc RIAFD}. 

We now describe the working of the routine \hyperref[alg:afd_ic_l]{\texttt{RIAFD\_IC3}}. The iterative compression routine proceeds as follows. We start with an empty graph, and add the vertices of $G$ one by one, while always maintaining a {\sf riafd-set} of size at most $k$ in the current graph. Maintaining a {\sf riafd-set} for the current graph helps us utilize Lemma \ref{lem:afd_tw_l} to obtain a small tree decomposition (of size $k + \frac{3}{5.769}\ell + \mathcal{O}(\log(\ell))$). Then we can add the next vertex in the ordering to all the bags in the tree decomposition to get a new {\sf riafd-set} of size $k$ in $\mathcal{O}^\star(3^{\tw})$. If we are unable to find such a {\sf riafd-set} in a particular iteration, we can terminate the algorithm early.

\begin{algorithm}[!ht]
\label{alg:afd_ic_l}
\DontPrintSemicolon
\caption{$\mathtt{RIAFD\_IC3}$$(G,R,k,\ell)$}
\SetKwInOut{Input}{Input}\SetKwInOut{Output}{Output}
\SetKwData{tdp}{RIAFDCutandCount}
\SetKwData{infs}{Infeasible}
\SetKw{Return}{return}
\Input{Graph $G = (V,E)$, a set $R$ and parameters $k \le n$ and $\ell = \mathcal{O}(n^2)$.}
\Output{A {\sf riafd-set} $F$ of size at most $k$ or \infs.}
\Begin{
Order the vertices $V$ arbitrarily as $(v_1, v_2, \ldots, v_n)$ \;
$F \leftarrow \varnothing$\;
\For{$i = 1, 2, \ldots, n$}{
    $\mathbb{T} \leftarrow$ Compute the tree decomposition of $G\left[\left\{v_1, \ldots, v_{i-1}\right\}\right]$ by Lemma \ref{lem:afd_tw_l}\;
    Add $v_i$ to all bags of $\mathbb{T}$\;
    $F \leftarrow$ a {\sf riafd-set} of $G\left[\left\{v_1, \ldots ,v_i\right\}\right]$ with parameters $k$ and $\ell$, computed using $\mathtt{RIAFDCutandCount}$ on $\mathbb{T}$\;
    \If{$F$ is \infs}{
        \Return \infs\;
    }}
\Return $F$\;
}
\end{algorithm}

Now we restate Theorem \ref{thm:afd} (4) and prove it.

\noindent 
 $\blacktriangleright$ \textbf{Theorem \ref{thm:afd} (4).} {\em \hyperref[alg:afd_ic_l]{$\mathtt{RIAFD\_IC3}$} solves {\sc RIAFD} problem in time $\mathcal{O}^\star(3^k1.78^\ell)$ and exponential space with high probability. }

\begin{proof} 
Suppose that there exists a {\sf riafd-set} $F^\star$ of size at most $k$. Let $(v_1, v_2, \ldots, v_n)$ be the ordering from Line $2$, and define $V_i:= \left\{v_1, \ldots,v_i\right\}$. We note that $F^\star \cap V_i$ is a {\sf riafd-set} of $G\left[V_i\right]$ so {\sc RIAFD} problem on Line $7$ will be feasible in each iteration (and will be computed correctly with high probability in every iteration). Therefore, with high probability, a {\sf riafd-set} is returned successfully (by union bound). 

We now bound the running time. On Line $5$, the current set $F$ is a {\sf riafd-set} of $G\left[V_i\right]$, so Lemma \ref{lem:afd_tw_l} guarantees a tree decomposition of width at most $k + \frac{3}{5.769}\ell + \mathcal{O}(\log(\ell))$ and adding $v_i$ to each bag on Line 6 increases the width by at most one. By the \emph{Cut \& Count} algorithm from Section \ref{sec:afd_cutcount}, Line 6 runs in time $\mathcal{O}^\star(3^{\left(k + \frac{3}{5.769}\ell + \mathcal{O}(\log(\ell))\right)}) = \mathcal{O}^\star(3^{\left(k + \frac{3}{5.769}\ell\right)})$ (since $\ell = \mathcal{O}(|V|^2)$ for non-trivial instance). This gives the desired time of $\mathcal{O}^\star(3^k1.78^\ell)$ on simplification. The space bound follows directly from the description of \hyperref[alg:afd_ic_l]{\texttt{RIAFD\_IC3}}, Lemma \ref{lem:afd_tw_l} and the space bound of the \emph{Cut \& Count} algorithm.
\end{proof}


\section{Pseudoforest Deletion}\label{sec:pseudoforestdel}
%
%
In this section we present faster randomized algorithms for \textsc{Pseudoforest Deletion}.  In Section \ref{sec:pseudo_3tw} we present an $\mathcal{O}^{\star}(3^{\tw})$ \emph{Cut \& Count} algorithm building on techniques from \cite{cutandcount} for \fvs. Using this we give an $\mathcal{O}^\star(3^k)$ time and polynomial space algorithm in Section \ref{sec:pseudo_3k}. In Section \ref{sec:pseudo_improvedk}, we use the method in \cite{li-soda20} to get an $\mathcal{O}^\star(2.85^k)$ time and polynomial space algorithm. Henceforth, the abbreviation {\sf pds} denotes a pseudoforest deletion set, i.e., a solution to an instance of \textsc{Pseudoforest Deletion}. 
\subsection{$\mathcal{O}^{\star}(3^{\tw})$ Algorithm}\label{sec:pseudo_3tw}
\begin{lemma}
\label{pseu1}
A graph $G=(V,E)$ with $n$ vertices and $m$ edges is a pseudoforest if and only if it has $n-m$ connected components which are trees.
\end{lemma}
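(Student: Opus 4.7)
The plan is to prove both directions by a direct edge-counting argument over the connected components of $G$, classified as ``tree components'' (those containing no cycle) versus ``non-tree components'' (those containing at least one cycle). Let $c$ denote the number of connected components and suppose exactly $t$ of them are tree components; write $n_i$ and $m_i$ for the vertex and edge counts of the $i$-th component, so that $\sum_i n_i = n$ and $\sum_i m_i = m$. The key numerical fact I will repeatedly use is that for a connected graph on $n_i$ vertices one has $m_i \ge n_i - 1$, with equality iff the component is a tree, and $m_i \ge n_i$ iff the component contains a cycle (with equality iff it contains exactly one cycle).

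For the forward direction, I would assume $G$ is a pseudoforest. Then each non-tree component has exactly one cycle, i.e.\ $m_i = n_i$, while tree components satisfy $m_i = n_i - 1$. Summing gives
\[
m \;=\; \sum_{i \in T}(n_i - 1) \;+\; \sum_{i \notin T} n_i \;=\; n - t,
\]
where $T$ is the set of indices of tree components. Rearranging yields $t = n - m$, which is the desired count of tree components.

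For the reverse direction, I would assume the number of tree components equals $n - m$ and show that every non-tree component is unicyclic. Using $m_i \ge n_i$ for every non-tree component, the same summation gives the inequality $m \ge n - t$, i.e.\ $t \ge n - m$. Under the hypothesis $t = n - m$ this inequality is tight, forcing $m_i = n_i$ for every non-tree component. A connected graph with $n_i$ vertices and $n_i$ edges has cyclomatic number $1$ and hence contains exactly one cycle, so every component of $G$ has at most one cycle, i.e.\ $G$ is a pseudoforest.

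The argument is essentially a one-line edge count once the components are split into tree and non-tree pieces, so I do not anticipate any real obstacle; the only mildly delicate point is stating cleanly that ``connected with $n_i$ edges on $n_i$ vertices'' $\Longleftrightarrow$ ``exactly one cycle,'' which follows from the standard fact that the cyclomatic number $m_i - n_i + 1$ equals the dimension of the cycle space of a connected graph. I would invoke that fact explicitly (or cite it as folklore) to justify the final step of the converse direction.
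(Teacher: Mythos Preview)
Your proof is correct and follows essentially the same edge-counting argument as the paper: partition the components into tree and non-tree pieces, use that tree components satisfy $m_i = n_i - 1$ while non-tree components satisfy $m_i \ge n_i$, and sum. Your write-up is in fact cleaner and more explicit than the paper's, which phrases the same computation in terms of the per-component quantities $n_i - m_i$ and handles the reverse direction somewhat tersely.
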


\begin{proof}
We only consider cases where $n\ge m$. Note that any graph $G(V,E)$ with $n$ vertices and $m$ edges has at least $n-m$ connected components which are trees. This is because of a simple additive argument and the fact that for a connected component other than a tree with $n'$ vertices and $m'$ edges, the term $n'-m'\le 0$. 

\textbf{Forward Direction: } If $G$ is a pseudoforest, then its connected components can be either a tree or a tree plus an edge. For the ``tree plus edge component'', $n'-m'=0$. Hence we have $n-m$ trees.

\textbf{Reverse Direction:} Conversely, assume for contradiction that $G$ has greater than $n-m$ trees. By pigeonhole principle at least one of the other components has $n'-m'\le -1$, a contradiction.
\end{proof}

We present a \emph{Cut \& Count} technique similar to the one for \fvs in \cite{cutandcount}. As the universe we take $U=V\times\{\boldsymbol{P},\boldsymbol{M}_1\}+E\times\{\boldsymbol{M}_2\}$. The main difference between our algorithm from the one for \fvs is we account for additional $\boldsymbol{M}_2$ markers for the edges. For each edge, we \emph{a priori} decide one of its endpoints to represent the edge. Also, given a set of marked edges $M_2$, $\psi(M_2)$ denotes the set of representative vertices of the edges in $M_2$. When an edge is marked, it is assumed to be deleted and it's representative vertex is marked. This assumption will be crucial in our algorithm.

We assign weights uniformly at random to the elements of our universe with the weight function $\omega:U\rightarrow\{1,\hdots,N\}$, where $N=2|U|=4|V|+2|E|$.

\smallskip

\noindent 
\textbf{The Cut Part.} For integers $A$, $B$, $C$, $D$, $W$ we define:
\begin{enumerate}
    \item ${\mathcal{R}^{A,B,C,D}_W}$ to be the family of solution candidates:  ${\mathcal{R}^{A,B,C,D}_W}$ is the family of triples $(X,M_1,M_2)$ where $X \subseteq{V}$, $|X|=A$, $|E(G[X])| = B+D$ of which $D$ edges are marked, i.e $M_2\subseteq E(G[X])$ and $|M_2|=D$, $M_1\subseteq X$, $|M_1|=C$ and $\omega((V \setminus X)\times\boldsymbol{\{P\}})+\omega(M_1\times\{\boldsymbol{M}_1\})+
    \omega(M_2\times\{\boldsymbol{M}_2\})=W$.\\
    \item ${S^{A,B,C,D}_W}$ to be the set of solutions: the family of triples $(X,M_1,M_2)$, where $(X,M_1,M_2)\in {\mathcal{R}^{A,B,C,D}_W}$ and every connected component of $G[X] - M_2$ is a tree containing at least one $\boldsymbol{M}_1$ or $\boldsymbol{M}_2$ marker.\\
    \item ${\mathcal{C}^{A,B,C,D}_W}$ to be the family of pairs $((X,M_1,M_2),(X_L,X_R))$ where $(X,M_1,M_2)\in{\mathcal{R}^{A,B,C,D}_W}$, $M_1\subseteq X_L$, $\psi(M_2)\subseteq X_L$ and $(X_L,X_R)$ is a consistent cut of $G[X]$.
\end{enumerate}

According to \cite{cutandcount}, a consistent cut $(X_L, X_R)$ is one where there is no edge between the cuts. But, as we stated that an edge marked with a marker $\boldsymbol{M_2}$ is deleted, these edges are allowed to cross the cuts. But the representative vertex must belong to $X_L$ only.

\begin{lemma}
\label{count_pseu1}
The graph $G$ admits a pseudoforest deletion set of size $k$ iff there exist integers $B$, $D$, $W$ such that ${S^{n-k,B,n-k-B-D,D}_W}$ is nonempty.
\end{lemma}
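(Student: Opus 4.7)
The plan is to prove both implications by explicit construction together with a tight counting argument that matches the total number of markers to the number of tree components of $G[X]-M_2$.

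For the forward direction, suppose $S \subseteq V$ is a pseudoforest deletion set of size $k$ and put $X = V \setminus S$, so $|X| = n-k$ and $G[X]$ is a pseudoforest. Let $B' = |E(G[X])|$. By Lemma~\ref{pseu1} the pseudoforest $G[X]$ has exactly $n-k-B'$ tree components, and its remaining components are unicyclic; call the number of unicyclic components $D$. I will construct $M_2$ by picking one edge from the unique cycle of each unicyclic component, so that $|M_2| = D$, and set $B := B' - D$, so that $|E(G[X])| = B+D$. I will construct $M_1$ by picking one vertex from each tree component of $G[X]$, giving $|M_1| = n-k-B' = n-k-B-D$, as required. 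After deleting $M_2$, each formerly unicyclic component becomes a tree whose representative vertex of the chosen $M_2$ edge serves as the marker, while each original tree component inherits its chosen $M_1$ vertex as marker. Hence every component of $G[X]-M_2$ is a tree containing at least one marker, and letting $W$ be the induced weight produces a member of $S^{n-k,B,n-k-B-D,D}_W$.

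For the reverse direction, suppose $(X,M_1,M_2) \in S^{n-k,B,n-k-B-D,D}_W$; I must show that $V \setminus X$ is a pseudoforest deletion set, i.e.\ that $G[X]$ is a pseudoforest. Removing the $D$ edges of $M_2$ from $G[X]$ leaves a subgraph on $n-k$ vertices with $B$ edges whose every connected component is a tree, so it is a forest with exactly $n-k-B$ components. The total number of available markers is $|M_1|+|M_2| = (n-k-B-D)+D = n-k-B$, which equals the number of components. Combined with the hypothesis that every component carries at least one marker, a pigeonhole argument forces each tree component of $G[X]-M_2$ to carry \emph{exactly} one marker; in particular, distinct edges of $M_2$ must have their representative vertices in distinct tree components.

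The central step is to promote this one-marker-per-component equality into a bound on cycles. For any connected component $C'$ of $G[X]$, let $t$ be the number of tree components of $G[X]-M_2$ contained in $C'$ and let $s = |M_2 \cap E(C')|$ be the number of deleted edges needed to reconstruct $C'$; the cyclomatic number of $C'$ equals $s-t+1$. Every edge in $M_2 \cap E(C')$ has its representative vertex in one of the $t$ tree components making up $C'$, and by the previous paragraph no two of these edges can share a component, so $s \le t$ and therefore $s-t+1 \le 1$. Hence every connected component of $G[X]$ has at most one cycle, meaning $G[X]$ is a pseudoforest, and $V \setminus X$ is the desired deletion set. The main obstacle I anticipate is cleanly articulating the one-marker-per-component conclusion and converting it into the $s \le t$ inequality: two distinct edges of $M_2$ could a priori share a representative vertex, so the argument crucially uses the tight equality $|M_1|+|M_2| = n-k-B$ baked into the choice $C = n-k-B-D$ in the lemma's indexing. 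Everything else is bookkeeping around Lemma~\ref{pseu1} and the standard cyclomatic-number formula for a connected graph.
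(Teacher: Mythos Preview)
Your proof is correct. The forward direction is essentially identical to the paper's: pick one cycle edge per unicyclic component for $M_2$ and one vertex per tree component for $M_1$, then observe the bookkeeping matches.

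The reverse direction differs in route. The paper argues by counting tree components of $G[X]$ itself and then invoking Lemma~\ref{pseu1}: it claims (somewhat informally) that adding back each $M_2$ edge preserves the count of unmarked tree components, so the tree components of $G[X]$ are exactly those handled by $M_1$ markers, giving $n-k-B-D$ tree components and hence a pseudoforest. Your argument instead works component-by-component in $G[X]$: you first use the tight equality $|M_1|+|M_2|=n-k-B=\#\{\text{components of }G[X]-M_2\}$ to force exactly one marker per component, then for each connected component $C'$ of $G[X]$ you compute its cyclomatic number as $s-t+1$ and use the pigeonhole conclusion to get $s\le t$. Your approach is more explicit about where the constraint $C=n-k-B-D$ is used and sidesteps the paper's slightly murky ``on deletion we get two trees'' step (which, as written, does not obviously handle the case where a deleted edge lies on a cycle). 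Both arrive at the same conclusion; yours is a cleaner and more self-contained argument, at the modest cost of introducing the cyclomatic-number formula rather than reusing Lemma~\ref{pseu1}.
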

\begin{proof}
\textbf{Forward direction:} Let $G$ have a {\sf pds} $P$ of size $k$. Then $G' = G[V\setminus P]=(V',E')$ is a pseudoforest with $n-k$ vertices. Let $G'$ have $D$ connected components which are ``a tree plus an edge'' and by Lemma \ref{pseu1} $G'$ has $n-k-B-D$ connected components which are trees, where $B = |E'| - D$. Then we can place one $\boldsymbol{M}_1$ marker each for all the tree components. Let $M_1$ be the set of these marked vertices. In each of the $D$ ``tree plus an edge components'', only one cycle exists. Choose any edge belonging to that cycle as an $\boldsymbol{M}_2$ marker. Thus, by definition, this edge is deleted making the component a tree. Also, as defined above, the representative vertex of the deleted edge is marked. Let $M_2$ be the set of all the marked edges. Also, let $W := \omega((V \setminus X)\times\{\boldsymbol{P}\})+\omega(M_1\times\{\boldsymbol{M}_1\})+\omega(M_2\times\{\boldsymbol{M}_2\})$. We now see that $(X,M_1,M_2)\in S_W^{n-k,B,n-k-B-D,D}$.

\textbf{Reverse direction:} We have that ${S^{n-k,B,n-k-B-D,D}_W}$ is non-empty for some integers $B$, $D$ and $W$. Let us consider some $(X,M_1,M_2)\in {S^{n-k,B,n-k-B-D,D}_W}$. Then, the graph $G[X]$ has $n-k$ vertices, $B+D$ edges and every connected component of $G[X] - M_2$ is a tree with exactly one marker, one of $\boldsymbol{M}_1$ or $\boldsymbol{M}_2$, by definition. Notice that if a tree component in $G[X]$ is marked by an $\boldsymbol{M}_2$ marker, then the number of unmarked tree components remains the same, as on marking an edge, the edge is deleted (by definition) marking it's representative vertex. Thus, on deletion we get two trees among which one is marked while the other is still unmarked. These unmarked tree components necessarily have to be taken care of by $\boldsymbol{M}_1$ markers. Therefore, the number of tree components has to be equal to the number of $\boldsymbol{M}_1$ markers, i.e. the number of tree components is exactly $n-k-B-D$. Therefore, by Lemma \ref{pseu1} $G[X]$ is a pseudoforest. 
\end{proof}
\begin{lemma}
\label{count_pseu2}
$|\mathcal{C}^{A,B,C,D}_W|\equiv |S^{A,B,C,D}_W| \: (\textrm{mod} \: 2)$.
\end{lemma}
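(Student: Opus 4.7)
The plan is a standard Cut \& Count parity argument. I want to show that every triple $(X, M_1, M_2) \in S^{A,B,C,D}_W$ contributes exactly $1$ to $|\mathcal{C}^{A,B,C,D}_W|$, while every triple in $\mathcal{R}^{A,B,C,D}_W \setminus S^{A,B,C,D}_W$ contributes an even number; summing over $\mathcal{R}^{A,B,C,D}_W$ then gives the congruence modulo $2$.

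First I would fix $(X, M_1, M_2) \in \mathcal{R}^{A,B,C,D}_W$ and count the consistent cuts $(X_L, X_R)$ that produce an element of $\mathcal{C}^{A,B,C,D}_W$. Because edges in $M_2$ are treated as deleted and are allowed to cross the cut, a pair $(X_L, X_R)$ is admissible exactly when every connected component of $G[X] - M_2$ lies entirely on one side. The extra constraint $M_1 \cup \psi(M_2) \subseteq X_L$ forces any component meeting one of these markers into $X_L$, while a component with no marker may go either way. Writing $c$ for the number of connected components of $G[X] - M_2$ disjoint from $M_1 \cup \psi(M_2)$, the contribution of $(X, M_1, M_2)$ to $|\mathcal{C}^{A,B,C,D}_W|$ is therefore $2^c$.

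When $(X, M_1, M_2) \in S^{A,B,C,D}_W$, every component of $G[X] - M_2$ is a tree carrying at least one $\boldsymbol{M}_1$ or $\boldsymbol{M}_2$ marker, so $c = 0$ and the contribution is $1$. The main step is then to argue $c \geq 1$ for every $(X, M_1, M_2) \in \mathcal{R}^{A,B,C,D}_W \setminus S^{A,B,C,D}_W$. I would use a counting argument on markers versus components. Since $\psi$ assigns one representative vertex per marked edge we have $|\psi(M_2)| \leq D$, hence $|M_1 \cup \psi(M_2)| \leq |M_1| + |\psi(M_2)| \leq C + D$. On the other hand $G[X] - M_2$ is a graph on $A$ vertices with $B$ edges, so it has at least $A - B$ connected components, with equality if and only if it is a forest. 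In the parameter regime of Lemma~\ref{count_pseu1} we have $C = A - B - D$, so $C + D = A - B$ and the marker count is bounded above by the minimum possible component count.

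The hard (or at least delicate) part is ruling out the case where $(X, M_1, M_2) \notin S^{A,B,C,D}_W$ yet every component of $G[X] - M_2$ contains a marker. I would split into two cases according to whether $G[X] - M_2$ is a forest. If some component contains a cycle, the total number of components strictly exceeds $A - B = C + D \geq |M_1 \cup \psi(M_2)|$, so at least one component is marker-free and $c \geq 1$. Otherwise $G[X] - M_2$ is a forest, so every component is a tree; then failing to lie in $S^{A,B,C,D}_W$ must be caused by some tree component having no marker, which again gives $c \geq 1$. In either case the contribution $2^c$ is even, and summing over $\mathcal{R}^{A,B,C,D}_W$ yields $|\mathcal{C}^{A,B,C,D}_W| \equiv |S^{A,B,C,D}_W| \pmod{2}$.
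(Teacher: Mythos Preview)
Your argument is the same standard Cut \& Count parity argument the paper uses: fix $(X,M_1,M_2)\in\mathcal{R}^{A,B,C,D}_W$, observe that the number of admissible cuts is $2^c$ with $c$ the number of unmarked components of $G[X]-M_2$, and conclude that only triples with $c=0$ survive modulo~$2$.

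Where you differ is in the treatment of the ``tree'' clause in the definition of $S^{A,B,C,D}_W$. The paper's proof simply asserts that $(X,M_1,M_2)\in S^{A,B,C,D}_W$ iff $G[X]-M_2$ has no unmarked component, silently dropping the requirement that each component be a tree. You instead handle this explicitly via your cycle/forest case split together with the count $|M_1\cup\psi(M_2)|\le C+D=A-B$, which is only available under the constraint $C=A-B-D$ coming from Lemma~\ref{count_pseu1}. This extra care is warranted: without some such restriction the statement is actually false (e.g.\ take $X$ inducing a triangle, $M_1=X$, $M_2=\varnothing$; then $c=0$ but the unique component is not a tree, so the triple contributes $1$ to $|\mathcal{C}|$ and $0$ to $|S|$). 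So your version is the more rigorous of the two; it proves exactly the instance of the lemma that the paper needs, and the paper's shortcut works only because that same parameter relation is the only case ever invoked.
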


\begin{proof}
Consider a triple $(X,M_1,M_2)$ in $\mathcal{R}^{A,B,C,D}_W$. If $G[X] - M_2$ has $c$ connected components without any marker($\boldsymbol{M}_1$ or $\boldsymbol{M}_2$), then it contributes $2^c$ to $|\mathcal{C}^{A,B,C,D}_W|$. Hence, if $c \ge 1$, the triple $(X,M_1,M_2)$ contributes $2^c \equiv 0\: (\textrm{mod}\: 2)$ to $|\mathcal{C}^{A,B,C,D}_W| \: (\textrm{mod}\:2)$. A triple $(X,M_1,M_2) \in S^{A,B,C,D}_W$ iff $G[X] - M_2$ has no unmarked connected components. Thus, it contributes $1\: (\textrm{mod} \: 2)$ to both $S^{A,B,C,D}_W$  and $\mathcal{C}^{A,B,C,D}_W$. Hence, $|\mathcal{C}^{A,B,C,D}_W|\equiv |S^{A,B,C,D}_W| \: (\textrm{mod}\: 2)$.  
\end{proof}

 \noindent
\textbf{The Count Part.} For $A,B,C,D,W,(X,M_1,M_2)\in {\mathcal{R}^{A,B,C,D}_W}$, there are $2^{cc(M_1,M_2,G[X])}$ consistent cuts $(X_L,X_R)$ such that $((X,M_1,M_2),(X_L,X_R))\in {\mathcal{C}^{A,B,C,D}_W}$ where $cc(M_1,M_2,G[X])$ denotes the number of connected components of $G[X]$ which do not contain any marker from either of $M_1$ or $M_2$. Hence for $C \le A-B$ by Lemma \ref{pseu1} we have that $|{S^{A,B,C,D}_W}|\equiv |{\mathcal{C}^{A,B,C,D}_W}| \ (\textrm{mod}\  2)$. 

Now we describe a dynamic programming procedure \texttt{CountC($\omega,A,B,C,D,W,\mathbb{T}$)}, that given a nice tree decomposition $\mathbb{T}$, weight function $\omega$ and integers $A,B,C,D,W$, computes $|\mathcal{C}^{A,B,C,D}_W|$ \textrm{mod} $2$.
For every bag $x\in \mathbb{T}$, $a\leq |V|$, $b\leq |V|$, $c\leq |V|$, $d\leq |V|$, $w\leq 3N|V|$ and $s\in \{\boldsymbol{F},\boldsymbol{L},\boldsymbol{R}\}^{B_x}$ (called the colouring), define

\begin{eqnarray*}
    \mathcal{R}_x(a,b,c,d,w) & = & \Big\{(X,M_1,M_2)\;\big|\; X\subseteq V_x\wedge \,|X|=a\wedge\; |E_x\cap E(G[X])|=b+d\wedge \\ & & M_1\subseteq X \wedge\; M_2\subseteq E_x\cap E(G[X]) \wedge\;
     |M_1|=c\wedge\; |M_2|=d\wedge \\ & & \omega((V\setminus X)\times \boldsymbol{\{P\} })+\omega(M_1\times  \{\boldsymbol{M}_1\})+\omega(M_2\times \{\boldsymbol{M}_2\})=w\Big\} \\
    \mathcal{C}_x(a,b,c,d,w) & = & \Big\{((X,M_1,M_2),(X_L,X_R))\;\big|\;(X,M_1,M_2)\in \mathcal{R}_x(a,b,c,d,w)\wedge \\ & & M_1\subseteq X_L\wedge \psi(M_2)\subseteq X_L\wedge (X,(X_L,X_R)) \text{ is a consistently cut}\\  & & \text{subgraph of }  G_x\Big\} \\
    A_x(a,b,c,d,w,s) & = & \Big|\Big\{((X,M),(X_L,X_R))\in C_x(a,b,c,d,w)\;\big| (s(v)=\boldsymbol{L}\implies v\in X_L)\wedge\\& & (s(v)=\boldsymbol{R}\implies v\in X_R)\wedge (s(v)=\boldsymbol{F}\implies v\notin X)\Big\}\Big| \\
\end{eqnarray*}

Note that we may assume $b\leq|V|$ and $d\leq|V|$ because the number of edges in a pseudoforest cannot exceed the number of vertices. The accumulators $a,b,c,d,w$ keep track of the number of vertices, edges of $X$, $M_1$ markers, $M_2$ markers and the target weight respectively. Hence $A_x(a,b,c,d,w,s)$ is the number of pairs in $\mathcal{C}_x(a,b,c,d,w)$ having a fixed interface with vertices in $B_x$. Note that we choose a vertex to be an $M_1$ marker in its respective forget bag. For the $M_2$ marker for an edge we make the choice in the introduce edge bag, where we decide to not include it in $G[X]$ if it is chosen as a $M_2$ marker. Also note that the endpoints in this case for this edge can be on opposite sides of the cut.\\
The algorithm computes $A_x(a,b,c,d,w,s)$ for each bag $x\in \mathbb{T}$ and for all reasonable values of $a,b,c,d,w$ and $s$. We now give the recurrence for $A_x(a,b,c,d,w,s)$ used by the dynamic programming algorithm. In order to simplify notation let $v$ be the vertex introduced and contained in an introduce bag, $(u,v)$ the edge introduced in an introduce edge bag with $u$ being the representative of the edge (i.e. $\psi(\{(u,v)\}) = \{u\}$), and let $y, z$ stand for the left and right child of $x$ respctively in $\mathbb{T}$ if present.
\begin{itemize}
    \item \textbf{Leaf bag:}
    \begin{align*}
        A_x(0,0,0,0,0,\varnothing)&=1
    \end{align*}
    \item \textbf{Introduce vertex bag:}
    \begin{align*}
        A_x(a,b,c,d,w,s\cup\{(v,\boldsymbol{F})\})&=A_y(a,b,c,d,w-\omega((v,\boldsymbol{P})),s)\\
        A_x(a,b,c,d,w,s\cup\{(v,\boldsymbol{L})\}) &= A_y(a-1,b,c,d,w,s)\\
        A_x(a,b,c,d,w,s\cup\{(v,\boldsymbol{R})\}) &= A_y(a-1,b,c,d,w,s)
    \end{align*}
    \item \textbf{Introduce edge bag:}
    \begin{itemize}
    \item If $s(u)=\boldsymbol{L}\wedge s(v)=\boldsymbol{R}$ 
    \begin{align*}
        A_x(a,b,c,d,w,s) = A_y(a,b,c,d-1,w-\omega((u,v),\boldsymbol{M}_2))
    \end{align*}
    \item If $s(u)=\boldsymbol{F}\lor s(v)=\boldsymbol{F}\lor s(u)=s(v)=\boldsymbol{R}$
    \begin{align*}
            A_x(a,b,c,d,w,s) &= A_y(a,b-[s(u)=s(v)\ne \boldsymbol{F}],c,d,w,s)
    \end{align*}
    \item  If $s(u)=s(v)=\boldsymbol{L}$
    \begin{align*}
    A_x(a,b,c,d,w,s) &= A_y(a,b-1,c,d,w,s)+A_y(a,b,c,d-1,w-\omega((u,v),\boldsymbol{M}_2),s)
    \end{align*}
        \end{itemize}
    Here we remove table entries not consistent with the edge $(u,v)$, and update the accumulator $b$ storing the number of edges in the induced subgraph and we mark the edge $(u,v)$ keeping $u$ in $X_L$ updating the accumulator $d$(even in the case when $u$ and $v$ are in $X_L$ and $X_R$ respectively) of edges in the induced subgraph.

    \item \textbf{Forget vertex bag:}
    \begin{align*}
        A_x(a,b,c,d,w,s) &= A_y(a,b,c-1,d,w-\omega((v,\boldsymbol{M}_1)),s[v\rightarrow \boldsymbol{L}])\\ &\;\;\;\;+ \sum\limits_{\alpha\in\{\boldsymbol{F},\boldsymbol{L},\boldsymbol{R}\}}A_y(a,b,c,d,w,s[v\rightarrow \alpha])
    \end{align*}
    If the vertex $v$ was in $X_L$ then we can mark it and update the accumulator $c$. If we do not mark the vertex $v$ then it can have any of the three states with no additional requirements imposed.\\
    \item \textbf{Join bag:}
    \begin{align*}
        A_x(a,b,c,d,w,s)&=  \sum\limits_{\mathclap{\substack{a_1+a_2=a+|s^{-1}(\{L,R\})|\\ b_1+b_2=b \\ c_1+c_2=c \\ d_1+d_2=d \\ w_1+w_2=w+\omega(s^{-1}(\boldsymbol{F})\times \{\boldsymbol{P}\})}}} A_y(a_1,b_1,c_1,d_1,w_1,s) \cdot A_z(a_2,b_2,c_2,d_2,w_2,s)
    \end{align*} 
    The only valid combinations to achieve the colouring $s$ is the same colouring in both the children bags. Since the vertices coloured $\boldsymbol{F}$ according to $s$ are present in both $y$ and $z$, their contribution to the weight $w$ and the number of the vertices $a$ needs to be accounted for.
\end{itemize}

 Since $|\mathcal{C}^{A,B,C,D}_W|\equiv A_r(A,B,C,D,W,\varnothing) \mod 2$, we compute $A_r(A,B,C,D,W,\varnothing)$ for all reasonable values of the parameters as mentioned before using the dynamic programming procedure, which takes $\mathcal{O}^\star(3^{\tw}|V|^{\mathcal{O}(1)})$ time. This concludes the description of the \emph{Cut \& Count} algorithm for {\sf pds}.

We state the following equivalent of Lemma \ref{lem:afdccmain}. The proof is omitted as it is very similar to the equivalent proof given for {\sc RIAFD}. 
\begin{lemma}
    \label{lem:pdsccmain}
    Let $G(V,E)$ be a graph and $d$ be an integer. Set the universe $U=V\times \{\boldsymbol{P},\boldsymbol{M}_1\}\cup E\times \{\boldsymbol{M}_2\}$. Pick $\omega'(u) \in \{1, \ldots, 2 |U|\}$ uniformly and independent at random for every $u \in U$. Define $\omega : U \rightarrow \mathbb{N}$ such that $\omega((v,\boldsymbol{P})):= |V|^2\omega'((v,\boldsymbol{P})) + deg(v)$ for all $v \in V$ and $\omega(u) = |V|^2\omega'(u)$ for all other $u \in U$. The following statements hold:                                                                                 
    \begin{enumerate}
        \item If for some integers $m'$, $D$, $W = i|V|^2 + d$ we have that $|\mathcal{C}_W^{n-k,m',n-k-m'-D,D}| \not\equiv 0\ (\textrm{mod} \ 2)$, then $G$ has a Pseudoforest Deletion set $P$ of size $k$ satisfying $deg(F) = d$.
        
        \item If $G$ has a Pseudoforest Deletion set $P$ of size $k$ satisfying $deg(P) = d$, then with probability at least $1/2$ for some $m'$, $D$, $W = i|V|^2 + d$ we have that $|\mathcal{C}_W^{n-k,m',n-k-m'-D,D}| \not\equiv 0\ (\textrm{mod}  \ 2)$.
    \end{enumerate}
\end{lemma}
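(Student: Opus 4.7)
The plan is to mirror Lemma \ref{lem:afdccmain}, with two adaptations: the \emph{Cut \& Count} for pseudoforests operates modulo $2$ (rather than modulo a higher power), and the universe $U$ now carries both vertex markers and edge markers instead of just vertices.

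For Item~1, I will start by applying Lemma \ref{count_pseu2} to the hypothesis, upgrading $|\mathcal{C}^{n-k,m',n-k-m'-D,D}_W| \not\equiv 0 \pmod 2$ to the same parity statement for $|S^{n-k,m',n-k-m'-D,D}_W|$; in particular, at least one triple $(X,M_1,M_2)$ lies in this set. Lemma \ref{count_pseu1} then certifies that $P := V\setminus X$ is a pseudoforest deletion set of size $k$. To recover the degree condition, I will expand
\[
W = \omega\bigl((V\setminus X)\times\{\boldsymbol{P}\}\bigr) + \omega\bigl(M_1\times\{\boldsymbol{M}_1\}\bigr) + \omega\bigl(M_2\times\{\boldsymbol{M}_2\}\bigr),
\]
and observe that under the definition of $\omega$, every summand is a multiple of $|V|^2$ except for the $deg(v)$ shifts attached to $\boldsymbol{P}$ markers, which together contribute exactly $deg(V\setminus X)=deg(P)$. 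Hence $W \equiv deg(P) \pmod{|V|^2}$; since $deg(P)\le 2|E|<|V|^2$ and $W=i|V|^2+d$ with $0\le d<|V|^2$, we must have $d=deg(P)$.

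For Item~2, I will invoke Lemma \ref{lem:isolation} on the universe $U$ with the set family
\[
\mathcal{F} = \bigl\{(X,M_1,M_2)\in\mathcal{R} : V\setminus X \text{ is a pds of size } k \text{ with } deg(V\setminus X)=d\bigr\},
\]
identified with a family of subsets of $U$ via $(X,M_1,M_2)\mapsto ((V\setminus X)\times\{\boldsymbol{P}\})\cup (M_1\times\{\boldsymbol{M}_1\})\cup (M_2\times\{\boldsymbol{M}_2\})$. The hypothesis guarantees $\mathcal{F}\neq\varnothing$, and since $\omega'(u)$ is sampled uniformly from $\{1,\dots,2|U|\}$, the Isolation Lemma yields $\Pr[\omega'\text{ isolates }\mathcal{F}]\ge 1/2$. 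Let $(X^\star,M_1^\star,M_2^\star)$ be the unique $\omega'$-minimum, $W^\star$ its $\omega$-weight, and $(m'^\star,D^\star)$ the parameters it induces. Then $|S_{W^\star}^{n-k,m'^\star,n-k-m'^\star-D^\star,D^\star}|=1$, so by Lemma \ref{count_pseu2} the corresponding $|\mathcal{C}|$ is odd modulo $2$. Expanding $W^\star$ exactly as in Item~1 gives $W^\star=i|V|^2+d$, completing the argument.

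The only delicate step is verifying that $\omega'$-isolation transfers to $\omega$-isolation across $\mathcal{F}$: a priori the degree shift could break ties differently, but the constraint $deg(V\setminus X)=d$ forces that shift to be the constant $d$ for every triple in $\mathcal{F}$, so $\omega = |V|^2\omega' + d$ is literally an affine function of $\omega'$ on $\mathcal{F}$, and isolation passes through unchanged. This is the essential reason for stratifying the analysis by $d$ rather than lumping all pds together; once this observation is in place, the remainder of the proof is bookkeeping that parallels Lemma \ref{lem:afdccmain} directly.
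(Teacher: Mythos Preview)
Your Item~1 is fine. Item~2, however, has a genuine gap in the choice of the isolation family.

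You set
\[
\mathcal{F}=\bigl\{(X,M_1,M_2)\in\mathcal{R}: V\setminus X\text{ is a pds of size }k\text{ with }deg(V\setminus X)=d\bigr\},
\]
with no constraint on the markers beyond $M_1\subseteq X$ and $M_2\subseteq E(G[X])$. Since every $\omega'$-value is strictly positive, for any fixed $X$ the triple $(X,\varnothing,\varnothing)$ has strictly smaller $\omega'$-weight than any triple with a nonempty marker set. Hence the unique $\omega'$-minimum of your $\mathcal{F}$ is necessarily of the form $(X^\star,\varnothing,\varnothing)$. This triple has $|M_1^\star|=0$, whereas membership in $S_{W^\star}^{\,n-k,\,m'^\star,\,n-k-m'^\star-D^\star,\,D^\star}$ forces $|M_1|=n-k-m'^\star-D^\star=(n-k)-|E(G[X^\star])|$, the number of tree components of $G[X^\star]$. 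Unless $G[X^\star]$ has no tree components, the isolated triple is not in that $S$-set at all, and in fact your isolation argument shows $|S_{W^\star}^{\,n-k,\,m'^\star,\,n-k-m'^\star-D^\star,\,D^\star}|=0$, not $1$.

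The fix is to isolate over the \emph{solution} family rather than the candidate family: take
\[
\mathcal{F}=\bigcup_{m',D}\Bigl\{(X,M_1,M_2)\in S^{\,n-k,\,m',\,n-k-m'-D,\,D}: deg(V\setminus X)=d\Bigr\},
\]
i.e.\ triples where every component of $G[X]-M_2$ is a marked tree and $|M_1|=(n-k)-|E(G[X])|$. This family is nonempty by the forward direction of Lemma~\ref{count_pseu1}, and now the isolated minimum genuinely lies in its own $S$-set, giving $|S_{W^\star}^{\,\cdot\,}|=1$ as you want. This is the point at which the analogy with Lemma~\ref{lem:afdccmain} breaks: there the universe was just $V$ and one isolated the deletion set itself, whereas here the universe carries marker coordinates and one must isolate the full marked solution, not merely the underlying vertex set. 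Once $\mathcal{F}$ is corrected, the rest of your argument (including the affine-shift observation transferring $\omega'$-isolation to $\omega$-isolation) goes through unchanged.
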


\subsection{$\mathcal{O}^\star(3^k)$ Algorithm in Polynomial Space}\label{sec:pseudo_3k}
In this section, we present an $\mathcal{O}^\star(3^k)$ algorithm using polynomial space for solving \textsc{Pseudoforest Deletion}. First, we state the equivalent of Claim \ref{claim:afdpoly} and Theorem \ref{thm:afdcc} for \textsc{Pseudoforest Deletion} problem. Their proofs are omitted since they work out by replacing the \emph{Cut \& Count} algorithm for {\sc RIAFD} with \emph{Cut \& Count} for \textsc{PDS} described above, replacing \texttt{RIAFD-FCCount} with \texttt{PF-FCCount}, taking modulo with $2$ instead of $2^t$ and following a similar line of reasoning.

\begin{claim}\label{claim-pds}
Given a tree decomposition $\mathbb{T}$ with a set $S\subseteq V$ which is present in all its bags and a vertex assignment function $f: S \rightarrow \{\boldsymbol{F},\boldsymbol{L},\boldsymbol{R}\}$, there is a routine \texttt{PF-FCCount}$(\mathbb{T},R,A,B,C,D,W,f)$ which can compute $|\mathcal{C}_{W}^{A,B,C,D}| \;(\textrm{mod} \; 2)$ in time $\mathcal{O}^\star\left(3^{\tw - |S|}\right)$.
\end{claim}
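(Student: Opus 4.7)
The plan is to follow the same strategy as Claim~\ref{claim:afdpoly}, adapted to the \emph{Cut \& Count} dynamic program for \textsc{Pseudoforest Deletion} from Section~\ref{sec:pseudo_3tw}. The routine \texttt{PF-FCCount} will run the bottom-up DP computing $A_x(a,b,c,d,w,s)$ exactly as defined there, except that at every bag $x$ we only materialize entries whose coloring $s \in \{\boldsymbol{F},\boldsymbol{L},\boldsymbol{R}\}^{B_x}$ satisfies $s(v) = f(v)$ for each $v \in B_x \cap S$; all table cells with colorings that disagree with $f$ on $S$ are treated as $0$. The final answer is read off as $A_r(A,B,C,D,W,\varnothing) \pmod 2$ at the root~$r$, which under this restriction equals $|\mathcal{C}_{W,f}^{A,B,C,D}| \pmod 2$.

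First I would verify that each of the five recurrences (leaf, introduce vertex, introduce edge, forget vertex, join) respects the restriction. For a newly introduced vertex $v$, one has $v \notin B_y$ for the child $y$, so $v \notin S$ by the hypothesis that $S$ is contained in every bag; consequently the freshly assigned value of $s(v)$ is over $B_x \setminus S$ and no constraint from $f$ is violated. The introduce edge recurrences only read entries whose coloring on $S \cap B_x$ is unchanged, hence preserve the restriction. At a forget vertex node, if $v \in S$ we sum only over the single value $\alpha = f(v)$ among the four summands (ignoring the others) and additionally include the $\boldsymbol{M}_1$-marker term only when $f(v) = \boldsymbol{L}$; if $v \notin S$ we keep all four summands unchanged. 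At a join node, the only valid combinations already require identical colorings in the two children, so restricting both children to colorings agreeing with $f$ is consistent. Correctness therefore follows immediately from the correctness of the unrestricted DP.

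For the running time, since $S \subseteq B_x$ for every bag $x$, the number of colorings $s$ with $s|_{S} = f|_{S}$ is at most $3^{|B_x|-|S|} \le 3^{\tw+1-|S|}$. The accumulators $a,b,c,d,w$ take $n^{\OO(1)}$ values each and the tree decomposition has $n^{\OO(1)}$ bags, so the total running time of \texttt{PF-FCCount} is $\mathcal{O}^\star(3^{\tw - |S|})$. Moreover, since we process the tree decomposition bottom-up and only need to store tables for bags on the active root-to-leaf path, the space usage is of the same order, which is what allows the outer procedure to achieve polynomial space when combined with the sum over all $3^{|S|}$ choices of $f$.

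The only mildly delicate point, and the one I would double-check carefully, is the bookkeeping at the forget vertex bag for vertices in $S$: the four recurrence terms must be collapsed to exactly those consistent with $f(v)$ (including the constraint that the marker-placement branch $s[v\to\boldsymbol{L}]$ is only selected when $f(v) = \boldsymbol{L}$). Beyond this, the argument is entirely analogous to Claim~\ref{claim:afdpoly}, and the routine thus computes $|\mathcal{C}_{W,f}^{A,B,C,D}| \pmod 2$ in the claimed time, which is what is needed to later sum over all $f : S \to \{\boldsymbol{F},\boldsymbol{L},\boldsymbol{R}\}$ and recover $|\mathcal{C}_{W}^{A,B,C,D}| \pmod 2$ in $\mathcal{O}^\star(3^{\tw})$ time and $\mathcal{O}^\star(3^{\tw-|S|})$ space.
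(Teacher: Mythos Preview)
Your proposal is correct and follows exactly the approach the paper intends: the paper itself omits the proof, stating only that it works by replacing the RIAFD \emph{Cut \& Count} DP with the PDS one and following the reasoning of Claim~\ref{claim:afdpoly}. You have faithfully carried out that substitution, and your additional per-node verification (in particular the observation that an introduced vertex cannot lie in $S$, and the careful handling of the marker branch at forget nodes) makes the argument more explicit than the paper's own treatment; note also that you correctly identify the output as $|\mathcal{C}_{W,f}^{A,B,C,D}|\pmod 2$, which is indeed what is summed over all $f$ to recover $|\mathcal{C}_{W}^{A,B,C,D}|\pmod 2$, matching the pattern of Claim~\ref{claim:afdpoly} despite the slightly imprecise wording of the claim statement.
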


\begin{theorem}\label{pds-space}
Given a tree decomposition $\mathbb{T}$, a set $S\subseteq V$ present in all bags of $\mathbb{T}$, parameter $k$, $\mathtt{CutandCountPF}$ solves the {\sc pseudoforest deletion} problem in $\mathcal{O}^{\star}\left(3^{\tw}\right)$ time and $\mathcal{O}^{\star}\left(3^{\tw - |S|}\right)$ space with high probability.
\end{theorem}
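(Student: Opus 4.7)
The plan is to mirror the argument used for Theorem~\ref{thm:afdcc}, replacing the role of \texttt{RIAFD-FCCount} with \texttt{PF-FCCount} and the \textsc{RIAFD}-specific isolation statement (Lemma~\ref{lem:afdccmain}) with its pseudoforest counterpart (Lemma~\ref{lem:pdsccmain}). Concretely, I would describe a procedure $\mathtt{CutandCountPF}(\mathbb{T},k,S)$ that loops over all reasonable parameter tuples $(A,B,C,D,W)$ with $A\ge n-k$, $B,D\le |V|$, $C = A-B-D$, and $W=i|V|^2+d$ for some $0\le d\le 2|E|$; for each such tuple it runs $n^{\mathcal{O}(1)}$ independent trials, in every trial samples a weight function $\omega$ as prescribed in Lemma~\ref{lem:pdsccmain}, then computes
\[
\mathtt{count} \;=\; \sum_{f\colon S\to\{\mathbf{F},\mathbf{L},\mathbf{R}\}} \mathtt{PF\text{-}FCCount}(\mathbb{T},A,B,C,D,W,f) \pmod{2}
\]
and checks whether $\mathtt{count}\not\equiv 0\ (\mathrm{mod}\ 2)$; if so, the algorithm recovers a {\sf pds} by standard self-reduction.

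For correctness, the decomposition $|\mathcal{C}_W^{A,B,C,D}| = \sum_f |\mathcal{C}_{W,f}^{A,B,C,D}|$ (which parallels the one stated just before Claim~\ref{claim:afdpoly} and is immediate from the definition, since the fixed assignment $f$ of vertices in $S$ partitions the configurations) shows that the sum computed by the inner loop equals $|\mathcal{C}_W^{A,B,C,D}| \pmod{2}$. I would then invoke Lemma~\ref{lem:pdsccmain}: Item~(1) yields one-sided correctness (whenever \texttt{count} is nonzero modulo $2$, a pseudoforest deletion set of size $k$ is guaranteed to exist), and Item~(2) yields success probability at least $1/2$ in each trial when a solution exists, so $n^{\mathcal{O}(1)}$ repetitions amplify this to high probability by Lemma~\ref{lem:randomalgo}.

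For the resource bounds, Claim~\ref{claim-pds} states that each call to \texttt{PF-FCCount} runs in $\mathcal{O}^\star(3^{\tw - |S|})$ time and space. Since there are $3^{|S|}$ choices of $f$ and each call is executed independently and its (modulo~$2$) output is accumulated into a single counter, the total time per trial is $\mathcal{O}^\star(3^{\tw})$ while the peak space usage remains $\mathcal{O}^\star(3^{\tw-|S|})$ because the work for different $f$'s does not need to be stored simultaneously. All other steps (iterating over $(A,B,C,D,W)$, weight sampling, the outer repetitions, and the polynomial-time self-reduction) contribute only polynomial overhead in time and space, so the overall bounds $\mathcal{O}^\star(3^{\tw})$ time and $\mathcal{O}^\star(3^{\tw-|S|})$ space follow.

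I do not anticipate a genuine obstacle here: the statement is essentially a bookkeeping lift of Claim~\ref{claim-pds} together with Lemma~\ref{lem:pdsccmain}, in exact analogy with Theorem~\ref{thm:afdcc}. The only mild subtlety is making sure the enumeration ranges for $(A,B,C,D,W)$ are polynomial in $n$ (they are, since $B,D\le |V|$, $C\le |V|$, and the relevant $W$ values lie in a polynomial-size interval once one fixes the residue $d$ guaranteed by Lemma~\ref{lem:pdsccmain}); once that is in place, the time, space, and probability accounting go through verbatim.
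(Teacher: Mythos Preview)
Your proposal is correct and follows exactly the approach the paper intends: the paper explicitly omits the proof of Theorem~\ref{pds-space}, stating that it works by replacing the \emph{Cut \& Count} algorithm for {\sc RIAFD} with the one for {\sc PDS}, replacing \texttt{RIAFD-FCCount} with \texttt{PF-FCCount}, taking modulo $2$ instead of $2^t$, and otherwise reasoning as in Theorem~\ref{thm:afdcc}. Your write-up carries this out precisely, invoking Claim~\ref{claim-pds} and Lemma~\ref{lem:pdsccmain} in place of Claim~\ref{claim:afdpoly} and Lemma~\ref{lem:afdccmain}.
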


\begin{lemma}\label{lem:pseudo_ktw}
    Given a graph $G(V,E)$ and a {\sf pds} $P$ of size $k$, you can construct a tree decomposition $\mathbb{T}$ which contains the set $P$ in all bags and has width at most $k+2$ in polynomial time.
\end{lemma}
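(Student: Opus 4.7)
The plan is to first construct a tree decomposition of width at most $2$ for the pseudoforest $G - P$, and then to insert the set $P$ (of size $k$) into every bag so that the width grows by exactly $|P| = k$, giving width at most $k+2$.

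First I would handle each connected component $C$ of $G - P$ separately, since $G - P$ is a pseudoforest. If $C$ is a tree, a standard decomposition with bags $\{u,v\}$ for each edge $(u,v)$ arranged along a spanning tree gives width $1$. If $C$ contains a (necessarily unique) cycle, I would pick an arbitrary vertex $v^\star$ on that cycle. Then $C - v^\star$ is a tree, so it admits a width-$1$ decomposition with bags of size at most $2$; adding $v^\star$ to every bag of this decomposition gives a valid decomposition of $C$ of width at most $2$, because every edge of $C$ is either an edge of $C - v^\star$ (covered by the original decomposition) or an edge incident to $v^\star$ (covered since $v^\star$ together with its neighbor now co-occurs in some bag), and the vertex–subtree connectedness is preserved for $v^\star$ trivially. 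The decompositions for different components can be stitched together into one tree by arbitrarily joining their decomposition trees with edges; this yields a tree decomposition $\mathbb{T}'$ of $G - P$ of width at most $2$ in polynomial time.

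Next I would define $\mathbb{T}$ by replacing each bag $B$ of $\mathbb{T}'$ with $B \cup P$. Clearly every bag now has size at most $3 + k$, so the width is at most $k+2$, and $P$ lies in every bag by construction. The three tree-decomposition axioms are immediate: every vertex of $V \setminus P$ is already covered by $\mathbb{T}'$ and every vertex of $P$ is in every bag; every edge of $G[V \setminus P]$ is covered by $\mathbb{T}'$, every edge with an endpoint in $P$ is covered because $P$ is in every bag and the other endpoint appears in some bag, and every edge inside $G[P]$ is covered since both endpoints lie in $P \subseteq B$ for all bags $B$; the bags containing any fixed vertex of $V \setminus P$ form a subtree by the same property in $\mathbb{T}'$, while for any vertex of $P$ all bags contain it, giving the whole tree as the subtree.

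The only real content is the width-$2$ decomposition of a pseudoforest, and the obstacle there is just the single-cycle component, which the trick of removing one cycle vertex $v^\star$ resolves cleanly. Every step is constructive and runs in polynomial time: finding connected components, finding the cycle in each unicyclic component, building spanning-tree decompositions, and augmenting bags with $P$ are all polynomial-time operations.
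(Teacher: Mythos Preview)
Your proposal is correct and follows essentially the same approach as the paper: build a width-$2$ decomposition of the pseudoforest $G-P$ component by component, then add $P$ to every bag. The only minor difference is that the paper breaks each unicyclic component by deleting a single \emph{edge} $(u,v)$ from the cycle (which always leaves a spanning tree) and then adds the endpoint $u$ to all bags, whereas you delete a \emph{vertex} $v^\star$ on the cycle; note that $C - v^\star$ need not be a tree (if $v^\star$ has degree $\ge 3$ it may be a forest), so your sentence ``Then $C - v^\star$ is a tree'' should read ``forest''. This does not affect the argument, since a forest still has treewidth at most $1$ and adding $v^\star$ to every bag still covers all edges incident to $v^\star$.
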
 
\begin{proof}
    $G[V \setminus P]$ is a pseudoforest. Let $G[V \setminus F]$ have $c$ connected components. Let us consider the $i^{\text{th}}$ component $C_i$ and denote their individual tree decomposition as $\mathbb{T}_i$. $C_i$ is either a tree or a pseudoforest. If $C_i$ is a tree there is a trivial tree decomposition $\mathbb{T}_i$ of width $1$. If not, then $C_i$ is a pseudotree. Remove any edge $(u,v)$ from the only cycle in $C_i$ and construct the tree decomposition of the remaining tree. Add the vertex $u$ in all bags of that tree decomposition to get $\mathbb{T}_i$ of width $2$ for the pseudotree $C_i$. Now, make an empty bag as the root and connect the root of all $\mathbb{T}_i$ to it and call the resulting tree decomposition (of width $2$) $\mathbb{T}'_i$. Now, adding $P$ to all bags of $\mathbb{T}'_i$ gives the desired tree decomposition $\mathbb{T}_i$ of width $k+2$. The time bound is trivial from the description of the procedure. 
\end{proof}

Now, we state the following lemma and prove it.

\begin{lemma}\label{lem:pseudo_3k3l}
There exists an algorithm $\mathtt{PF3k}$ that solves \textsc{Pseudoforest Deletion} in $\mathcal{O}^\star(3^k)$ time and polynomial space with high probability.
\end{lemma}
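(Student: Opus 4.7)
The plan is to mirror the proof of Theorem~\ref{thm:afd}~(1) (algorithm \hyperref[alg:afd_3k3l]{$\mathtt{RIAFD3k3l}$}), adapting it to \textsc{Pseudoforest Deletion} using the building blocks already established in Section~\ref{sec:pseudo_3tw}: the \emph{Cut \& Count} routine from Theorem~\ref{pds-space}, which exploits a set $S$ present in every bag to cut the space usage down to $\mathcal{O}^\star(3^{\tw-|S|})$, and the tree-decomposition construction of Lemma~\ref{lem:pseudo_ktw}, which turns a known {\sf pds} into a tree decomposition of width at most $k+2$ in which the {\sf pds} sits in every bag.

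First I would fix an arbitrary ordering $(v_1,\dots,v_n)$ of $V(G)$ and set $V_i:=\{v_1,\dots,v_i\}$. The algorithm $\mathtt{PF3k}$ then iteratively maintains a {\sf pds} $F$ of $G[V_i]$ of size at most $k$ as follows. At iteration $i$ we are given $F$, a {\sf pds} of $G[V_{i-1}]$ of size at most $k$; apply Lemma~\ref{lem:pseudo_ktw} to obtain a tree decomposition $\mathbb{T}$ of $G[V_{i-1}]$ of width at most $k+2$ in which $F$ lies in every bag. Augment $\mathbb{T}$ into a tree decomposition of $G[V_i]$ by adding $v_i$ to every bag (this adds only the vertex itself together with its edges to $V_{i-1}$, which are then introduced in the natural way); the width becomes at most $k+3$, and the enlarged set $S:=F\cup\{v_i\}$ of size at most $k+1$ is present in every bag. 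Now invoke $\mathtt{CutandCountPF}(\mathbb{T},k,S)$ from Theorem~\ref{pds-space} to compute a {\sf pds} of $G[V_i]$ of size at most $k$; if none exists, output \textbf{Infeasible}.

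For correctness, if $F^\star$ is a {\sf pds} of $G$ of size at most $k$, then $F^\star\cap V_i$ is a {\sf pds} of $G[V_i]$ of size at most $k$, so every call on Line~7 is feasible and, by Theorem~\ref{pds-space}, succeeds with high probability; a union bound over the $n$ iterations gives overall high probability of success. For the running-time and space bounds: $|S|\le k+1$ and $\tw(\mathbb{T})\le k+3$, so $\tw(\mathbb{T})-|S|\le 2$, and Theorem~\ref{pds-space} runs each call in time $\mathcal{O}^\star(3^{k+3})=\mathcal{O}^\star(3^k)$ and space $\mathcal{O}^\star(3^{2})=\mathcal{O}^\star(1)$, i.e., polynomial space. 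Summing over the $n$ iterations preserves both bounds.

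The potentially delicate step is ensuring that Lemma~\ref{lem:pseudo_ktw} really does yield a decomposition of width $k+2$ (and not $k+3$) on the graph $G[V_{i-1}]$; this matters because the subsequent addition of $v_i$ bumps the width by one, and we need the final gap $\tw(\mathbb{T})-|S|$ to be a constant so that the space bound remains polynomial. However, this is immediate from Lemma~\ref{lem:pseudo_ktw}, whose construction puts $F$ in every bag and gives width at most $|F|+2$, and the fact that introducing a single new vertex into all bags increases both the width and $|S|$ by exactly one, keeping the gap constant. Hence no genuine obstacle remains, and the theorem follows by assembling these pieces exactly in the style of Theorem~\ref{thm:afd}~(1).
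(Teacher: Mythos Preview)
Your proposal is correct and follows essentially the same approach as the paper's own proof: the paper explicitly says to take the algorithm $\mathtt{RIAFD3k3l}$, replace $\mathtt{RIAFDCutandCount}$ by $\mathtt{CutandCountPF}$, and substitute Lemma~\ref{lem:pseudo_ktw} and Theorem~\ref{pds-space} for their {\sc RIAFD} counterparts, which is precisely the iterative-compression scheme you spell out. Your observation that the width bound from Lemma~\ref{lem:pseudo_ktw} is actually $|F|+2$ (rather than a fixed $k+2$) is exactly what is needed to keep $\tw(\mathbb{T})-|S|$ constant and hence the space polynomial.
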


\begin{proof}
In algorithm \texttt{RIAFD3k3l} from Section \ref{sec:afd_3k3l}, replace \texttt{RIAFDCutandCount} with \texttt{CutandCountPF}. Also replace the equivalent lemmas, theorems and claims. Denote this algorithm as \texttt{PF3k}. The proof of correctness and success-probability is similar to Theorem \ref{thm:afd} $(1)$ in Section \ref{sec:afd_3k3l}. The running time and space bound follow by similar arguments in the proof of Theorem \ref{thm:afd} $(1)$ and Lemmas \ref{pds-space}, \ref{lem:pseudo_ktw}.
\end{proof}

\subsection{$\mathcal{O}^{\star}(2.85^k)$ Algorithm in Polynomial space}\label{sec:pseudo_improvedk}
In this section, we present a $\mathcal{O}^{\star}(2.85^k)$ algorithm using polynomial space. We use the method from \cite{li-soda20}, dividing the problem into sparse and dense cases. Following are a few basic reduction rules for \textsc{Pseudoforest Deletion}, which are quite similar to those for \fvs.

\begin{definition}\label{red:pseudo1}
    \textbf{Reduction $1$:} Apply the following reduction rules exhaustively until there is no edge of multiplicity larger than 3, there are vertices with at most one loop and degree at least 3.
    \begin{enumerate}
        \item If there is more than one loop at a vertex $v$, delete v and decrease $k$ by 1; add $v$ to the output {\sf pds}.
        \item If there is an edge of multiplicity larger than 3, reduce its multiplicity to 3.
        \item If there is a a vertex $v$ of degree at most 1, delete $v$.
        \item If there is a vertex  $v$ of degree 2, delete $v$ and join its neighbours with an edge.
        \item If $k<0$, then we have a no instance. If $k>0$ and $G$ is a pseudoforest, then we have a yes instance. If $k=0$, we have a yes instance iff $G$ is a pseudoforest.
    \end{enumerate}
\end{definition}

\subsubsection{Dense case}
In this case, we apply a probabilistic reduction that capitalises on the fact that a large number of edges are incident to the {\sf pds}. We will use the same ideas as of Reduction $2$ for {\sc RIAFD} in Section \ref{sec:afd_improve_k}. Thus, even here we aim to obtain a reduction that succeeds with probability strictly greater than $1/3$ so as to achieve a randomized algorithm running in $\mathcal{O}^{\star}(3-\epsilon)^k$ time that succeeds with high probability.

\begin{definition} \label{red:pseudo2}
    \textbf{Reduction $2$ (P):} Assume that Reduction $1$ does not apply and $G$ has a vertex of degree at least $3$. Sample a vertex $v \in V$ proportional to $\omega(v) := (deg(v)-2)$. That is, select each vertex with probability $\frac{\omega(v)}{\omega(V)}$. Delete $v$ and decrease $k$ by $1$.
\end{definition}

\begin{claim} \label{claim:pseudo_degf}
    Let $G$ be a graph, $P$ a {\sf pds} of $G$. Denote $\overline{P} := V \setminus P$. We have that,
    $$deg(\overline{P}) \le deg(P) + 2(|\overline{P}|). $$
\end{claim}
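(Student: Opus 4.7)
The plan is to mimic the short counting argument used for Claim~\ref{claim:afd_degf} in the {\sc RIAFD} section, with the bound on the number of edges in $G[\overline{P}]$ adjusted to exploit the pseudoforest structure. First, I would split the degree sum $deg(\overline{P})$ according to whether an edge incident to $\overline{P}$ has its other endpoint inside $\overline{P}$ or in $P$, writing
\begin{equation*}
    deg(\overline{P}) \;=\; 2\,|E(\overline{P})| \;+\; |E(\overline{P},P)|.
\end{equation*}

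Next, I would bound each term. For the cross term, every edge in $E(\overline{P},P)$ contributes at least one to $deg(P)$, so $|E(\overline{P},P)| \le deg(P)$. For the internal term, the hypothesis that $P$ is a pseudoforest deletion set means $G[\overline{P}]$ is a pseudoforest, and every pseudoforest on $n'$ vertices has at most $n'$ edges (each connected component of a pseudoforest has at most one cycle, hence at most as many edges as vertices). Applying this with $n' = |\overline{P}|$ gives $|E(\overline{P})| \le |\overline{P}|$. Combining these two bounds yields $deg(\overline{P}) \le deg(P) + 2|\overline{P}|$, as claimed.

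The only step that needs a brief justification is the pseudoforest edge bound $|E(\overline{P})| \le |\overline{P}|$; this follows immediately from summing $n_i' \ge m_i'$ over the connected components (a tree component has $m_i' = n_i' - 1$ and a unicyclic component has $m_i' = n_i'$), but I would mention it explicitly because the analogous bound in Claim~\ref{claim:afd_degf} had the extra $\ell-1$ slack from $\ell$-forests, whereas here the slack disappears and the "$-1$" per component is simply absorbed into the inequality. No obstacle is expected; the proof will be two short lines, parallel to the {\sc RIAFD} version but with the pseudoforest edge count replacing the $\ell$-forest edge count.
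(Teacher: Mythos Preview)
Your proposal is correct and follows exactly the approach the paper intends: the paper omits the proof and simply states that it ``follows a similar line of reasoning as the proof of Claim~\ref{claim:afd_degf}'', which is precisely the decomposition $deg(\overline{P}) = 2|E(\overline{P})| + |E(\overline{P},P)|$ together with the pseudoforest edge bound $|E(\overline{P})| \le |\overline{P}|$ that you spell out.
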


\begin{lemma} \label{lem:pseudo_red2}
    Given a graph $G$, if there exists a {\sf pds} $P$ of size $k$ such that $deg(P) \ge \frac{4-2\epsilon}{1-\epsilon}k$, then success of Reduction $2$ which is essentially picking a vertex $v$ from the {\sf pds} $P$ occurs with probability at least $\frac{1}{3-\epsilon}$.
\end{lemma}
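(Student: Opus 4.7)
The plan is to mirror the proof of Lemma~\ref{lem:afd_red2} almost verbatim, just simplified by the absence of the $\ell$ term. Since Reduction~2 picks $v\in V$ with probability $\omega(v)/\omega(V)$, the probability that $v\in P$ is exactly $\omega(P)/\omega(V) = \omega(P)/(\omega(P)+\omega(\overline P))$. So it suffices to show $\omega(P)/\omega(\overline P)\geq 1/(2-\epsilon)$, which algebraically rearranges to the probability lower bound $1/(3-\epsilon)$.

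Next I would compute the two weight sums explicitly. Using $\omega(v)=deg(v)-2$ and $|P|=k$, I get
\[
\omega(P)=\sum_{v\in P}(deg(v)-2)=deg(P)-2k.
\]
For the complement, I would invoke Claim~\ref{claim:pseudo_degf}, which gives $deg(\overline P)\leq deg(P)+2|\overline P|$ (this is where \textsc{PDS} is cleaner than \textsc{RIAFD}: for a pseudoforest $G[\overline P]$ we have $|E(\overline P)|\leq |\overline P|$, so the $\ell$ slack disappears). Consequently
\[
\omega(\overline P)=deg(\overline P)-2|\overline P|\leq deg(P).
\]

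Putting the two bounds together,
\[
\frac{\omega(P)}{\omega(\overline P)}\;\geq\;\frac{deg(P)-2k}{deg(P)}\;=\;1-\frac{2k}{deg(P)}.
\]
The final step is to check that the hypothesis $deg(P)\geq \frac{4-2\epsilon}{1-\epsilon}k$ is exactly the threshold at which this quantity reaches $1/(2-\epsilon)$: solving $1-2k/deg(P)\geq 1/(2-\epsilon)$ yields $deg(P)\geq \frac{2(2-\epsilon)}{1-\epsilon}k$, which matches. There is really no combinatorial obstacle here; the only thing to be careful about is applying Claim~\ref{claim:pseudo_degf} in the right direction and remembering that the $2|\overline P|$ term cancels against the $-2$ shift built into $\omega$, which is precisely the reason the weight function was chosen as $deg(v)-2$ in the first place.
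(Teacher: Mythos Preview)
Your proof is correct and takes essentially the same approach as the paper, which omits the proof and points to Lemma~\ref{lem:afd_red2}; you have simply specialized that argument with $\ell=0$, using Claim~\ref{claim:pseudo_degf} to get $\omega(\overline P)\le \deg(P)$ and then solving for the threshold on $\deg(P)$.
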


The proofs of the above claim and lemma follow a similar line of reasoning as the proofs of Claim \ref{claim:afd_degf} and Lemma \ref{lem:afd_red2}, hence they are omitted.

\subsubsection{Sparse case}
In this case, since $deg(P)/|P| \le \overline{d}$ and $\overline{d} = \mathcal{O}(1)$, it is possible to get a tree decomposition of size $(1-\Omega(1))k$.

We state this without proof through the following lemmas since they use the same ideas from \cite{li-soda20}.

\begin{lemma}\label{pds_sep}
Given $(G, k)$ and a {\sf pds} $P$ of $G$ of size exactly $k$, define $\overline{d} := \frac{deg(P)}{k}$ , and suppose that $\overline{d} = \mathcal{O}(1)$. There is a randomized algorithm running in expected polynomial time that computes a separation $(A, B, S)$ of $G$ such that:
\begin{enumerate}
    \item $|A\cap P|,|B\cap P|\geq (2^{-\overline{d}}-o(1))k$
    \item $|S|\leq (1+o(1))k-|A\cap P|-|B\cap P|$
\end{enumerate}
\end{lemma}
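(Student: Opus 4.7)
The plan is to mimic the proof of Lemma \ref{lem:afd_smallsep}, specialized to the pseudoforest setting. Since $G - P$ is a pseudoforest, each connected component is either a tree (treewidth $1$) or a tree-plus-one-edge (treewidth $2$, by adding one endpoint of the extra edge to every bag of a tree decomposition of the underlying tree), so $\tw(G - P) \le 2$. This lets me skip the bespoke $\ell$-forest balanced-separator argument (Lemma \ref{lem:afd_betasep}) entirely and invoke Lemma \ref{lem:generalized_beta_separator} as a black box, which also kills the $-\ell$ slack term that appears in condition (1) of Lemma \ref{lem:afd_smallsep}.

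First I would fix $\epsilon := k^{-0.01}$ and compute a tree decomposition of width $2$ for the pseudoforest $G - P$ as above. Weighting each vertex $v \in V \setminus P$ by $|E[v, P]|$ and applying Lemma \ref{lem:generalized_beta_separator} with $\beta := \epsilon k$ yields a set $S_\epsilon \subseteq V \setminus P$ of size $3 \epsilon k = o(k)$ such that every connected component $C$ of $G - P - S_\epsilon$ satisfies
\[
|E[C, P]| \;\le\; \frac{|E[\overline{P}, P]|}{\epsilon k} \;\le\; \frac{\overline{d}\,k}{\epsilon k} \;=\; \frac{\overline{d}}{\epsilon}.
\]
I would then build the bipartite graph $H$ on the bipartition $P \uplus \mathcal{R}$ exactly as in Lemma \ref{lem:afd_smallsep}: for each component $C$ of $G - P - S_\epsilon$ create a component vertex $v_C$ adjacent to all $P$-vertices that touch $C$, and for each edge of $E[P, P]$ create a subdivision vertex adjacent to its two endpoints. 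As before, $|\mathcal{R}| \le \deg(P)$, every vertex of $\mathcal{R}$ has degree at most $\overline{d}/\epsilon$ in $H$, and the degree of $v \in P$ in $H$ is at most $\deg(v)$.

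Next I would two-color $\mathcal{R}$ uniformly and independently at random, letting every vertex of $P$ whose $H$-neighbours are monochromatically red (resp.\ blue) join $A$ (resp.\ $B$), every component follow the color of its component vertex, and all remaining vertices of $P$ together with $S_\epsilon$ form $S$. The separator property is immediate from the coloring. For condition (1), ignore the at most $\epsilon k = o(k)$ vertices of $P$ with degree exceeding $\overline{d}/\epsilon$ (they perturb the bound by only $o(k)$), form the intersection graph $I$ on the remaining set $P'$ which has maximum degree $(\overline{d}/\epsilon)^2$, and greedily color $P'$ with $(\overline{d}/\epsilon)^2 + 1$ colors so that outcomes within each color class are mutually independent. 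Discarding color classes of size below $k^{0.9}$ costs only $o(k)$, and on each remaining class Hoeffding's inequality together with a union bound gives concentration within $\pm k^{0.8}$ of the expectation with probability $1 - k^{-\Omega(1)}$. Convexity of $2^{-x}$, together with $\deg(P')/|P'| \le \overline{d}$ and $|P'| \ge k - o(k)$, then yields
\[
|A \cap P| \;\ge\; |P'| \cdot 2^{-\deg(P')/|P'|} - o(k) \;\ge\; (2^{-\overline{d}} - o(1))\,k,
\]
and symmetrically for $B \cap P$. Condition (2) is a one-line accounting: $|S| \le |S_\epsilon| + (k - |A \cap P| - |B \cap P|) \le (1 + o(1))k - |A \cap P| - |B \cap P|$.

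The hard part, such as it is, is just the concentration argument around Hoeffding's inequality on color classes of the intersection graph; but since this is the same calculation already carried out in the proof of Lemma \ref{lem:afd_smallsep} (with the $\ell$-terms simply set to zero), there is essentially no new technical obstacle. The one thing worth double-checking is that Lemma \ref{lem:generalized_beta_separator} indeed gives a separator of size only $O(\epsilon k)$ on $G - P$, which it does precisely because the pseudoforest has treewidth at most $2 = O(1)$ rather than $O(\ell)$ as in the $\ell$-forest setting; this is why the $-\ell$ adjustment disappears from the final bound.
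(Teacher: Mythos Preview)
Your proposal is correct and matches the paper's own proof essentially line for line: the paper also observes that a pseudoforest has treewidth at most $2$, invokes Lemma~\ref{lem:generalized_beta_separator} with $\beta=\epsilon k$ to obtain $|S_\epsilon|\le 3\epsilon k=o(k)$, and then defers the remainder to the argument of Lemma~\ref{lem:afd_smallsep} (equivalently, \cite{li-soda20}). Your remark that the $-\ell$ correction disappears precisely because the separator now has size $O(\epsilon k)$ rather than $\ell+O(\epsilon k)$ is exactly the point.
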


\begin{proof}
    The proof is similar to that in \cite{li-soda20}. The only difference is in the first step i.e construction of a $\beta$-separator $S_{\epsilon}$. For this we can use Lemma \ref{lem:generalized_beta_separator} which gives a $\beta$-separator of size at most $3\beta$ ($\tw$ of any pseudoforest is $\le 2$), and as $\beta = \epsilon k = o(k)$, $|S_{\epsilon}| = o(k)$. All other steps and bounds remain exactly the same.
\end{proof}

\begin{lemma}\label{pds_tw}
Let $G$ be a graph and $P$ be a {\sf pds} of $G$ of size $k$, and define $\overline{d} := \frac{deg(P)}{k}$. There is an algorithm that, given $G$ and $P$, computes a tree decomposition of $G$ of width at most $(1-2^{-\overline{d}}+o(1))k$, and runs in polynomial time in expectation.
\end{lemma}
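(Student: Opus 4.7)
The plan is to mirror the construction used in Lemma~\ref{lem:afd_smallsep_final} for \textsc{RIAFD}, using Lemma~\ref{pds_sep} in place of Lemma~\ref{lem:afd_smallsep}, and exploiting the fact that a pseudoforest has treewidth at most~$2$ in place of the ``forest has treewidth at most $1$'' observation used there.

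First I would invoke Lemma~\ref{pds_sep} on $(G,P)$ to compute a separation $(A,B,S)$ satisfying conditions $(1)$ and $(2)$; both conditions are verifiable in polynomial time, so repeated sampling yields the separation in expected polynomial time. Next, I observe that $G[A\cup S]-(P\cup S)$ is an induced subgraph of the pseudoforest $G-P$ and is therefore itself a pseudoforest, so it admits a tree decomposition of width at most $2$ (each pseudotree component can be handled by removing one edge from its unique cycle, tree-decomposing the resulting tree with width $1$, and adding one endpoint of the removed edge to every bag). Adding all vertices of $(A\cap P)\cup S$ to every bag of this decomposition gives a tree decomposition $\mathbb{T}_1$ of $G[A\cup S]$ of width at most
\[
|(A\cap P)\cup S|+2 \;=\; |A\cap P|+|S|+2 \;\le\; (1+o(1))k-|B\cap P|+2 \;\le\; (1-2^{-\overline{d}}+o(1))k,
\]
where the penultimate inequality uses condition~$(2)$ and the last uses condition~$(1)$ on $|B\cap P|$ together with absorbing the additive constant into the $o(1)$ term. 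Symmetrically construct $\mathbb{T}_2$ for $G[B\cup S]$ of the same width bound.

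Finally, since $(A,B,S)$ is a separation the only edges of $G$ lie inside $A\cup S$ or inside $B\cup S$; every bag of $\mathbb{T}_1$ and of $\mathbb{T}_2$ contains $S$, so joining an arbitrary node of $\mathbb{T}_1$ to an arbitrary node of $\mathbb{T}_2$ by a new tree edge produces a valid tree decomposition $\mathbb{T}$ of $G$ of width $(1-2^{-\overline{d}}+o(1))k$. The whole procedure runs in polynomial time in expectation, since all non-sampling steps are polynomial and the sampling in Lemma~\ref{pds_sep} succeeds in expected polynomial time.

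The only real subtlety compared with the \textsc{RIAFD} proof is the need to bound the treewidth of a pseudoforest by $2$ rather than the treewidth of a forest by $1$; this merely contributes an additive constant that vanishes inside the $o(1)k$ slack, so no genuine obstacle is expected. Everything else—the separator guarantees, the repetition-until-success strategy, and the joining of $\mathbb{T}_1$ and $\mathbb{T}_2$ along the shared set $S$—transfers verbatim from the \textsc{RIAFD} argument.
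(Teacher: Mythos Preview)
Your proof is correct and follows essentially the same approach the paper intends: the paper states Lemma~\ref{pds_tw} without proof, referring back to the ideas of Lemma~\ref{lem:afd_smallsep_final} and \cite{li-soda20}, and your adaptation---replacing the ``forest after removing $S_\epsilon$'' observation by the simpler fact that any induced subgraph of a pseudoforest is a pseudoforest of treewidth at most~$2$, and absorbing the resulting $+2$ into the $o(1)k$ term---is exactly the right modification.
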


As we are in the sparse case, which means that there exists a {\sf pds} $P$ of size $k$ with bounded degree, i.e., $deg(P) \le \overline{d}k$. We call this bounded version of the problem, {\sc BPDS}. As we saw, the small separator helps in constructing a tree decomposition of small width, but requires that we are given a {\sf pds} of size $k$ and bounded degree. To attain this, we use an Iterative Compression based procedure which at every iteration considers a {\sf pds} of size at most $k$ with bounded degree and uses it to construct the small separator. Using this small separator we construct a tree decomposition of small width and employ a \emph{Cut \& Count} based procedure to solve {\sc BPDS} for the current induced subgraph, i.e, get a {\sf bpds} of size at most $k$ with bounded degree. This {\sf bpds} is used for the next iteration, and so on. 



\begin{note}
\label{note_tree_dec}
Using the tree decomposition obtained in \ref{pds_tw}, we can run the \emph{Cut \& Count} algorithm from Section \ref{sec:pseudo_3tw}. But this will utilize exponential space. To get polynomial space, we use the following idea.

Given an $(A,B,S)$ separation of a graph $G$ according to Lemma \ref{pds_sep} along with a {\sf pds} $P$ of size at most $k$ of bounded average degree $\overline{d}$, we construct a tree decomposition  $\mathbb{T}'$ of $G$ as follows: Since $A\cap P\cup S$ is a {\sf pds} for $A\cup S$, we construct a nice tree decomposition $\mathbb{T}_1$ of $A\cup S$ which forgets all vertices in $S$ at the last(going from a leaf bag to the root bag). Hence there is a bag $B_y$ which contains all the vertices $v\in S$ and nothing else. Upto this bag, no edge $e\in E[S,S]$ is introduced. Consider this part of the tree decomposition of $\mathbb{T}_1$(denote as $\mathbb{T}_1'$) up to node $y$. Similarly we construct a tree decomposition $\mathbb{T}_2$ for partition $B$ as we did for $A$. There is a bag $B_z$ in $\mathbb{T}_2$ which contains all vertices $v\in S$ and nothing more. Denote the tree decomposition up to node $z$ for $\mathbb{T}_2$ as $\mathbb{T}_2'$. The final tree decomposition $\mathbb{T}$ for $G$ is constructed by joining $\mathbb{T}_1'$ and $\mathbb{T}_2'$ via a join node and then going toward the root we have the introduce edge bags and forget vertex bags for $v\in S$. We use this tree decomposition $\mathbb{T}'$ for proving the polynomial space bound in Algorithm 10. 
\end{note}




Now, we give the claimed \hyperref[alg:bpds]{\texttt{BPDS}} algorithm, which is a \emph{Cut \& Count} based algorithm which solves bounded degree {\sc PDS} given a small separator.

\begin{algorithm}[!ht]
\label{alg:bpds}
\SetKwInput{KwInput}{Input}
\SetKwInput{KwOutput}{Output}
\caption{$\mathtt{BPDS}(G,P,k,A,B,S)$}
\SetAlgoLined
\KwInput{Graph $G=(V,E)$, {\sf pds} $P$ of size at most $k+1$, parameters k, $\overline{d}\leq n$ and a separation $(A,B,S)$ from Lemma \ref{pds_sep}.}
\KwOutput{A {\sf pds} $P$ of size at most $k$ satisfying $deg(P)/|P|\leq \overline{d}$ or $\mathtt{Infeasible}$ if no such set exists.}
\Begin{
Set the universe $U=V\times \{\boldsymbol{P},\boldsymbol{M}_1\}\cup E\times \{\boldsymbol{M}_2\}$\\
Pick $\omega$ uniformly and independently at random as defined in Lemma \ref{lem:pdsccmain}

Construct the tree decomposition $\mathbb{T}'$ as stated earlier in note \ref{note_tree_dec}\\
Compute $C_W^{A,B,C,D}$ for all reasonable values of A,B,C,D,W using $\mathtt{CutandCountPF}$\\
\Return{A {\sf pds} $P$ with $|P|\leq k$ and $deg(P)/|P|\leq \overline{d}$}
}
\end{algorithm}

\begin{lemma}\label{BPDS}
There is an Algorithm $\mathtt{BPDS}$ that, given $G$, a {\sf pds} $P$ of $G$ of size at most $k+1$, parameter $\overline{d}$, and a separation $(A, B, S)$ as given by Lemma \ref{pds_sep}, outputs a {\sf pds} of size at most $k$ satisfying $deg(P)/|P| \leq d$,
or $\mathtt{Infeasible}$ if none exists. The algorithm uses $\mathcal{O}^{\star}(3^{(1-2^{-\overline{d}}+o(1))k})$ time and polynomial space.
\end{lemma}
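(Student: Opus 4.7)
The plan is to mimic the proof of Lemma \ref{lem:afd_bafd1} in the pseudoforest setting, leveraging the \emph{Cut \& Count} dynamic program of Section \ref{sec:pseudo_3tw} together with the tree decomposition $\mathbb{T}'$ built in Note \ref{note_tree_dec}. First I would establish the direct analogue of Claim \ref{claim:afdpoly} for \textsc{PDS}: given a tree decomposition $\mathbb{T}$ whose every bag contains a common set $U \subseteq V$ and a coloring $f : U \to \{\boldsymbol{F}, \boldsymbol{L}, \boldsymbol{R}\}$, the DP from Section \ref{sec:pseudo_3tw} restricted to colorings that agree with $f$ on $U$ computes the $f$-restricted count $|\mathcal{C}_W^{A,B,C,D}|_f \pmod 2$ in time and space $\mathcal{O}^\star(3^{\tw(\mathbb{T}) - |U|})$. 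Applied to $\mathbb{T}_1'$ with $U := (A \cap P) \cup S$, the residual width is at most $2$ because $G[A \cup S] - P$ is a subgraph of the pseudoforest $G[V \setminus P]$ and therefore has treewidth at most $2$ (cf.\ the construction in Lemma \ref{lem:pseudo_ktw}). Thus each restricted call runs in polynomial time and polynomial space, and analogously for $\mathbb{T}_2'$.

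For correctness, I would show that summing, over all triples of colorings $(f, f_A, f_B)$ of $S$, $A \cap P$ and $B \cap P$ and over all splits of the five \emph{Cut \& Count} parameters $(A, B, C, D, W)$ between the two sides, the product of the $\mathbb{T}_1'$- and $\mathbb{T}_2'$-restricted counts recovers $|\mathcal{C}_W^{A,B,C,D}| \pmod 2$. The key structural fact is that $(A,B,S)$ is a separation, so once the coloring on $S$ is fixed, the contributions of the two halves to $\mathcal{C}_W^{A,B,C,D}$ factor as a product. Combined with Lemma \ref{lem:pdsccmain} and the $n^{\mathcal{O}(1)}$ random re-weightings (implicit in line 3 of $\mathtt{BPDS}$), this yields an odd count whenever a bounded-degree {\sf pds} exists, with high probability. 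A standard self-reduction (fixing vertices one by one and invoking the decision procedure) extracts the actual set with only polynomial overhead.

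For the running time, count the iterations: for every split, there are $3^{|S| + |A \cap P|}$ calls on $\mathbb{T}_1'$ and $3^{|S| + |B \cap P|}$ calls on $\mathbb{T}_2'$, each polynomial. By the guarantees of Lemma \ref{pds_sep},
\[
|S| + |A \cap P| \;\le\; (1 + o(1))k - |B \cap P| \;\le\; (1 - 2^{-\overline{d}} + o(1))k,
\]
and symmetrically for $|S| + |B \cap P|$. This gives a total running time of $\mathcal{O}^\star(3^{(1 - 2^{-\overline{d}} + o(1))k})$. Since each restricted call is carried out in isolation using polynomial space, the overall space bound is polynomial as claimed.

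The main obstacle is the bookkeeping of parameter splits: unlike the three-parameter \textsc{RIAFD} counter, the \textsc{PDS} \emph{Cut \& Count} carries the additional $\boldsymbol{M}_1$ and $\boldsymbol{M}_2$ counters $C$ and $D$. I must ensure that $\boldsymbol{M}_1$ markers on vertices of $S$ and $\boldsymbol{M}_2$ markers on edges of $E[S,S]$, which correspond to forget-vertex and introduce-edge bags lying \emph{above} the join in $\mathbb{T}'$ (per the construction in Note \ref{note_tree_dec}), are attributed to exactly one of $\mathbb{T}_1'$, $\mathbb{T}_2'$ or to the post-join portion, so that no marker is double-counted. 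This is handled by subtracting the $f$-induced contribution of $S$ (and of the edges of $E[S,S]$ already introduced) from the target parameters $(A, B, C, D, W)$ before distributing the remainder between the two halves, exactly analogously to the offset manipulations in Algorithm \ref{alg:bafd1}.
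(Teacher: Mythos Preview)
Your proposal is correct and follows essentially the same approach as the paper. The only presentational difference is that you explicitly iterate over colorings of $S$, $A\cap P$, and $B\cap P$ and split the five accumulators $(A,B,C,D,W)$ by hand (closely mirroring Algorithm~\ref{alg:bafd1}), whereas the paper phrases the same computation as ``fix a coloring $s$ of $S$, run Claim~\ref{claim-pds} on $\mathbb{T}_1'$ and $\mathbb{T}_2'$ to get $A_y(\cdot,s)$ and $A_z(\cdot,s)$, then combine via the join-bag formula and process the remaining $S$-only portion of $\mathbb{T}'$''; your careful treatment of $\boldsymbol{M}_1$/$\boldsymbol{M}_2$ markers on $S$ and edges of $E[S,S]$ above the join is exactly what Note~\ref{note_tree_dec} sets up and what the paper absorbs silently into the join recurrence.
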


\begin{proof}
Note that we reorder the computation of algorithm \texttt{CutandCountPF} in a slightly different way on tree decomposition $\mathbb{T}'$ to achieve polynomial space. Follow the notations according to note \ref{note_tree_dec}. The way we reorder the computation of \texttt{CutandCountPF} on tree decomposition $\mathbb{T}'$ is as follows: For a fixed colouring s of $S$, we compute $A_y(a,b,c,d,w,s)$ and $A_z(a,b,c,d,w,s)$ in polynomial space according to Claim \ref{claim-pds}. Now the remaining tree decomposition has bags only consisting of vertices in $S$. Using $A_y(a,b,c,d,w,s)$ and $A_z(a,b,c,d,w,s)$ for some colouring s of $S$ we can compute $C_W^{A,B,C,D}$ for $\mathbb{T}'$ in polynomial space by Theorem \ref{pds-space}.\\
The algorithm is clearly correct since it uses \texttt{CutandCountPF} as a subroutine with reordered computation. By Lemma \ref{lem:pdsccmain}, the {\sf pds} $P$ of size at most $k$ is found using \texttt{CutandCountPF} with bounded average degree $\overline{d}$ with success probability at least $1/2$. The success probability can be easily boosted by $n^{\mathcal{O}(1)}$ runs of the algorithm. The width of the tree decomposition from the input according to Lemma \ref{pds_tw} is ${(1-2^{-\overline{d}}+o(1))k}$. Thus the time bound follows the time bound of the $\mathtt{CutandCountPF}$ algorithm.
\end{proof}



Now, we give the Iterative Compression routine which solves \textsc{BPDS}, as explained above.

\begin{lemma}\label{PFIC2}
There exists an algorithm $\mathtt{PFIC1}$ that solves \textsc{BPDS} in $O^{\star}(3^{(1-2^{-\overline{d}}+o(1))k})$ time and polynomial space with high probability.
\end{lemma}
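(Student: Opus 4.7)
The plan is to mirror the iterative compression procedure $\mathtt{RIAFD\_IC1}$ (Algorithm \ref{alg:afdic1}) used for proving Lemma \ref{lem:afd_IC1}, adapted to the \textsc{Pseudoforest Deletion} setting and using $\mathtt{BPDS}$ (Lemma \ref{BPDS}) as the inner routine. I would define an algorithm $\mathtt{PFIC1}(G,k,\overline{d})$ that first orders the vertices of $G$ in ascending order of degree as $(v_1,\dots,v_n)$, defines $V_i := \{v_1,\dots,v_i\}$, and then iteratively computes, for $i=1,2,\dots,n$, a {\sf pds} $P_i$ of $G[V_i]$ of size at most $k$ with $\deg(P_i) \le \overline{d}|P_i|$. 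In each iteration, given $P_{i-1}$ of $G[V_{i-1}]$ from the previous step, I would invoke Lemma \ref{pds_sep} on $(G[V_{i-1}], P_{i-1})$ to compute a separation $(A,B,S')$, augment it to $(A,B,S)$ with $S := S' \cup \{v_i\}$ (a valid separation of $G[V_i]$), and then call $\mathtt{BPDS}(G[V_i], P_{i-1} \cup \{v_i\}, k, A, B, S)$ to obtain $P_i$. If $\mathtt{BPDS}$ returns $\mathtt{Infeasible}$ at any step, the whole procedure returns $\mathtt{Infeasible}$.

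For correctness, suppose a true bounded-degree {\sf pds} $P$ of size $k$ with $\deg(P) \le \overline{d}k$ exists in $G$. Then $P \cap V_i$ is a {\sf pds} of $G[V_i]$ of size at most $k$. Because we ordered vertices by ascending degree, the sum of degrees of $P \cap V_i$ in $G[V_i]$ is at most $\deg(P)\cdot |P \cap V_i|/|P| \le \overline{d}|P\cap V_i|$ (more carefully, the lowest-degree prefix cannot have average degree exceeding the global average), so each inner call to $\mathtt{BPDS}$ is feasible with the bounded-degree constraint. By Lemma \ref{BPDS}, each call succeeds with high probability, so by a union bound over the $n$ iterations the whole algorithm succeeds with high probability; running $n^{\mathcal{O}(1)}$ trials can be used to boost if needed.

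For the time and space bounds, at step $i$ Lemma \ref{pds_sep} applied to $P_{i-1}$ (which satisfies $\deg(P_{i-1}) \le \overline{d}|P_{i-1}|$) runs in expected polynomial time, and the inner call to $\mathtt{BPDS}$ runs in $\mathcal{O}^\star(3^{(1-2^{-\overline{d}}+o(1))k})$ time using only polynomial space by Lemma \ref{BPDS} (noting that adding the single vertex $v_i$ to the separator only changes the width by $1$). Summing over the $n$ iterations preserves the overall bound $\mathcal{O}^\star(3^{(1-2^{-\overline{d}}+o(1))k})$, and the polynomial space bound carries through directly from the polynomial-space guarantee of $\mathtt{BPDS}$.

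The main obstacle I anticipate is verifying that the ascending-degree ordering truly guarantees the invariant $\deg(P_i) \le \overline{d}|P_i|$ at every prefix, rather than only on the final solution. This is the key structural argument that ensures feasibility of the bounded-degree version at every step of the iterative compression; the rest of the argument is an essentially routine transcription of the \textsc{RIAFD} iterative compression into the \textsc{Pseudoforest Deletion} setting, with $\ell$ absent and with the separator and tree-decomposition lemmas (Lemmas \ref{pds_sep} and \ref{pds_tw}) playing the role of Lemma \ref{lem:afd_smallsep_final}.
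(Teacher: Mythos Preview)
Your proposal is correct and follows exactly the approach the paper takes: the paper's proof simply says to construct $\mathtt{PFIC1}$ by replacing every occurrence of $\mathtt{BRIAFD1}$ with $\mathtt{BPDS}$ in $\mathtt{RIAFD\_IC1}$, using Lemma~\ref{pds_sep} for the separator, and appealing to the proof of Lemma~\ref{lem:afd_IC1} for correctness, time, and space. Your write-up in fact spells out the feasibility argument (that the ascending-degree ordering ensures $\deg_{G[V_i]}(P\cap V_i)\le \overline d\,|P\cap V_i|$) more explicitly than the paper does, which only states the analogous bound in the proof of Lemma~\ref{lem:afd_IC1}.
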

\begin{proof}
    \texttt{PFIC1} can be constructed by replacing every occurrence of \texttt{BRIAFD1} with \texttt{BPDS} and constructing the separator using Lemma \ref{pds_sep}. The proofs of correctness, space bound and success-probability are similar to Lemma \ref{lem:afd_IC1}.
\end{proof}

\subsubsection{Combining Sparse and Dense Cases}
Having described the \emph{Dense} and the \emph{Sparse} Cases, we now combine them to give the final randomized algorithm.

\begin{lemma}\label{PDS1}
Fix the parameter $\epsilon\in (0,1)$ and let $c_\epsilon:=max\left\{3-\epsilon,3^{1-2^{-\frac{4-2\epsilon}{1-\epsilon}}}\right\}$. If $c_\epsilon \ge 2$, there exists an algorithm $\mathtt{PDS1}$ that succeeds with probability at least $c_\epsilon^{-k}$. Moreover Algorithm $\mathtt{PDS1}$ has expected polynomial running time and requires polynomial space.
\end{lemma}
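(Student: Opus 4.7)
The plan is to design the algorithm $\mathtt{PDS1}$ by imitating the structure of Algorithm \hyperref[alg:afd1]{$\mathtt{RIAFD1}$} from Section \ref{sec:afd_improve_k}, specialized to the single-parameter setting where $\ell$ does not appear. Set $\overline{d}:=(4-2\epsilon)/(1-\epsilon)$. For each $0\le k' \le k$, the algorithm first exhaustively applies Reduction~1 (Definition~\ref{red:pseudo1}), producing a partial {\sf pds} $P'$ and a reduced instance $(G',k'')$. It then flips a biased coin with Heads probability $3^{-(1-2^{-\overline{d}})k''}$. On Heads, it invokes $\mathtt{PFIC1}$ from Lemma~\ref{PFIC2} on $(G',k'')$ to look for a bounded-degree {\sf pds}. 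On Tails, it applies Reduction~2 (Definition~\ref{red:pseudo2}), which picks a vertex $v$ proportional to $\mathrm{deg}(v)-2$, and then recurses on $(G'\setminus\{v\},k''-1)$, prepending $\{v\}$ to the returned set.

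To bound the expected running time, note that for each value of $k''$ the Heads branch is executed with probability $3^{-(1-2^{-\overline{d}})k''}$ and, by Lemma~\ref{PFIC2}, runs in time $\mathcal{O}^{\star}(3^{(1-2^{-\overline{d}}+o(1))k''})$. The product is $\mathcal{O}^{\star}(3^{o(k)})$, which is polynomial up to subexponential slack and summable over $k''$. All other work per iteration is polynomial, and the space bound follows directly from the polynomial space usage of $\mathtt{PFIC1}$ (Lemma~\ref{PFIC2}) plus the polynomial-time reductions.

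For the success probability I will use induction on $k$, following verbatim the template of Lemma~\ref{lem:afd_AFD1}. The base case $k=0$ is trivial since Reduction~1 resolves the instance deterministically. For the inductive step, fix a reduced instance $(G',k'')$ and consider two cases. If every {\sf pds} $P$ of $G'$ of size $k''$ satisfies $\mathrm{deg}(P)\le \overline{d} k''$ (the sparse case), then on Heads Lemma~\ref{PFIC2} correctly returns a {\sf pds}, and this happens with probability at least $3^{-(1-2^{-\overline{d}})k''}\ge c_\epsilon^{-k''}\ge c_\epsilon^{-k}$ up to the $1-1/n^{\mathcal{O}(1)}$ factor from the high-probability amplification inside $\mathtt{PFIC1}$. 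Otherwise there exists a {\sf pds} $P$ with $\mathrm{deg}(P)\ge \overline{d} k''$, so by Lemma~\ref{lem:pseudo_red2}, Reduction~2 samples a vertex of $P$ with probability at least $1/(3-\epsilon)$. The Tails branch is taken with probability $1-3^{-(1-2^{-\overline{d}})k''}\ge 1-c_\epsilon^{-k''}\ge 1-1/k''$ using the hypothesis $c_\epsilon\ge 2$. Combining with the inductive bound $c_\epsilon^{-(k''-1)}/(k''-1)$ on the recursive call gives overall success probability at least
\begin{equation*}
\left(1-\tfrac{1}{k''}\right)\cdot\tfrac{1}{3-\epsilon}\cdot\tfrac{c_\epsilon^{-(k''-1)}}{k''-1}
\;\ge\; \tfrac{k''-1}{k''}\cdot\tfrac{1}{c_\epsilon}\cdot\tfrac{c_\epsilon^{-(k''-1)}}{k''-1}
\;=\;\tfrac{c_\epsilon^{-k''}}{k''}\;\ge\;\tfrac{c_\epsilon^{-k}}{k},
\end{equation*}
which completes the induction (the $1/k$ factor is absorbed into the outer loop over $k'$).

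The main obstacle is the delicate probability bookkeeping: ensuring the coin bias, the Reduction~2 success bound, and the assumption $c_\epsilon\ge 2$ mesh so that both the Heads branch in the sparse case and the Tails-plus-recursion branch in the dense case independently meet the target probability $c_\epsilon^{-k}$. The assumption $c_\epsilon\ge 2$ is exactly what makes the inequality $1-c_\epsilon^{-k''}\ge 1-1/k''$ go through in the dense case, and the definition of $c_\epsilon$ as the maximum of the two quantities is precisely what balances the sparse and dense analyses so that neither side dominates.
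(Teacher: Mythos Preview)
Your proposal is correct and follows essentially the same approach as the paper: the paper's own proof simply says to take $\mathtt{RIAFD1}$, replace $\mathtt{RIAFD\_IC1}$ by $\mathtt{PFIC1}$ and swap in the PDS reduction rules, deferring the entire analysis to Lemma~\ref{lem:afd_AFD1}, which is exactly the argument you have written out in detail. One small remark: the paper notes (just after this lemma) that the outer loop over $k'$ is unnecessary for \textsc{Pseudoforest Deletion} since any {\sf pds} of size $<k$ can be padded to size exactly $k$, a simplification not available in {\sc RIAFD} due to the independence constraint; this does not affect the correctness of your version.
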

\begin{proof}
    In algorithm \texttt{RIAFD1}, replace every occurrence of \texttt{RIAFDIC1} with \texttt{PFIC1}. Also, replace the Reduction rules with the ones given for Pseudoforest Deletion. This modified algorithm is \texttt{PDS1}. The running time, space bound and success probability analysis are similar to the analysis in proof of Lemma \ref{lem:afd_AFD1}.
\end{proof}

Note that the outer loop on $k$ is not required here. If there exists a {\sf pds} of size at most $k$, we can add arbitrary vertices to get a {\sf pds} of size exactly $k$.

To optimize for $c_\epsilon$, we set $\epsilon\approx 0.155433$, giving $c_\epsilon\approx 2.8446$. Using Lemma \ref{lem:randomalgo} we can boost the success probability to be sufficiently high. Theorem \ref{thm:pseudo} thus follows from Lemma \ref{PDS1} and \ref{lem:randomalgo}.

 \section{Conclusion}
In this paper, we applied the technique of Li and Nederlof~\cite{li-soda20} to other problems around 
the {\sc Feedback Vertex Set} problem. The technique of Li and Nederlof is inherently 
randomized, and it uses the \emph{Cut \& Count} technique, which is also randomized.  
Designing matching deterministic algorithms for these problems, as well as for {\sc Feedback Vertex Set}, is a long standing open problem. However, there is a deterministic algorithm for {\sc Pseudoforest Deletion} running in time ${\cal O}^{\star}(3^k)$~\cite{BodlaenderOO18}. So obtaining a deterministic algorithm for {\sc Pseudoforest Deletion} running in time $\mathcal{O}^{\star}(c^k)$ for a constant $c<3$ is an interesting open question. Further, can we design an algorithm for {\sc Pseudoforest Deletion} running in time $\mathcal{O}^{\star}(2.7^k)$, by designing a different \emph{Cut \& Count} based algorithm for this problem? Finally, could we get a ${\cal O}^{\star}(c^k 2^{o(\ell)})$ algorithm for {\sc Almost Forest Deletion}, for a constant $c$ possibly less than $3$? 

\newpage
\bibliographystyle{plainurl}
\bibliography{reference,ref,references-cfl}
%

\end{document}